\crefname{equation}{}{}
\newtheorem{lemma}{Lemma}[section]
\newtheorem{theorem}[lemma]{Theorem}
\newtheorem{rem}[lemma]{Remark}
\newcommand{\re}{\begin{rem}\rm}
\newcommand{\mar}{\end{rem}}
\newcommand{\fo}{\begin{eqnarray*}}
\newcommand{\mel}{\end{eqnarray*}}
\newcommand{\A}{{\mathcal A}}
\newcommand{\qd}{\end{proof}\vspace{0.5ex}}
\newcommand{\opnorm}{\@ifstar\@opnorms\@opnorm}
\newcommand{\@opnorms}[1]{%
  \left|\mkern-1.5mu\left|\mkern-1.5mu\left|
   #1
  \right|\mkern-1.5mu\right|\mkern-1.5mu\right|
}
\newcommand{\@opnorm}[2][]{%
  \mathopen{#1|\mkern-1.5mu#1|\mkern-1.5mu#1|}
  #2
  \mathclose{#1|\mkern-1.5mu#1|\mkern-1.5mu#1|}
}
\newcommand{\norm}[1]{\Vert#1\Vert}
\newcommand{\transpose}{\top}
\newcommand{\vect}{\mathrm{vec}}
\newcommand{\R}{\mathbb{R}}
\newcommand{\E}{\operatorname{\mathbb{E}}}
\newcommand{\vct}[1]{\boldsymbol{#1}}
\newcommand{\mtx}[1]{\boldsymbol{#1}}
\newcommand{\<}{\langle}
\renewcommand{\>}{\rangle}
\DeclareMathOperator*{\minimize}{\text{minimize}}
\newcommand{\va}{\vct{a}}
\newcommand{\vb}{\vct{b}}
\newcommand{\ve}{\vct{e}}
\newcommand{\vg}{\vct{g}}
\newcommand{\vh}{\vct{h}}
\newcommand{\vu}{\vct{u}}
\newcommand{\vv}{\vct{v}}
\newcommand{\vx}{\vct{x}}
\newcommand{\vy}{\vct{y}}
\newcommand{\vlambda}{\vct{\lambda}}
\newcommand{\vphi}{\vct{\phi}}
\newcommand{\vzero}{\vct{0}}
\newcommand{\mA}{\mtx{A}}
\newcommand{\mB}{\mtx{B}}
\newcommand{\mD}{\mtx{D}}
\newcommand{\mE}{\mtx{E}}
\newcommand{\mF}{\mtx{F}}
\newcommand{\mG}{\mtx{G}}
\newcommand{\mH}{\mtx{H}}
\newcommand{\mM}{\mtx{M}}
\newcommand{\mP}{\mtx{P}}
\newcommand{\mQ}{\mtx{Q}}
\newcommand{\mR}{\mtx{R}}
\newcommand{\mT}{\mtx{T}}
\newcommand{\mU}{\mtx{U}}
\newcommand{\mV}{\mtx{V}}
\newcommand{\mX}{\mtx{X}}
\newcommand{\mY}{\mtx{Y}}
\newcommand{\mZ}{\mtx{Z}}
\newcommand{\mDelta}{\mtx{\Delta}}
\newcommand{\mLambda}{\mtx{\Lambda}}
\newcommand{\mSigma}{\mtx{\Sigma}}
\newcommand{\mUpsilon}{\mtx{\Upsilon}}
\newcommand{\mPhi}{\mtx{\Phi}}
\newcommand{\mPsi}{\mtx{\Psi}}
\newcommand{\mId}{{\bf I}}
\renewcommand{\hat}[1]{\widehat{#1}}
\renewcommand{\tilde}[1]{\widetilde{#1}}
\begin{document}
\doublespacing

\title{Phase Retrieval of Low-Rank Matrices by Anchored Regression}

\author{
%%%% First author details
{\sc Kiryung Lee}$^*$\\[2pt]
Department of Electrical and Computer Engineering, \\[2pt]
The Ohio State University, Columbus, OH 43210, USA\\
$^*${Corresponding author: lee.8763@osu.edu}\\[6pt]
%%%%%%% Second author details
{\sc Sohail Bahmani}\\[2pt]
School of Electrical and Computer Engineering, \\[2pt]
Georgia Institute of Technology, Atlanta, GA 30332, USA\\
{sohail.bahmani@ece.gatech.edu}\\[2pt]
%%%%%%% Third author details
{\sc Yonina C. Eldar}\\[2pt]
Department of Computer Science and Applied Mathematics, \\[2pt]
Weizmann Institute of Science, Rehovot 7610001, Israel\\
{yonina.eldar@weizmann.ac.il}\\[2pt]
%%%%%%% Fourth author details
{\sc Justin Romberg}\\[2pt]
School of Electrical and Computer Engineering, \\[2pt]
Georgia Institute of Technology, Atlanta, GA 30332, USA\\
{jrom@ece.gatech.edu}\\[2pt]
}
\maketitle

\clearpage

\begin{abstract}
  We study the low-rank phase retrieval problem, where we try to recover a $d_1\times d_2$ low-rank matrix from a series of phaseless linear measurements.  This is a fourth-order inverse problem, as we are trying to recover factors of matrix that have been put through a quadratic nonlinearity after being multiplied together.

  We propose a solution to this problem using the recently introduced technique of anchored regression.  This approach uses two different types of convex relaxations: we replace the quadratic equality constraints for the phaseless measurements by a search over a polytope, and enforce the rank constraint through nuclear norm regularization.  The result is a convex program that works in the space of $d_1 \times d_2$ matrices.

  We analyze two specific scenarios.  In the first, the target matrix is rank-$1$, and the observations are structured to correspond to a phaseless blind deconvolution.  In the second, the target matrix has general rank, and we observe the magnitudes of the inner products against a series of independent Gaussian random matrices.  In each of these problems, we show that the anchored regression returns an accurate estimate from a near-optimal number of measurements given that we have access to an anchor matrix of sufficient quality.  We also show how to create such an anchor in the phaseless blind deconvolution problem, again from an optimal number of measurements, and present a partial result in this direction for the general rank problem.
\end{abstract}

\clearpage

\section{Introduction}

We consider the problem of recovering a low-rank matrix $\mX_\sharp$ from phaseless linear measurements of the form
\begin{equation}
	\label{eq:gen_mdl}
	y_m = |\<\mPhi_m,\mX_\sharp\>|^2 + \xi_m,~~m=1,\ldots,M.
\end{equation}
We refer to this inverse problem as \emph{low-rank phase retrieval} (LRPR).  LRPR is a combination of two problems that have received a lot of attention over the past decade.  The phase retrieval problem, where the goal is to recover a vector $\vx\in\R^d$ from $M$ quadratic measurements of the form $|\<\vx,\vphi_m\>|^2$, is known to be solvable when the $\vphi_m$ are generic and $M\gtrsim d$ (e.g., see \cite{jaganathan2016phase} and references therein).  There are tractable algorithms for solving the equations that use convex relaxations based on semi-definite programming \cite{candes13ph,waldspurger15ph,candes15ph} and polytope constraints \cite{bahmani2017phase,goldstein17co}. There also exist fast iterative algorithms for nonconvex programming (e.g., \cite{netrapalli13ph,candes2015phase,chen2015solving,wang2017solving,tan2018phase,sun2018geometric}).  The problem of recovering a $d_1\times d_2$ matrix of rank $r$ from $M$ linear measurements of the form $\<\mPhi_m,\mX\>$ has also been thoroughly analyzed in the literature for generic $\mPhi_m$ \cite{recht10gu,candes14so}, $\mPhi_m$ that return samples of the matrix \cite{candes09ex,candes10po,keshavan10ma,recht11si}, and $\mPhi_m$ with structured randomness \cite{gross11re,ahmed15co}; a survey of these results can be found in \cite{davenport2016overview}.

Our contribution in this paper is to show that for certain choices of the $\mPhi_m$, we can recover $\mX_\sharp$ from phaseless measurements \eqref{eq:gen_mdl} from far fewer than $d_1d_2$ measurements by taking advantage of the low-rank structure of $\mX_\sharp$. 
Our recovery algorithm uses the recently developed idea of \emph{anchored regression} \cite{bahmani2018solving,bahmani2017phase}.  The common approaches to estimate $\mX_\sharp$ from the nonlinear observations \eqref{eq:gen_mdl} lead to nonconvex programs.  The anchored regression, however, enables estimation by convex programming as follows.  The first step is effectively relaxing the nonlinear equations \eqref{eq:gen_mdl} to convex feasibility constraints. The second step, is to use an {\em anchor matrix} $\mX_0$, which serves as an initial guess for the solution, to formulate a simple convex program that finds a matrix that is feasible in the relaxed constraints and is best aligned with $\mX_0$. When the measurements are noiseless ($\xi_m=0$), we solve
\begin{equation}
\label{eq:convex_prog_noiseless}
\begin{array}{ll}
\displaystyle \minimize_{\mX} & - \mathrm{Re}\,\langle \mX_0, \mX \rangle + \lambda \norm{\mX}_*  \\
\mathrm{subject~to} &
\displaystyle |\langle \mPhi_m, \mX \rangle|^2 \leq y_m,\quad m=1,\ldots,M.
\end{array}
\end{equation}
This is a convex program over the space of $d_1\times d_2$ matrices.  Geometrically, each constraint $|\langle \mPhi_m, \mX \rangle|^2 \leq y_m$ is a convex set that has the target $\mX_\sharp$ on its surface.  The program finds an extreme point of the intersection of these convex sets by minimizing the linear functional $- \mathrm{Re}\,\langle \mX_0, \mX \rangle$ regularized by the nuclear norm $\|\mX\|_*$ to account for the low-rank structure of the solution.  The success of this program in recovering the target (to within a global phase ambiguity) depends on the behavior of the constraints around $\mX_\sharp$ and having an anchor $\mX_0$ sufficiently correlated with $\mX_\sharp$.

When there is noise, we relax the constraints in \eqref{eq:convex_prog_noiseless} and solve
\begin{equation}
\label{eq:convex_prog}
\begin{array}{ll}
\displaystyle \minimize_{\mX} & - \mathrm{Re}\,\langle \mX_0, \mX \rangle + \lambda \norm{\mX}_* \\
\mathrm{subject~to} &
\displaystyle \frac{1}{M} \sum_{m=1}^M (|\langle \mPhi_m, \mX \rangle|^2 - y_m)_+
\leq\eta\,,
\end{array}
\end{equation}
where $(\cdot)_+$ denotes the positive part function. This yields a stable solution in the sense that if the conditions for noise-free recovery are met, and we choose $\eta$ larger than the positive part of the perturbations, that is,
\[
	\eta = \frac{1}{M} \sum_{m=1}^M (-\xi_m)_+ + \epsilon, \quad \text{for some} ~ \epsilon \geq 0,
\]
then the solution $\hat\mX$ to \eqref{eq:convex_prog} obeys $\|\hat\mX- e^{\mathfrak{j}\theta} \mX_\sharp\|_\mathrm{F} \lesssim \eta$ for some $\theta \in [0, 2\pi)$. Here $\epsilon$ denotes an error in estimating the average of the positive part of perturbations by $\eta$.

We analyze two scenarios in detail.  In the first scenario, the target matrix $\mX_\sharp$ is of rank $r$, and the measurement matrices $\mPhi_m$ have independent real-valued Gaussian entries,
\begin{equation}
	\label{eq:Phigauss}
	\vect(\mPhi_m)\sim\mathcal{N}(\vzero,\mId),\quad m=1,\ldots,M.
\end{equation}
Theorem~\ref{thm:main_iidG} below shows that if we start with an anchor matrix that is sufficiently close to $\mX_\sharp$, exact recovery occurs when $M\gtrsim r(d_1+d_2)\log(d_1+d_2)$.  Lemma~\ref{lemma:init_iidG} shows that the anchor matrix can be computed from the data by a variation of the spectral initialization when the number of measurements $M$ satisfies $M \gtrsim r^3 \kappa^4(d_1+d_2)\log(d_1+d_2)$, where $\kappa$ denotes the condition number of $\mX_\sharp$.  We also show that the recovery procedure is stable in presence of noise.

In our second scenario, the target matrix has rank one, $\mX_\sharp=\sigma\vu\vv^*$ with $\vu\in\mathbb{C}^{d_1}, \vv\in\mathbb{C}^{d_2}, \|\vu\|_2=\|\vv\|_2=1$, as do the measurement matrices, $\mPhi_m=\va_m\vb_m^*$.  As we discuss below, this scenario is a model for the blind deconvolution of two signals from magnitude measurements in the frequency domain.  Our analysis in Theorem~\ref{thm:main_rank1} below takes the $\va_m$ and $\vb_m$ to be complex-valued independent Gaussian random vectors,
\begin{equation}
	\label{eq:ab_random}
	\va_m\sim\mathcal{CN}(\vzero,\mId),\quad \vb_m\sim\mathcal{CN}(\vzero,\mId),\quad m=1,\ldots,M.
\end{equation}
Under this model, we show that anchored regression produces a stable estimate of $(\vu,\vv)$ when $M$ is within a logarithmic factor of $d_1+d_2$.  Lemma~\ref{lemma:init} gives a computationally efficient technique for constructing the anchor in a commensurate number of measurements.

\section{Application: Blind deconvolution from Fourier magnitude observations}
\label{sec:phaselessdeconv}

Low-rank phase retrieval arises in a variation of the blind deconvolution problems. We consider estimating two unknown signals from the Fourier magnitudes of the convolution.  While blind deconvolution is itself an ill-posed, nonlinear problem, the absence of phase information in the Fourier measurements makes it even more challenging.
The type of phaseless blind deconvolution problem we describe below arises in various applications in communications and imaging.  In optical communications, high spectral efficiency and robustness against adversarial channel conditions for multiple-input multiple-output (MIMO) channels can be achieved using orthogonal frequency division multiplexing (OFDM).  Calibrating these communication channels involves solving a blind deconvolution problem.  This problem has to be solved from phaseless observations, as practical direct detection receivers work with intensity-only measurements \cite{arik2016direct} to provide robustness against synchronization errors, which has been one of the key issues in the OFDM systems \cite{schmidl1997robust,bouziane2015blind}.

A similar calibration problem arises in Fourier ptychography \cite{eckert2016algorithmic}.  In this application, an image is computed from phaseless Fourier domain measurements.  If there is uncertainty in the point spread function of the optical system, recovering the image becomes a phaseless blind deconvolution problem.

Blind deconvolution that identifies unknown signals $\vx,\vh \in \mathbb{C}^M$ (up to reciprocal scaling) from their circular convolution is in general ill-posed, but can be solved with a priori information on  $\vx$ and $\vh$. The circular convolution of $\vx$ and $\vh$ can be equivalently expressed in the Fourier domain as the element-wise product, namely
\begin{equation}
    \label{eq:bd_meas}
\mF (\vx \circledast \vh) = \sqrt{M} \mF \vx \odot \mF \vh,
\end{equation}
where $\mF \in \mathbb{C}^{M \times M}$ is the unitary discrete Fourier matrix of size $M$.

We will  impose subspace priors on $\vx$ and $\vh$, modeling $\vx\in\mathbb{C}^{M}$ as being in the low-dimensional columnspace of $\mD \in \mathbb{C}^{M \times d_1}$, and $\vh$ as being in the columnspace of $\mE \in \mathbb{C}^{M \times d_2}$.  Then $\vx$ and $\vh$ are represented as
\begin{equation}
    \label{eq:bd_spmdl}
	\vx = \mD \vu \quad \text{and} \quad \vh = \mE \overline{\vv},
\end{equation}
for some $\vu \in \mathbb{C}^{d_1}$ and $\vv \in \mathbb{C}^{d_2}$.  Here, $\overline{\vv}$ denotes the entry-wise complex conjugate of $\vv$.  Let $\va_m$ denote the $m$th column of $\mD^* \mF^*$ and $\vb_m$ denote the $m$th column of $\mE^\transpose \mF^\transpose$ for $m=1,\dots,M$.  Then the Fourier measurement of the convolution at frequency $m$ (after an appropriate normalization) is given as $\va_m^* \vu \vv^* \vb_m$. Under this subspace model, it suffices to recover $\vu$ and $\vv$.

In particular applications, the subspace model for $\vh$ might be introduced as a linear approximation of parametric models via principal component analysis. This technique is used for source localization and channel estimation in underwater acoustics \cite{mantzel14ro,tian2017multichannel
}. Some analysis in the context of dimensionality reduction of manifolds is provided in \cite{mantzel2015compressed,dirksen2015dimensionality}.

In the scenario where only noisy Fourier magnitudes of the convolution is observed, the corresponding quadratic measurements are given in the form of
\[
y_m = |\va_m^* \vu \vv^* \vb_m|^2 + \xi_m, \quad m=1,\dots,M,
\]
where $\xi_1,\dots,\xi_M$ denote additive noise.  Through the lifting reformulation \cite{ahmed2014blind} that substitutes $\vu \vv^*$ by a rank-$1$ matrix $\mX_\sharp$, the recovery reduces to a LRPR that estimates the unknown rank-$1$ matrix $\mX_\sharp$ from its noisy quadratic measurements:
\begin{equation}
\label{eq:meas}
y_m = |\langle \va_m \vb_m^*, \mX_\sharp \rangle|^2 + \xi_m, \quad m=1,\dots,M.
\end{equation}
This is a particular instance of LRPR and generates the quadratic measurements with rank-$1$ matrices $\va_1 \vb_1^*, \dots, \va_M \vb_M^*$.

In other words, the recovery combines blind deconvolution and phase retrieval; hence, it suffers from the ambiguities in both problems.  Similar to the phase retrieval, the absence of the phases in the measurements makes the reconstruction a nonconvex problem, even after it has been lifted.  By themselves, both phase retrieval and blind deconvolution amount to solving a system of quadratic equations. However, the phaseless blind deconvolution problem \eqref{eq:meas} is a systems of fourth-order equations.  Below, we will show that this system can indeed be tractably solved under certain randomness assumptions on the considered subspaces.

\section{Related Work}

Recovery of a structured signal from nonlinear measurements has received a significant amount of attention in the last decade, particularly in terms of theoretical analysis of various optimization formulations. A prominent example is the phase retrieval problem, which recovers an unknown signal from quadratic measurements. Unique identification of the solution and performance guarantees of optimization algorithms in the case where the unknown signal is sparse has been recently studied in \cite{li2013sparse,lecue2015,chen2015exact,eldar2015sparse,cai2016optimal,bahmani2018solving,jaganathan2017sparse}. 

Another example, discussed in the previous section, is the blind deconvolution problem, which amounts to solving a system of bilinear equations.
Although many approaches for blind deconvolution and its variations have been proposed in the communications, signal processing, and computational imaging literature, there has been significant progress in recent years in identifying provable performance guarantees.
These results offer theoretical guarantees on the number of measurements $M$ in \eqref{eq:bd_meas} as a function of the subspace dimensions $d_1,d_2$ (number of columns of $\mD,\mE$ in \eqref{eq:bd_spmdl}) needed to recover $\vu,\vv$.  Results that exhibit near-optimal scaling of $M$ versus $d_1,d_2$ are known both for convex relaxations of the problem, and for iterative algorithms that minimize a nonconvex loss \cite{ahmed2014blind,li2018rapid,huang2018blind}.
These results have also been extended  to sparsity (in place of subspace) models where the recovery is performed through alternating minimization \cite{lee2017blind};  however, the near optimal result in this work makes some technical, and perhaps too restrictive, assumptions on the success of projection steps.

The blind deconvolution problem can be made easier if we have the freedom to obtain diversified observations. Specifically, the identification of unknown channel impulse responses excited by an unknown source has been studied extensively in the communications literature since the 1990s (e.g., \cite{xu1995least,moulines1995subspace}). These classical results assumed that the channel responses had finite length and provided algebraic performance guarantees. In recent years, its generalization to the blind gain and phase calibration problem has been analyzed and robust optimization algorithms were proposed along with performance guarantees \cite{ling2017self,wang2016blind,lee2016fast,lee2018spectral,li2018blind,cosse2017note,ahmed2018leveraging}. There also exists further generalization to the off-the-grid sparsity models \cite{chi2016guaranteed,yang2016super}.

The nonlinear recovery problem considered in this paper is motivated to study a version of blind deconvolution where the convolution measurements are observed through certain nonlinearities.  Bendory et al. \cite{bendory2018blind} studied a similar problem arising in blind ptychgraphy and identified a set of conditions under which a signal can be identified uniquely from the magnitudes of a short-time Fourier transform taken with an unknown window. In this paper we are more interested in the recovery by a practical convex program from Fourier magnitudes.  The lifting reformulation renders the reconstruction problem into phase retrieval of a low-rank matrix.

The problem of recovering a low-rank matrix from phaseless linear measurements can also be interpreted as a generalization of classical subspace learning (i.e\ principal components analysis).  This connection was made explicit in \cite{chen2015exact}, where the problem of estimating a covariance matrix from compressed, streaming data was considered.  In a subsequent work, \cite{vaswani2016low} considered the quadratic subspace learning problem in a more general setting.  A regularized gradient descent method was proposed to solve the LRPR problem, and they provided an analysis for the accuracy of the initialization step under certain randomness assumptions on the measurement matrices.

Unlike the aforementioned works \cite{chen2015exact,vaswani2016low}, we take a different approach to solving the LRPR problem that uses the recently introduced  \emph{anchored regression} \cite{bahmani2017phase,goldstein17co} technique for relaxing nonlinear measurements.
Unlike lifting techniques, this method recasts phase retrieval problem as a convex program without increasing the number of optimization variables.  Unlike techniques based on nonconvex optimization, its analysis relies only on geometry rather than the trajectory of a certain sequence of iterates, which significantly simplifies the derivations.  The anchored regression formulation also makes it straightforward to incorporate structural priors on the data through the introduction of convex regularizers \cite{bahmani2018solving}.  Importantly we present performance guarantees for stable recovery of low-rank matrices from its random quadratic measurements, which implies exact reconstruction in the noiseless case. Previously, it was only shown that the initialization by a truncated spectral method provides an accurate approximation  \cite{vaswani2016low}. After an early version of this paper \cite{7thICCHA}, another approach to the same problem was independently studied in \cite{ahmed2018blind}.  Unlike their work, our approach is not restricted to the case of rank-$1$ measurement matrices and more importantly like the anchored regression provides flexibility that allows the nonlinearity in the measurement model beyond the quadratic function.

While a general theory for solving equations with convex nonlinearities has been developed, of which \eqref{eq:gen_mdl} is an example, it still remains to compute the key estimates that depend on the structure of the problem (the low-rankness in our case).  Furthermore, it is crucial to design an appropriate initialization scheme that provides a valid anchor matrix.  We propose a unified approach to the initialization that takes advantage of the separability of the unknown matrix.

It would be of independent interest to see various estimates on functions of random matrices by the noncommutative Rosenthal inequality \cite{junge2013noncommutative}. All of the matrix Bernstein inequalities \cite{tropp2012user,koltchinskii2011nuclear} and noncommutative Rosenthal inequality \cite{junge2013noncommutative} provide tail estimates of a sum of independent random matrices. In applying the matrix Bernstein inequalities, one has to verify that all summands have bounded spectral norms (deterministically or almost surely) or compute their Orlicz norms. On the contrary, the noncommutative Rosenthal inequality \cite{junge2013noncommutative} first computes moment bounds and then provides a tail estimate by the Markov inequality. Particularly when random matrices are given by a set of Gaussian random variables, the spectral norm is not bounded almost surely and computing the Orlicz norm of the spectral norm is not trivial. Therefore, it is desirable to derive relevant tail estimates by using the noncommutative Rosenthal inequality. Additionally, the expectations of high-order tensor products of Gaussian random vectors in the appendix might be useful in the study of other applications sharing similar tensor structures.

%---------------------------
\section{Main Results}

We have four main results.  The first two, presented in Section~\ref{sec:samplecomplexity}, give {\em sample complexity} bounds that relate the accuracy of the estimate returned by \eqref{eq:convex_prog} to the number of measurements $M$ observed as in \eqref{eq:gen_mdl}.  In both cases where the $\mPhi_m$s are rank-$1$ as in \eqref{eq:ab_random} and when they have i.i.d. entries as in \eqref{eq:Phigauss}, we achieve a sample complexity that scales nearly optimally with the size of the target matrix $\mX_\sharp$ and its rank.  These results assume that we have an anchor matrix $\mX_0$ that is sufficiently correlated with $\mX_\sharp$.

Our next two main results, presented in Section~\ref{sec:spectralinit}, show how such an anchor matrix can be created from the measurements using a spectral initialization.  For the random rank-$1$ measurements, we are able to construct a sufficiently accurate anchor from a number of observations $M$ that is proportional to the degrees of freedom in the model of $\mX_\sharp$ up to a logarithmic factor.  In the case of Gaussian measurements, we have only a partial result in general, and show that a very rough anchor can be bootstrapped into a more accurate one.  In the case where $\mX_\sharp$ is positive semi-definite or is rank-$1$, however, the results are near-optimal.

%------------------------------------------------------------
\subsection{Sample complexity}
\label{sec:samplecomplexity}

We begin by presenting theorems that give guarantees on the accuracy of the solution to the convex program \eqref{eq:convex_prog} in relation to the number of measurements $M$.  In both of the theorems below, we will assume that we have an anchor matrix $\mX_0$ that is roughly aligned with the target $\mX_\sharp$; we defer the construction of this anchor to Section~\ref{sec:spectralinit}.

We start with the case where $\mX_\sharp$ is rank-$1$, and the measurements are formed by taking the outer product of two random vectors, $\mPhi_m = \va_m \vb_m^*$.  As discussed in Section~\ref{sec:phaselessdeconv} above, this scenario is motivated by problems that involve blind deconvolution from quadratic measurements.  Since these applications typically involve the Fourier transform, we formulate our results using complex-valued vectors and matrices.

\begin{theorem}
\label{thm:main_rank1}
Let $\mX_\sharp=\sigma_\sharp\vu_\sharp\vv_\sharp^*$ be a complex rank-$1$ matrix observed as in \eqref{eq:meas} with $\mPhi_m = \va_m \vb_m^*$ for $m=1\dots,M$, where $\va_1,\dots,\va_M$ and $\vb_1,\dots,\vb_M$ are independent complex Gaussian random vectors as in \eqref{eq:ab_random}.  Suppose that $\mX_0 = \vu_0 \vv_0^*$ with $\norm{\vu_0}_2 = \norm{\vv_0}_2 = 1$ satisfies
\begin{equation}
	\label{eq:cond_anchor_rank1}
	\inf_{\theta \in [0,2\pi)} \norm{\vu_0 \vv_0^* - e^{\mathfrak{i} \theta} \vu_\sharp \vv_\sharp^*}_\mathrm{F} \leq \delta
\end{equation}
for $\delta \leq 0.2$.  Then one can set the regularization parameter in \eqref{eq:convex_prog} such that there exist numerical constants $C_1, C_2, C_3$ and a constant $C_\delta$ that depends only on $\delta$, for which the following holds.\footnote{As shown in the proof of Theorem~\ref{thm:main_rank1}, given $\delta$, one can choose $\lambda$ explicitly as $0.9-\delta$. For specific methods of constructing the anchor matrix, an appropriate value of $\delta$ can be determined.} If
\begin{equation}
	\label{eq:sampl_comp_rank1}
	\frac{M}{\log^2 M} \geq C_\delta (d_1 + d_2),
\end{equation}
then the solution $\hat{\mX}$ to \eqref{eq:convex_prog} satisfies
\begin{equation}
	\label{eq:thm:esterr_rank1}
	\inf_{\theta \in [0,2\pi)} \norm{\hat{\mX} - e^{\mathfrak{i} \theta} \mX_\sharp}_\mathrm{F}
	\leq \frac{C_1}{\norm{\mX_\sharp}_\mathrm{F}} \Big( \frac{1}{M} \sum_{m=1}^M |\xi_m| + \epsilon \Big)
\end{equation}
with probability at least $1-e^{-C_3 M}$.  Furthermore, the left and right singular vectors $\hat{\vu}$ and $\hat{\vv}$ of $\hat{\mX}$ satisfy
\begin{equation}
	\label{eq:thm:esterr_factors_rank1}
	\sin\angle(\hat{\vu},\vu_\sharp) \vee \sin\angle(\hat{\vv},\vv_\sharp)
	\leq \frac{C_2}{\norm{\mX_\sharp}_\mathrm{F}^2} \Big( \frac{1}{M} \sum_{m=1}^M |\xi_m| + \epsilon \Big).
\end{equation}
\end{theorem}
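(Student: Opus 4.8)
The plan is to run the anchored-regression machinery of \cite{bahmani2018solving,bahmani2017phase} with a nuclear-norm regularizer, reducing the theorem to two problem-specific estimates. Write $\mX_\sharp = \sigma_\sharp\vu_\sharp\vv_\sharp^*$ with $\sigma_\sharp = \norm{\mX_\sharp}_\mathrm{F}$, let $T_\sharp = \{\vu_\sharp\vz^* + \vw\vv_\sharp^* : \vz,\vw\}$ be the rank-one tangent space with orthogonal projections $\mathcal P_{T_\sharp},\mathcal P_{T_\sharp^\perp}$, and let $\Theta_\sharp = \mathrm{span}_{\mathbb R}\{\mathfrak{i}\mX_\sharp\}$ be the real line carrying the residual phase ambiguity. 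Set $\lambda = 0.9-\delta$, let $\hat\mX$ solve \eqref{eq:convex_prog}, let $\theta^\star$ attain the infimum in \eqref{eq:thm:esterr_rank1}, and put $\mH = \hat\mX - e^{\mathfrak{i}\theta^\star}\mX_\sharp$. I will suppress the global phase throughout (it is matched to the anchor-aligning phase of \eqref{eq:cond_anchor_rank1} up to a harmless $O(\delta)$ discrepancy, handled as in \cite{bahmani2017phase}), so that first-order stationarity in $\theta$ gives $\mathrm{Re}\langle\mH,\mathfrak{i}\mX_\sharp\rangle = 0$, i.e.\ $\mathcal P_{\Theta_\sharp}\mH = 0$. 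Since $e^{\mathfrak{i}\theta}\mX_\sharp$ is feasible for \eqref{eq:convex_prog} with the prescribed $\eta$ for every $\theta$, comparing objective values and invoking the nuclear-norm subgradient inequality $\norm{\mX_\sharp + \mH}_* - \norm{\mX_\sharp}_* \ge \norm{\mathcal P_{T_\sharp^\perp}\mH}_* - \norm{\mathcal P_{T_\sharp}\mH}_*$ together with the anchor bound \eqref{eq:cond_anchor_rank1} converts optimality into a deterministic \emph{cone condition}
\[
\lambda\,\norm{\mathcal P_{T_\sharp^\perp}\mH}_* \ \le\ (1+\sqrt2\,\lambda)\,\norm{\mathcal P_{T_\sharp}\mH}_\mathrm{F} + \delta\,\norm{\mH}_\mathrm{F} ;
\]
since $\mathcal P_{T_\sharp}\mH$ has rank at most two, this gives $\norm{\mathcal P_{T_\sharp^\perp}\mH}_* \le c_\delta\norm{\mH}_\mathrm{F}$ for a finite constant $c_\delta$. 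This step is where $\delta \le 0.2$ and the choice $\lambda = 0.9-\delta \in (0.7,0.9)$ are used.

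Second, I bring in the measurements. Because $z\mapsto|z|^2$ is convex, feasibility of $\hat\mX$ gives, after discarding the nonnegative term $|\langle\mPhi_m,\mH\rangle|^2$,
\[
\frac{1}{M}\sum_{m=1}^M\big(\mathrm{Re}(\overline{\langle\mPhi_m,\mX_\sharp\rangle}\langle\mPhi_m,\mH\rangle)\big)_+ \ \le\ \frac{1}{M}\sum_{m=1}^M|\xi_m| + \epsilon .
\]
The crux is the matching probabilistic \emph{uniform lower bound}: with probability at least $1 - e^{-C_3 M}$,
\[
\frac{1}{M}\sum_{m=1}^M\big(\mathrm{Re}(\overline{\va_m^*\mX_\sharp\vb_m}\;\va_m^*\mD\vb_m)\big)_+\ \ge\ c\,\norm{\mX_\sharp}_\mathrm{F}\,\norm{\mathcal P_{\Theta_\sharp^\perp}\mD}_\mathrm{F}
\]
for every $\mD$ with $\norm{\mathcal P_{T_\sharp^\perp}\mD}_* \le c_\delta\norm{\mD}_\mathrm{F}$. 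Granting this, taking $\mD = \mH$ (so that $\mathcal P_{\Theta_\sharp^\perp}\mH = \mH$) and combining with the previous display yields $\norm{\mH}_\mathrm{F} \lesssim \norm{\mX_\sharp}_\mathrm{F}^{-1}\big(\tfrac{1}{M}\sum_m|\xi_m| + \epsilon\big)$, which is \eqref{eq:thm:esterr_rank1}. The projection onto $\Theta_\sharp^\perp$ on the right-hand side is essential: for $\mD = \mathfrak{i}\mX_\sharp$ the functional vanishes identically, reflecting the unavoidable global-phase ambiguity, and this direction is exactly the one excised by the choice of $\theta^\star$.

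The uniform lower bound is the main obstacle, and I would establish it by a small-ball (Mendelson-type) argument in two steps. First, a pointwise bound $\E\big(\mathrm{Re}(\overline{\va^*\mX_\sharp\vb}\,\va^*\mD\vb)\big)_+ \ge c\,\norm{\mX_\sharp}_\mathrm{F}\,\norm{\mathcal P_{\Theta_\sharp^\perp}\mD}_\mathrm{F}$, obtained from a Paley--Zygmund-type inequality once the relevant low-order moments are evaluated; for complex Gaussian $\va,\vb$ these reduce to contractions of the fourth-order tensors $\E[(\va\va^*)\otimes(\va\va^*)]$ and its $\vb$-analogue — exactly the Gaussian tensor-product identities highlighted in the introduction. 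Second, uniform concentration of the empirical average over the cone intersected with the unit sphere: the obstruction is that $\va_m^*\mX_\sharp\vb_m$ and $\va_m^*\mD\vb_m$ are products of Gaussians, hence only sub-exponential, so the summands are heavy-tailed and matrix Bernstein does not apply directly. I would truncate the products at level $\sim\log M$, bound the discarded mass via sub-exponential moment estimates (a direct computation, or via the noncommutative Rosenthal inequality \cite{junge2013noncommutative}), and apply symmetrization together with a chaining/Rademacher-complexity bound for the truncated, hence bounded, Lipschitz functional over the admissible set — whose metric entropy is of order $(d_1+d_2)\log(d_1+d_2)$ because the effective-rank bound $\norm{\mathcal P_{T_\sharp^\perp}\mD}_* \le c_\delta\norm{\mD}_\mathrm{F}$ lets it be covered by an $\varepsilon$-net of rank-$O(1)$-plus-bounded-nuclear-norm matrices in $\mathbb C^{d_1\times d_2}$. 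Balancing the truncation level against the deviation produces the two logarithmic factors, hence the requirement \eqref{eq:sampl_comp_rank1}, $M/\log^2 M \gtrsim C_\delta(d_1+d_2)$, with failure probability $e^{-C_3 M}$.

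Finally, \eqref{eq:thm:esterr_factors_rank1} follows from \eqref{eq:thm:esterr_rank1} by an elementary rank-one perturbation estimate: if a matrix lies within Frobenius distance $\tau\sigma_\sharp$ of $\sigma_\sharp\vu_\sharp\vv_\sharp^*$ after optimal phase alignment, then its leading left and right singular vectors $\hat\vu,\hat\vv$ satisfy $\sin\angle(\hat\vu,\vu_\sharp)\vee\sin\angle(\hat\vv,\vv_\sharp) \lesssim \tau$; substituting $\tau \lesssim \norm{\mX_\sharp}_\mathrm{F}^{-2}\big(\tfrac{1}{M}\sum_m|\xi_m| + \epsilon\big)$ — which is \eqref{eq:thm:esterr_rank1} divided by $\sigma_\sharp$ — gives the claimed bound \eqref{eq:thm:esterr_factors_rank1}.
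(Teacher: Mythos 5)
Your overall architecture matches the paper's: both reduce the theorem to a deterministic cone/ascent--direction set that the error $\hat{\mX}-\mX_\sharp$ must lie in (the paper's $\mathcal{A}_\delta$), a small-ball probability lower bound over that set, a Rademacher-complexity--type upper bound for the empirical process (computed via the noncommutative Rosenthal inequality after taming the sub-exponential Gaussian-chaos tails), and a Davis--Kahan perturbation step for the singular-vector bound. The paper packages the first three pieces as an adaptation of the anchored-regression theorem of Bahmani--Romberg (Theorem~\ref{thm:armtx}), whereas you re-derive them directly, but that is a presentational difference, not a mathematical one.

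The genuine gap is in your handling of the complex global-phase ambiguity. You set $\theta^\star=\argmin_\theta\|\hat\mX-e^{\mathfrak{i}\theta}\mX_\sharp\|_\mathrm{F}$, put $\mH=\hat\mX-e^{\mathfrak{i}\theta^\star}\mX_\sharp$, and claim first-order stationarity gives $\mathrm{Re}\langle\mH,\mathfrak{i}\mX_\sharp\rangle=0$, i.e.\ $\mathcal{P}_{\Theta_\sharp}\mH=0$. That is not what stationarity yields: differentiating in $\theta$ gives $\mathrm{Im}\langle\mH,\,e^{\mathfrak{i}\theta^\star}\mX_\sharp\rangle=0$, not $\mathrm{Im}\langle\mH,\mX_\sharp\rangle=0$; the excised direction is $\mathfrak{i}\,e^{\mathfrak{i}\theta^\star}\mX_\sharp$, which is a $\theta^\star$-dependent rotation inside the real two-plane $\mathrm{span}_\mathbb{R}\{\mX_\sharp,\mathfrak{i}\mX_\sharp\}$. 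Worse, the cone condition you want to derive by comparing objectives has to be stated \emph{relative to the anchor's phase}: the linear term $-\mathrm{Re}\langle\mX_0,\cdot\rangle$ breaks the phase invariance and, after using the equivariance of \eqref{eq:convex_prog} to align $\mX_0$ with $\mX_\sharp$ (so $\mathrm{Im}\langle\mX_0,\mX_\sharp\rangle=0$, $\mathrm{Re}\langle\mX_0,\mX_\sharp\rangle\ge0$), the comparison at $e^{\mathfrak{i}\theta^\star}\mX_\sharp$ leaves a term $\mathrm{Re}\langle\vu_\sharp\vv_\sharp^*,\mH\rangle-\lambda\,\mathrm{Re}\langle e^{\mathfrak{i}\theta^\star}\vu_\sharp\vv_\sharp^*,\mH\rangle$ that only collapses to the paper's $(1-\lambda)\mathrm{Re}\langle\vu_\sharp\vv_\sharp^*,\mH\rangle$ if $\theta^\star=0$. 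You assert this mismatch is ``a harmless $O(\delta)$ discrepancy,'' but $\theta^\star$ is determined by the unknown minimizer $\hat\mX$, and without the recovery guarantee in hand there is no a priori control of $|\theta^\star|$; the claim is circular as stated. The paper avoids this entirely: it never rotates to the optimal $\theta^\star$, but instead observes that the minimizer satisfies $\mathrm{Im}\langle\mX_0,\hat\mX\rangle=0$, and then uses the proximity of $\mX_0$ to $\mathcal{P}_{\mX_\sharp}\mX_0$ to show $\hat\mX-\mX_\sharp$ lies in the set $\mathcal{R}_\delta$ defined in \eqref{eq:R_delta}, which only \emph{bounds} $|\mathrm{Im}\langle\mX_\sharp,\mH\rangle|$ by $\delta/\sqrt{1-\delta^2}$ times $\|\mH-\mathcal{P}_{\mX_\sharp}\mH\|_\mathrm{F}$ rather than making it vanish (Lemma~\ref{lemma:imagpart}). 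The small-ball lower bound (Lemma~\ref{lemma:lbp_rank1}) is then proved over $\mathcal{A}_\delta\cap\mathcal{R}_\delta$, and the analysis explicitly uses this bound on the imaginary part; your sketch elides exactly the piece where that bound enters. To make your version rigorous you would either need to argue $\theta^\star$ is small (which requires the recovery result), or redo the small-ball/cone analysis over a set with only a \emph{bounded}, not vanishing, imaginary component — which is precisely the $\mathcal{R}_\delta$ device.

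One smaller remark: your displayed cone inequality has coefficient $(1+\sqrt2\lambda)$ in front of $\|\mathcal{P}_{T_\sharp}\mH\|_\mathrm{F}$, whereas the direct computation from \eqref{eq:A_delta2} with $r=1$ gives $(1-\lambda)$, and the sign/monotonicity in $\lambda$ matters since the choice $\lambda=0.9-\delta$ is tuned exactly so that the downstream Rademacher bound scales benignly in $\delta$. This is likely just a bookkeeping slip, but worth fixing when you fill in the details.
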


The sufficient number of measurements for stable recovery of $\mX_\sharp$ (and hence its factors $\vu_\sharp$ and $\vv_\sharp$) required by \eqref{eq:sampl_comp_rank1}, scales nearly optimally. That is, the sufficient number of samples is proportional to the degrees of freedom of the unknown rank-$1$ matrix, i.e., $d_1+d_2$. In Section~\ref{sec:spectralinit} below, we will see that we can also find $\vu_0,\vv_0$ that obey \eqref{eq:cond_anchor_rank1} from a comparable number of measurements.  Combining these results shows that we can recover a $d_1\times d_2$ rank-$1$ matrix from phaseless rank-$1$ measurements when $M$ equals to $d_1+d_2$ up to a logarithmic factor. 

Our second sample complexity result states a performance bound for \eqref{eq:convex_prog} when the measurements are unstructured Gaussian random matrices and the target is a $d_1\times d_2$ matrix of rank $r$.  This type of measurement model has served as a standard benchmark in the structured recovery literature, and indeed we do obtain a much tighter bound in this case if the target is well-conditioned.  To ease the derivation, we state the result for real-valued matrices, but it is straightforward to extend it to the complex-valued case at the cost of making the calculations slightly more involved.

\begin{theorem}
\label{thm:main_iidG}
Let $\mX_\sharp \in \mathbb{R}^{d_1\times d_2}$ be of rank $r$, $\mX_\sharp  = \mU_\sharp \mSigma_\sharp \mV_\sharp^\transpose$ denote the compact singular value decomposition of $\mX_\sharp$, and $\mPhi_1,\dots,\mPhi_M \in \mathbb{R}^{d_1 \times d_2}$ be Gaussian random matrices as in \eqref{eq:Phigauss}.  Suppose that we have an anchor matrix $\mX_0 = \mU_0 \mV_0^\transpose$, where $\mU_0^\transpose \mU_0 = \mV_0^\transpose \mV_0 = \mId_r$, that satisfies
\begin{equation}
	\label{eq:cond_anchor_rankr}
	\min\left(\norm{\mU_0 \mV_0^\transpose - \mU_\sharp \mV_\sharp^\transpose}_\mathrm{F},~
	\norm{\mU_0 \mV_0^\transpose + \mU_\sharp \mV_\sharp^\transpose}_\mathrm{F}\right)
	\leq \delta \norm{\mU_\sharp \mV_\sharp^\transpose}_\mathrm{F}
\end{equation}
for $\delta$ that obeys
\begin{equation}
\label{eq:cond_flat_rankr}
	\frac{\delta}{1-\lambda} \leq 0.45 \, (2.8 - \kappa),
\end{equation}
where $\kappa$ and $\lambda$ denote the condition number of $\mX_\sharp$ and the regularization parameter in \eqref{eq:convex_prog} respectively. Then there exist universal constants $C_1,C_2$ and a constant $C_{\delta}$ that only depends on $\delta$ for which the following holds. If
\begin{equation}
	\label{eq:sampl_comp_iidG}
	M \geq C_{\delta} r (d_1+d_2) \log(d_1+d_2),
\end{equation}
then the solution $\hat{\mX}$ to \eqref{eq:convex_prog} satisfies
\begin{equation*}
%	\label{eq:thm:esterr_iidG}
	\min\left(\norm{\hat{\mX} - \mX_\sharp}_\mathrm{F},~ \norm{\hat{\mX} + \mX_\sharp}_\mathrm{F}\right)	
	~\leq~
	\frac{C_1}{\norm{\mX_\sharp}_\mathrm{F}} \Big( \frac{1}{M} \sum_{m=1}^M |\xi_m| + \epsilon \Big),
\end{equation*}
with probability at least $1 - e^{-C_2 M}$.
\end{theorem}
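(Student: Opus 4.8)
The plan is to run the anchored-regression argument of \cite{bahmani2017phase,bahmani2018solving}, using the descent cone of the nuclear-norm-regularized objective in place of that of a bare linear functional; hypothesis \eqref{eq:cond_flat_rankr} will surface as exactly the inequality that keeps a certain small-ball constant positive. Write $T$ for the tangent space to the rank-$r$ matrices at $\mX_\sharp$, so that $P_{T^\perp}\mH=(\mId-\mU_\sharp\mU_\sharp^\top)\,\mH\,(\mId-\mV_\sharp\mV_\sharp^\top)$, and --- after possibly replacing $\mX_\sharp$ by $-\mX_\sharp$ so that the ``$-$'' branch of \eqref{eq:cond_anchor_rankr} is the active one --- set $\mH=\hat\mX-\mX_\sharp$. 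The target is $\norm{\mH}_\mathrm{F}\lesssim\nu/\norm{\mX_\sharp}_\mathrm{F}$ with $\nu:=\frac1M\sum_m|\xi_m|+\epsilon$, from which the stated bound (with the sign ambiguity absorbed in the ``$\min$'') follows.

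\textbf{Step 1 (deterministic reductions).} Since $|\langle\mPhi_m,\mX_\sharp\rangle|^2-y_m=-\xi_m$, the matrix $\mX_\sharp$ is feasible for \eqref{eq:convex_prog} with the prescribed $\eta$, so optimality of $\hat\mX$ together with the subgradient bound $\norm{\mX_\sharp+\mH}_*\ge\norm{\mX_\sharp}_*+\langle\mU_\sharp\mV_\sharp^\top,\mH\rangle+\norm{P_{T^\perp}\mH}_*$ yields the descent-cone inclusion $\lambda\norm{P_{T^\perp}\mH}_*+\lambda\langle\mU_\sharp\mV_\sharp^\top,\mH\rangle\le\langle\mX_0,\mH\rangle$. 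Combining this with $\langle\mX_0,\mH\rangle\le\langle\mU_\sharp\mV_\sharp^\top,\mH\rangle+\norm{\mX_0-\mU_\sharp\mV_\sharp^\top}_\mathrm{F}\norm{\mH}_\mathrm{F}$ and the anchor bound $\norm{\mX_0-\mU_\sharp\mV_\sharp^\top}_\mathrm{F}\le\delta\sqrt r$ gives: (i) a cone condition $\norm{P_{T^\perp}\mH}_*\lesssim\tfrac{(1-\lambda)+\delta}{\lambda}\sqrt r\,\norm{\mH}_\mathrm{F}$, so $\mH$ is ``effectively rank $O(r)$''; and (ii) $\langle\mU_\sharp\mV_\sharp^\top,\mH\rangle\ge-\tfrac{\delta}{1-\lambda}\sqrt r\,\norm{\mH}_\mathrm{F}$, which --- after writing $\mX_\sharp=\sigma_{\min}\mU_\sharp\mV_\sharp^\top+\mX'$ with $\norm{\mX'}=\sigma_{\max}-\sigma_{\min}$ and estimating the cross term $\langle\mX',\mH\rangle$ through the rank-$r$ structure of $\mX'$ --- promotes to $\langle\mX_\sharp,\mH\rangle\ge-c^\star\norm{\mX_\sharp}_\mathrm{F}\norm{\mH}_\mathrm{F}$ with $c^\star$ governed by $\tfrac{\delta}{1-\lambda}$ and $\kappa-1$; this is precisely the combination kept below a fixed numerical threshold by \eqref{eq:cond_flat_rankr}. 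Finally, from $|\langle\mPhi_m,\mX_\sharp+\mH\rangle|^2-y_m=\bigl(2\langle\mPhi_m,\mX_\sharp\rangle\langle\mPhi_m,\mH\rangle+\langle\mPhi_m,\mH\rangle^2\bigr)-\xi_m$, the inequality $(s+t)_+\le s_++t_+$, feasibility of $\hat\mX$, and $(2ab+b^2)_+\ge(2ab)_+$, one obtains $\frac1M\sum_m\bigl(2\langle\mPhi_m,\mX_\sharp\rangle\langle\mPhi_m,\mH\rangle\bigr)_+\le\nu$.

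\textbf{Step 2 (the probabilistic core).} The heart is a uniform small-ball bound: with probability at least $1-e^{-C_2M}$, provided $M\ge C_\delta\,r(d_1+d_2)\log(d_1+d_2)$, one has $\frac1M\sum_m\bigl(2\langle\mPhi_m,\mX_\sharp\rangle\langle\mPhi_m,\mH\rangle\bigr)_+\ge c_1\norm{\mX_\sharp}_\mathrm{F}\norm{\mH}_\mathrm{F}$ simultaneously over all $\mH$ in the descent cone (the estimand being homogeneous of degree one in $\mH$, it suffices to argue on $\norm{\mH}_\mathrm{F}=1$). In expectation this is the identity $\mathbb{E}\bigl(2\langle\mPhi,\mX_\sharp\rangle\langle\mPhi,\mH\rangle\bigr)_+=\mathbb{E}\bigl|\langle\mPhi,\mX_\sharp\rangle\langle\mPhi,\mH\rangle\bigr|+\langle\mX_\sharp,\mH\rangle$, whose right-hand side is $\gtrsim\norm{\mX_\sharp}_\mathrm{F}\norm{\mH}_\mathrm{F}$ by Gaussian anti-concentration \emph{once} fact (ii) guarantees that $\langle\mX_\sharp,\mH\rangle$ is not too negative --- again the role of \eqref{eq:cond_flat_rankr}. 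Making this uniform is the substantive part: pass to a finite-variance small-ball probability, control the resulting multiplier empirical process over the rank-$O(r)$ descent cone whose Gaussian width is $O(\sqrt{r(d_1+d_2)})$, and dominate the sub-exponential tails of the products $\langle\mPhi_m,\mX_\sharp\rangle\langle\mPhi_m,\mH\rangle$ by truncation together with a matrix moment inequality of noncommutative-Rosenthal type, the extra $\log(d_1+d_2)$ being the price of that truncation/union bound. Chaining Steps 1--2: $c_1\norm{\mX_\sharp}_\mathrm{F}\norm{\mH}_\mathrm{F}\le\frac1M\sum_m(\,\cdot\,)_+\le\nu$, hence $\norm{\mH}_\mathrm{F}\le\nu/(c_1\norm{\mX_\sharp}_\mathrm{F})$, which is the claim with $C_1=1/c_1$ (and yields exact recovery when $\nu=0$).

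I expect the main obstacle to be the uniformization in Step 2. The summand $\bigl(2\langle\mPhi,\mX_\sharp\rangle\langle\mPhi,\mH\rangle\bigr)_+$ is neither bounded nor sub-Gaussian, so the empirical-process bound cannot be black-boxed: it requires truncation, a careful high-moment computation for products of Gaussian linear forms (the noncommutative Rosenthal inequality and the Gaussian tensor-moment identities flagged in the introduction enter here), and a metric-entropy/Gaussian-width estimate for the descent cone of the \emph{regularized} objective rather than of a subspace. A secondary, bookkeeping-type difficulty is carrying the numerical constants through facts (i)--(ii) sharply enough that $c_1$ stays strictly positive; this is exactly what the explicit relation \eqref{eq:cond_flat_rankr} among $\delta$, $\lambda$, and $\kappa$ encodes, together with the freedom --- noted in the theorem --- to set $\lambda$ close to $1$ depending on $\delta$.
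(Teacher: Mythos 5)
Your proposal is correct and follows essentially the same route as the paper's proof: it unpacks the anchored-regression meta-theorem (Theorem~\ref{thm:armtx}, i.e.\ \cite[Theorem~2.1]{bahmani2018solving}) into its deterministic descent-cone reduction and small-ball core, derives the same cone facts (i)--(ii), uses the same decomposition of $\mX_\sharp$ around a multiple of $\mU_\sharp\mV_\sharp^\top$ to trade the anchor condition and $\kappa$ for a lower bound on $\langle\mX_\sharp,\mH\rangle$ (this is Lemma~\ref{lemma:lbp_iidG}), and controls the multiplier process via the noncommutative Rosenthal inequality with the $\log(d_1+d_2)$ from the high-moment choice (Lemma~\ref{lemma:ubrc_iidG}). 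The only cosmetic difference is that the paper centers the split at $\|\mX_\sharp\|_*/r\cdot\mU_\sharp\mV_\sharp^\top$ while you center at $\sigma_{\min}\mU_\sharp\mV_\sharp^\top$, which changes the numerical constant by a factor of two but not the argument.
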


Although to the authors' knowledge this is the first result of its kind in the literature, and the bound \eqref{eq:sampl_comp_iidG} scales in the rank $r$ and dimensions $d_1,d_2$ as well as one could hope, we point out a few ways this result could be improved.  First, the condition \eqref{eq:cond_flat_rankr} is very restrictive in the sense that it applies only to matrices with a small condition number.  Second, constructing $\mU_0 \mV_0^\transpose$ that obeys \eqref{eq:cond_anchor_rankr} is non-trivial; as we will see in Section~\ref{sec:spectralinit} below, we will only really be able to do this with confidence when $\mX_\sharp$ is positive semi-definite or is rank-$1$.

%------------------------------------------------------------
\subsection{Spectral initialization with partial trace}
\label{sec:spectralinit}

Our main results, presented as Theorems~\ref{thm:main_rank1}  and \ref{thm:main_iidG} above, give bounds on the number of equations $M$ that are needed to guarantee that the solution to \eqref{eq:convex_prog} has a certain accuracy.  This accuracy depends on the anchor matrix $\mX_0$ being sufficiently close to the unknown matrix $\mX_\sharp$.  In both cases, we use $\mX_0 = \mU_0\mV_0^*$ as an anchor, where $\mU_0 \mSigma_0 \mV_0^*$ is the compact singular value decomposition (SVD) of an approximation of $\mX_\sharp$;  we are after $\mU_0,\mV_0$, each with orthonormal columns, such that for some $\delta>0$ and a unit modulus $z$ we have
\begin{equation}
\label{eq:cond_anchor}
	\|\mU_0\mV_0^* - z\mU_\sharp\mV_\sharp^*\|_\mathrm{F} \leq\delta \|\mU_\sharp\mV_\sharp^*\|_\mathrm{F}.
\end{equation}
In each of the main theorems below, the bounds on $M$ scale like $\delta^{-2}$, and we will achieve the tightest results when we can take $\delta$ as a constant independent of the matrix dimensions and rank.
In this section, we describe a {\em data-driven} technique for constructing such an anchor matrix.

To understand the challenges in creating the anchor, let us first recall the now well-known spectral initialization for standard phase retrieval for vectors ($d_2=1$ in the formulation above).  In this case, we use the observations $y_m$ to form the $d_1\times d_1$ matrix
\begin{equation}
	\label{eq:vpranchor}
	\hat{\mR} = \frac{1}{M}\sum_{m=1}^M y_m\mPhi_m\mPhi_m^*,
\end{equation}
and then use the leading eigenvector of $\hat{\mR}$ as the anchor matrix $\mX_0$.  The idea is that when $y_m=|\<\mX_0,\mPhi_m\>|^2$ and the $\mPhi_m$ are random and drawn independent of one another, the expectation of $\E[y_m\mPhi_m\mPhi_m^*]$ has a leading eigenvector that is exactly $\mX_0$, and for $M$ large enough, the sum in \eqref{eq:vpranchor} provides a good approximation to this expectation.  In \cite{candes2015phase}, it is shown that \eqref{eq:cond_anchor_rank1} holds for constant $\delta$ when $M\gtrsim d_1\log d_1$.

We might consider using the same initialization when $\mX_\sharp$ and the $\mPhi_m$ are matrices.  Using a vectorized version of the above, we can form
\[
	\hat\mR = \frac{1}{M}\sum_{m=1}^M y_m \vect(\mPhi_m)\vect(\mPhi_m)^*,
\]
compute the leading eigenvector, then reshape into a $d_1\times d_2$ matrix.  We are now guaranteed a good anchor when $M\gtrsim d_1d_2\log(d_1d_2)$.  The problem, though, is that this bound is independent of the rank of $\mX_\sharp$; we are interested in recovery results that scale as closely as possible to the intrinsic number of degrees of freedom $r(d_1+d_2)$ in our matrix model.  Simply finding the largest eigenvector of $\hat\mR$ and then re-arranging into a $d_1\times d_2$ matrix will not, by itself, result in a matrix that is rank $r$, and there is no known algorithm with provable performance guarantees for finding a rank-constrained matrix that is maximally aligned with the columnspace of $\hat\mR$ (this is a variation on the ``Sparse PCA'' problem).

Our approach for estimating the anchor matrix will be to estimate the row and columnspaces of $\mX_\sharp$ individually.  We will find a $d_1\times r$ matrix $\mU_0$ whose columns are orthonormal and approximately span the columnspace, a $d_2\times r$ matrix $\mV_0$ whose columns are orthonormal and approximately span the rowspace, and then take
\[
	\mX_0 = \mU_0\mV_0^*.
\]
For the columnspace estimate $\mU_0$, we choose $d_2\times q$ {\em compression matrices} $\mPsi_m$ and form
\begin{equation}
	\label{eq:def_Upsilon}
	\mUpsilon = \frac{1}{M} \sum_{m=1}^M y_m \mPhi_m \mPsi_m \mPsi_m^* \mPhi_m^*,
\end{equation}
then take the $r$ leading eigenvectors of $\mUpsilon$ as $\mU_0$.  Similarly for the rowspace, we choose $d_1\times q$ $\mPsi'_m$, form
\begin{equation}
	\label{eq:def_Upsilon_prime}
	\mUpsilon' =
	\frac{1}{M} \sum_{m=1}^M y_m \mPhi_m^* \mPsi_m' \mPsi_m'^* \mPhi_m,
\end{equation}
and take the $r$ leading eigenvectors as $\mV_0$.

With the measurement matrix $\mPhi_m$ random, we want to choose the compression matrices $\mPsi_m$ in \eqref{eq:def_Upsilon} to meet two criteria:
\begin{enumerate}
	\item The expectation $\E[\mUpsilon]$ has leading eigenvectors that span the same $r$-dimensional space as the eigenvectors of $\mX_\sharp\mX_\sharp^*$.
	\item The spectral gap between the $r$th and $(r+1)$th eigenvalues of  $\E[\mUpsilon]$ is large enough so that  it upper bounds the perturbation error $\|\mUpsilon-\E\mUpsilon\|$ for relatively small $M$.  This allows us to use the classical Davis-Kahan theorem to show that the leading eigenvectors of $\mUpsilon$ are approximately aligned with the leading eigenvectors of $\E[\mUpsilon]$.
\end{enumerate}
Similar statements hold for the $\mPsi'_m$ in \eqref{eq:def_Upsilon_prime}.

For our blind deconvolution from phaseless measurements application, where $\mX_\sharp=\sigma\vu\vv^*$ and $\mPhi_m = \va_m\vb_m^*$, there is a clear way to meet these criteria.  If we take
\[
	\mPsi_m = \frac{\vb_m}{\norm{\vb_m}_2^2}
	\quad \text{and} \quad
	\mPsi_m' = \frac{\va_m}{\norm{\va_m}_2^2},
	\quad m=1,\dots,M,
\]
then
\begin{align}
	\label{eq:mUpsilon_rank1}
	\mUpsilon &= \frac{1}{M}\sum_{m=1}^My_m\va_m\va_m^*
	= \frac{\sigma^2}{M}\sum_{m=1}^M|\va_m^*\vu|^2|\vv^*\vb_m|^2\va_m\va_m^*+ \xi_m\va_m\va_m^*, \\
	\label{eq:mUpsilon'}
	\mUpsilon' &= \frac{1}{M}\sum_{m=1}^My_m\vb_m\vb_m^*
	= \frac{\sigma^2}{M}\sum_{m=1}^M|\va_m^*\vu|^2|\vv^*\vb_m|^2\vb_m\vb_m^*+ \xi_m\vb_m\vb_m^*.
\end{align}
For independent $\va_m,\vb_m$ that follow \eqref{eq:ab_random}, a simple calculation yields
\[
	\E\mUpsilon = \sigma^2\vu\vu^* + \left(\sigma^2 + \frac{1}{M}\sum_{m=1}^M\xi_m\right)\mId,
	\quad
	\E\mUpsilon' = \sigma^2\vv\vv^* + \left(\sigma^2 + \frac{1}{M}\sum_{m=1}^M\xi_m\right)\mId.
\]
The leading eigenvector for $\mUpsilon$ is the left singular vector $\vu$ for $\mX_\sharp$, the leading eigenvector of $\mUpsilon'$ is the right singular vector $\vv$, and the spectral gap in both cases is $\sigma^2$.  That $\mUpsilon-\E\mUpsilon$ and $\mUpsilon'-\E\mUpsilon'$ are small enough so that their leading eigenvectors are close to $\vu$ and $\vv$ when $M$ is withing a logarithmic factor of $(d_1+d_2)$ is essentially the content of the following lemma.

\begin{lemma}
	\label{lemma:init}
	Let $\mPhi_m=\va_m\vb_m^*$  be as in \eqref{eq:ab_random}.  Let $\vu_0 \in \mathbb{C}^{d_1}$ (resp. $\vv_0 \in \mathbb{C}^{d_2}$) be the leading eigenvector of $\mUpsilon$ in \eqref{eq:mUpsilon_rank1} (resp. $\mUpsilon'$ in \eqref{eq:mUpsilon'}) with measurements $y_m$ constructed as in \eqref{eq:gen_mdl}.  Let $\vu_\sharp$ and $\vv_\sharp$ denote the left and right singular vectors of the rank-$1$ matrix $\mX_\sharp$.  Let $\delta \in (0,1)$ and $\alpha \in \mathbb{N}$.  There exist numerical constants $C_1,C_2$ that only depend on $\alpha$, for which the following holds.  If
\begin{equation}
	\label{eq:sampl_comp_init}
	\frac{M}{\log^3 M} \geq C_1 \delta^{-2} (d_1 + d_2)
\end{equation}
and
\begin{equation}
	\label{eq:snr_cond_init}
	\max_{1\leq m\leq M} |\xi_m| \leq C_2 \norm{\mX_\sharp}^2 \log M,
\end{equation}
then \eqref{eq:cond_anchor_rank1} holds with probability at least $1 - M^{-\alpha}$.
\end{lemma}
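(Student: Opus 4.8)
The plan is to invoke the Davis--Kahan $\sin\Theta$ theorem to control the angles between the leading eigenvectors $\vu_0,\vv_0$ of $\mUpsilon,\mUpsilon'$ and the singular vectors $\vu_\sharp,\vv_\sharp$, and then to translate those bounds into the rank-one estimate \eqref{eq:cond_anchor_rank1}. As recorded just before the lemma, $\E\mUpsilon = \sigma_\sharp^2\vu_\sharp\vu_\sharp^* + c\,\mId$ with $c = \sigma_\sharp^2 + \tfrac1M\sum_m\xi_m$; the additive multiple of the identity shifts the whole spectrum without moving eigenvectors, so $\vu_\sharp$ is the leading eigenvector of the Hermitian matrix $\E\mUpsilon$ with spectral gap exactly $\sigma_\sharp^2 = \norm{\mX_\sharp}^2$. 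Hence, on the event $\bigl\{\norm{\mUpsilon - \E\mUpsilon}\le\tfrac14\norm{\mX_\sharp}^2\bigr\}$, Davis--Kahan gives $\sin\angle(\vu_0,\vu_\sharp)\le 2\norm{\mUpsilon-\E\mUpsilon}/\norm{\mX_\sharp}^2$, and identically $\sin\angle(\vv_0,\vv_\sharp)\le 2\norm{\mUpsilon'-\E\mUpsilon'}/\norm{\mX_\sharp}^2$ (with $d_1$ replaced by $d_2$ throughout the corresponding estimate). It therefore suffices to prove that $\norm{\mUpsilon-\E\mUpsilon}\le c_0\delta\norm{\mX_\sharp}^2$ with probability at least $1 - M^{-\alpha}/2$ for a small numerical constant $c_0$ to be fixed at the end, since the matching bound for $\mUpsilon'$ follows by the same argument and the two events are combined by a union bound.

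To estimate $\norm{\mUpsilon - \E\mUpsilon}$, I will write it as $\mS+\mN$ with the ``signal'' part $\mS = \tfrac{\sigma_\sharp^2}{M}\sum_m Z_m$, where $Z_m := |\va_m^*\vu_\sharp|^2|\vv_\sharp^*\vb_m|^2\,\va_m\va_m^* - \E[\,\cdot\,]$, and the ``noise'' part $\mN = \tfrac1M\sum_m\xi_m(\va_m\va_m^* - \mId)$. Both are sums of independent, mean-zero, self-adjoint random matrices, but the summands are heavy-tailed: up to lower-order terms $\norm{Z_m}$ is a product of two independent unit exponential variables and an independent $\chi^2_{2(d_1-1)}$ variable, so no deterministic uniform bound is available and matrix Bernstein does not apply directly. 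I will instead bound the $p$-th Schatten-norm moment of $\sum_m Z_m$ (and of $\sum_m\xi_m(\va_m\va_m^*-\mId)$) by the noncommutative Rosenthal inequality \cite{junge2013noncommutative}, which for self-adjoint summands yields, up to absolute constants,
\begin{equation*}
\Bigl(\E\,\norm{\sum_m Z_m}_{S_p}^p\Bigr)^{1/p} \;\lesssim\; \sqrt{p}\,\norm{\Bigl(\sum_m\E Z_m^2\Bigr)^{1/2}}_{S_p} \;+\; p\,\Bigl(\sum_m\E\,\norm{Z_m}_{S_p}^p\Bigr)^{1/p},
\end{equation*}
after which one passes from $S_p$ to the operator norm at the cost of a harmless factor $d_1^{1/p}=O(1)$ once $p\gtrsim\log d_1$. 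Choosing $p\asymp\alpha\log M$ and applying Markov's inequality converts this into a tail bound with failure probability $M^{-\alpha}$, which is why the constants in the lemma are allowed to depend on $\alpha$.

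It remains to estimate the two Rosenthal terms. For the variance term, using the independence of $\va_m$ and $\vb_m$ together with the Gaussian tensor-moment identities collected in the appendix (in particular $\E[\norm{\vg}^2\vg\vg^*]=(n+1)\mId_n$ for an $n$-dimensional complex Gaussian $\vg$), a direct computation shows that $\E Z_m^2$ is a combination of $\vu_\sharp\vu_\sharp^*$ and $\mId$ with coefficients $\Theta(d_1)$, hence $\norm{\sum_m\E Z_m^2}\asymp Md_1$ and this term contributes $\sigma_\sharp^2\sqrt{p\,d_1/M}$ to $(\E\norm{\mS}^p)^{1/p}$. For the maximal term, $\norm{Z_m}\lesssim |\va_m^*\vu_\sharp|^2|\vv_\sharp^*\vb_m|^2\norm{\va_m}^2 + O(1)$, and the $p$-th root of the $p$-th moment of this three-fold product is of order $p\cdot p\cdot(d_1+p)$, so (using $M^{1/p}=O(1)$) this term contributes $\sigma_\sharp^2\,p^3 d_1/M$ to $(\E\norm{\mS}^p)^{1/p}$. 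Altogether $\norm{\mX_\sharp}^{-2}(\E\norm{\mS}^p)^{1/p}\lesssim\sqrt{p\,d_1/M}+p^3 d_1/M$; with $p\asymp\alpha\log M$ this is at most $c_0\delta$ as soon as $M\gtrsim\delta^{-2}d_1\log^3 M$ (all $\alpha$- and polylogarithmic factors absorbed into $C_1$), which is precisely \eqref{eq:sampl_comp_init}. The noise term $\mN$ is handled the same way: $\norm{\sum_m\E[\xi_m^2(\va_m\va_m^*-\mId)^2]}\lesssim d_1(\max_m\xi_m^2)M$ and its maximal term is $\lesssim(\max_m|\xi_m|)(d_1+p)$, so under the signal-to-noise condition \eqref{eq:snr_cond_init} the contribution of $\mN$ is again $\lesssim c_0\delta\norm{\mX_\sharp}^2$ whenever \eqref{eq:sampl_comp_init} holds. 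Markov's inequality then yields $\norm{\mUpsilon-\E\mUpsilon}\le 2c_0\delta\norm{\mX_\sharp}^2$ outside an event of probability at most $M^{-\alpha}/4$.

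Combining with the Davis--Kahan step gives $\sin\angle(\vu_0,\vu_\sharp)\vee\sin\angle(\vv_0,\vv_\sharp)\lesssim\delta$. To finish, recall that for unit vectors $\min_{|z|=1}\norm{\vu_0 - z\vu_\sharp}_2 \le \sqrt2\,\sin\angle(\vu_0,\vu_\sharp)$, and, with $z_u,z_v$ the corresponding minimizers, write
\[
\vu_0\vv_0^* - z_u\bar{z}_v\,\vu_\sharp\vv_\sharp^* = (\vu_0 - z_u\vu_\sharp)\vv_0^* + z_u\vu_\sharp\,(\vv_0 - z_v\vv_\sharp)^*,
\]
so that $\inf_{\theta}\norm{\vu_0\vv_0^* - e^{\mathfrak{i}\theta}\vu_\sharp\vv_\sharp^*}_\mathrm{F}\le\norm{\vu_0-z_u\vu_\sharp}_2 + \norm{\vv_0-z_v\vv_\sharp}_2\lesssim\delta$ (using $\norm{\vv_0}_2=\norm{\vu_\sharp}_2=|z_u|=1$); choosing the numerical constant $c_0$ small enough makes the right-hand side at most $\delta$, which together with $\norm{\vu_\sharp\vv_\sharp^*}_\mathrm{F}=1$ is exactly \eqref{eq:cond_anchor_rank1}. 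The delicate step — and the one I expect to be the main obstacle — is the moment analysis of the heavy-tailed summands $Z_m$: evaluating $\E Z_m^2$ requires the higher-order Gaussian expectation $\E[|\va^*\vu_\sharp|^4\norm{\va}^2\va\va^*]$, and tracking the maximal term carefully is what produces the precise $\log^3 M$ factor in \eqref{eq:sampl_comp_init}, hence a sample complexity proportional to $d_1+d_2$ rather than $d_1 d_2$.
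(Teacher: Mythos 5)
Your proposal is correct and follows essentially the same route as the paper's proof: reduce \eqref{eq:cond_anchor_rank1} to a bound on $\sin\angle(\vu_0,\vu_\sharp)\vee\sin\angle(\vv_0,\vv_\sharp)$, apply Davis--Kahan with $\mA=\E\mUpsilon$ (whose spectral gap is $\sigma_\sharp^2$ since the identity shift is spectrum-preserving), split $\mUpsilon-\E\mUpsilon$ into the signal part $\tfrac{1}{M}\sum_m(\mZ_m-\E\mZ_m)$ and the noise part $\tfrac{1}{M}\sum_m\xi_m(\va_m\va_m^*-\mId)$, control both by the noncommutative Rosenthal inequality using the variance computation $\E\mZ_m^*\mZ_m=\Theta(d_1)$ and the heavy-tailed moment bound $(\E\|\mZ_m\|^p)^{1/p}\lesssim p^2 d_1+p^3$, and then take $p\asymp\alpha\log M$ in Markov's inequality. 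The only cosmetic differences are that the paper computes the exact identity $\inf_\theta\|\vu_0\vv_0^*-e^{\mathfrak{i}\theta}\vu_\sharp\vv_\sharp^*\|_\mathrm{F}^2=2-2\cos\phi\cos\psi$ rather than your triangle-inequality bound, and the paper routes the noise term through its standalone Lemma~\ref{lemma:wgaussouterproduct} rather than re-deriving it inline; neither affects the argument.
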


\begin{rem}
\label{rem:snr}
The inequality \eqref{eq:snr_cond_init} requires that signal-to-noise-ratio is larger than the given threshold.  The proof of Lemma~\ref{lemma:init} presents a stronger result that holds by \eqref{eq:sampl_comp_init} and 
\begin{equation}
	\label{eq:snr_cond_init2}
	\frac{M}{\log M} \geq C_2 \alpha \left( \frac{\max_{1\leq m\leq M} |\xi_m|}{\norm{\mX_\sharp}^2} \vee \delta^{-1} \left(\frac{\max_{1\leq m\leq M} |\xi_m|}{\norm{\mX_\sharp}^2}\right)^2\right)  \delta^{-1} (d_1+d_2).
\end{equation}
Indeed, \eqref{eq:snr_cond_init} together with \eqref{eq:sampl_comp_init} implies \eqref{eq:snr_cond_init2}.  Even if \eqref{eq:snr_cond_init} is violated, \eqref{eq:cond_anchor_rank1} still holds with high probability whenever $M$ is large enough to satisfy \eqref{eq:snr_cond_init2} that naturally adapts to the signal-to-noise-ratio. To achieve the order of the logarithmic term in \eqref{eq:sampl_comp_init}, it is necessary to satisfy $M \lesssim e^{d_1+d_2}$.  Since this upper bound is rather trivial compared to \eqref{eq:sampl_comp_init}, we omit the condition in the statement of Lemma~\ref{lemma:init}.
\end{rem}

Lemma~\ref{lemma:init} along with Theorem~\ref{thm:main_rank1} give us a clean solution to the phaseless blind deconvolution problem.  For generic $\va_m,\vb_m$, the system
\[
	y_m = |\<\vu,\va_m\>\<\vb_m,\vv\>|^2 + \mathrm{noise},\quad m=1,\ldots,M,
\]
can be (stably) solved for $\vu,\vv$ when $M$ is within a logarithmic factor of $d_1+d_2$, the total number of unknowns.

For phaseless measurements of a $d_1\times d_2$ matrix of rank $r$, the story is unfortunately not as clean, even when the $\mPhi_m$ in \eqref{eq:gen_mdl} are i.i.d. Gaussian.  The following lemma gives us a partial result on our ability to create a data-driven anchor.  It shows that given an estimate of the rowspace, this estimate can be leveraged into an accurate estimate of the columnspace.

\begin{lemma}
	\label{lemma:init_iidG}
	Let $\mX_\sharp$ and $\mPhi_m$s be as in Theorem~\ref{thm:main_iidG}.  Let $\hat{\mV} \in \mathbb{R}^{d_2 \times r}$ satisfy $\hat{\mV}^\transpose \hat{\mV} = \mId_r$. Suppose that $\hat{\mV}$ is given a priori and provides an estimate of the rowspace of $\mX_\sharp$ so that
	\begin{equation}
		\label{eq:sin_estV}
		\norm{(\mId_{d_2} - \hat{\mV} \hat{\mV}^\transpose) \mV_\sharp \mV_\sharp^\transpose} \leq \delta_{\mathrm{in}}
	\end{equation}
	for some $\delta_{\mathrm{in}} < 1$.  Take $\mUpsilon$ as in \eqref{eq:def_Upsilon} with $\mPsi_m = \hat{\mV}$, and let the columns of $\mU_0$ be the eigenvectors of $\mUpsilon$ corresponding to the $r$-largest eigenvalues.  Fix $\delta_{\mathrm{out}} \in (0,1)$ and $\alpha \in \mathbb{N}$.  Then there exist numerical constants $C_1,C_2$ that only depend on $\alpha$, for which the following holds.  If
	\begin{equation}
		\label{eq:sampl_comp_init_iidG}
	\frac{M}{\log^3 M} \geq \frac{C_1 \alpha^3 \kappa^4 r^3 d_1}{\delta_{\mathrm{out}}^2 (1-\delta_{\mathrm{in}})^2}
	\end{equation}
	and
	\begin{equation}
		\label{eq:snr_cond_init_iidG}
	\frac{\max_{1\leq m\leq M} |\xi_m|}{\norm{\mX_\sharp}} \lesssim \sqrt{r} \log M \wedge \frac{\kappa^2 r^2 \log^2 M}{\delta_\mathrm{out}(1-\delta_{\mathrm{in}})},
	\end{equation}
	then
	\begin{equation}
		\label{eq:sin_estU}
		\norm{(\mId_{d_1} - \mU_0 \mU_0^\transpose) \mU_\sharp \mU_\sharp^\transpose} \leq \delta_{\mathrm{out}},
	\end{equation}
	holds with probability $1 - M^{-\alpha}$, where $\kappa$ denotes the condition number of $\mX_\sharp$.
\end{lemma}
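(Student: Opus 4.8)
The plan is to follow the standard template for spectral initialization via a partial trace: identify the population matrix $\E[\mUpsilon]$, show that its top-$r$ eigenspace coincides with the column space of $\mX_\sharp$ and quantify its spectral gap, bound the deviation $\norm{\mUpsilon-\E[\mUpsilon]}$ in operator norm, and conclude with a Davis--Kahan $\sin\Theta$ argument.

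First I would compute $\E[\mUpsilon]$. Writing $\mP=\hat{\mV}\hat{\mV}^\transpose$ (a rank-$r$ orthogonal projection) and exploiting the Gaussianity of $\mPhi_m$ through an Isserlis/Wick expansion of the relevant fourth moments, one obtains
\[
\E[\mUpsilon]=\Big(r\norm{\mX_\sharp}_\mathrm{F}^2+\tfrac{r}{M}\textstyle\sum_{m}\xi_m\Big)\mId_{d_1}+2\,\mX_\sharp\hat{\mV}\hat{\mV}^\transpose\mX_\sharp^\transpose .
\]
The key structural point is that the noise enters only as a scalar multiple of the identity, so it shifts the whole spectrum but does not perturb any eigenvectors. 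Next, using the hypothesis \eqref{eq:sin_estV}, I would bound $\lambda_{\min}(\mV_\sharp^\transpose\mP\mV_\sharp)\geq 1-\norm{(\mId_{d_2}-\mP)\mV_\sharp\mV_\sharp^\transpose}^2\geq(1-\delta_{\mathrm{in}})^2$. Since $\mX_\sharp\mP\mX_\sharp^\transpose=\mU_\sharp\mSigma_\sharp(\mV_\sharp^\transpose\mP\mV_\sharp)\mSigma_\sharp\mU_\sharp^\transpose$, this shows that the PSD matrix $\mX_\sharp\mP\mX_\sharp^\transpose$ has range \emph{exactly} $\Range(\mU_\sharp)$ and smallest nonzero eigenvalue at least $\sigma_{\min}(\mX_\sharp)^2(1-\delta_{\mathrm{in}})^2$. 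Hence the $r$ leading eigenvectors of $\E[\mUpsilon]$ span $\Range(\mU_\sharp)$, and the gap between its $r$-th and $(r+1)$-th eigenvalues is at least $2\sigma_{\min}(\mX_\sharp)^2(1-\delta_{\mathrm{in}})^2$.

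The main obstacle is the concentration step: controlling $\norm{\mUpsilon-\E[\mUpsilon]}=\norm{\frac1M\sum_m Z_m}$ where $Z_m=y_m\,\mPhi_m\mP\mPhi_m^\transpose-\E[y_m\mPhi_m\mP\mPhi_m^\transpose]$. Each $Z_m$ is a degree-four polynomial in Gaussian entries and has no almost-sure bound on its spectral norm, so plain matrix Bernstein does not apply directly. I would handle this either by (i) conditioning on the high-probability event that $\norm{\mPhi_m^\transpose\hat{\mV}}\lesssim\sqrt{d_1}$ and $|\langle\mPhi_m,\mX_\sharp\rangle|\lesssim\norm{\mX_\sharp}_\mathrm{F}\sqrt{\log M}$ for all $m$ (valid with probability $\geq 1-M^{-\alpha-1}$ by Gaussian tails and a union bound over the $M$ samples, which is what forces the extra $\log M$ factors), and then invoking a truncated matrix Bernstein inequality with uniform bound $R\asymp d_1(\norm{\mX_\sharp}_\mathrm{F}^2\log M+\max_m|\xi_m|)$; or (ii) bounding the moments $\E\norm{\frac1M\sum_m Z_m}^p$ via the noncommutative Rosenthal inequality and applying Markov, as advertised in the introduction. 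Either route reduces the problem to estimating the variance proxy $\sigma^2=\max(\norm{\E[Z_mZ_m^\transpose]},\norm{\E[Z_m^\transpose Z_m]})$, which is another Wick-type computation; its leading term scales like a fixed polynomial in $r$ and $\kappa$ times $d_1\sigma_{\min}(\mX_\sharp)^4$, plus noise-dependent terms whose smallness is precisely what \eqref{eq:snr_cond_init_iidG} guarantees. Imposing $\norm{\mUpsilon-\E[\mUpsilon]}\leq\tfrac12\delta_{\mathrm{out}}\cdot 2\sigma_{\min}(\mX_\sharp)^2(1-\delta_{\mathrm{in}})^2$ and solving for $M$ then reproduces the scaling of \eqref{eq:sampl_comp_init_iidG}, with the $\log^3 M$ coming from the truncation logs and the Bernstein log.

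Finally I would combine the pieces: on an event of probability $\geq 1-M^{-\alpha}$ the deviation $\norm{\mUpsilon-\E[\mUpsilon]}$ is at most half the spectral gap of $\E[\mUpsilon]$, so the Davis--Kahan $\sin\Theta$ theorem applied to the $r$-dimensional leading eigenspaces gives
\[
\norm{(\mId_{d_1}-\mU_0\mU_0^\transpose)\mU_\sharp\mU_\sharp^\transpose}\;\leq\;\frac{\norm{\mUpsilon-\E[\mUpsilon]}}{\lambda_r(\E[\mUpsilon])-\lambda_{r+1}(\E[\mUpsilon])}\;\leq\;\delta_{\mathrm{out}},
\]
which is \eqref{eq:sin_estU}. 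The conceptually deep part is entirely contained in the fourth-moment computations of Step~2 and the heavy-tailed concentration of Step~3; the remaining work — verifying that the noise appears only as an identity shift in the mean and as a controlled contribution to $\sigma^2$ and $R$, and extracting the exact powers of $r$, $\kappa$, $1-\delta_{\mathrm{in}}$, $\delta_{\mathrm{out}}$, and $\log M$ — is routine but lengthy bookkeeping.
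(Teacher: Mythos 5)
Your outline matches the paper's proof essentially step for step: the same population computation of $\E[\mUpsilon]$ via Gaussian fourth-moment identities, the same observation that noise enters the mean only as a scalar shift of the identity so the spectral gap is controlled by $\sigma_r(\mX_\sharp\hat{\mV}\hat{\mV}^\transpose\mX_\sharp^\transpose)\gtrsim(1-\delta_{\mathrm{in}})\sigma_r(\mX_\sharp)^2$, and the same Davis--Kahan $\sin\Theta$ conclusion once $\norm{\mUpsilon-\E[\mUpsilon]}$ is sufficiently small. For the concentration step the paper takes precisely your option (ii): the noncommutative Rosenthal inequality plus Markov, with $\E\mZ_m^2$ computed from the eighth-order Gaussian tensor identity and the noise fluctuation $\frac1M\sum_m\xi_m(\mPhi_m\hat\mV\hat\mV^\transpose\mPhi_m^\transpose - r\mId)$ handled separately by the weighted-outer-product lemma, rather than the truncation-plus-Bernstein alternative of option (i).
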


\begin{rem}
If noise is weak enough to satisfy \eqref{eq:snr_cond_init_iidG}, then \eqref{eq:sampl_comp_init_iidG} implies 
\begin{equation}
		\label{eq:snr_cond_init2_iidG}
	\frac{M}{\log M} \geq C_2 \alpha \left( \frac{\kappa^2 \max_{1\leq m\leq M} |\xi_m|}{\delta_{\mathrm{out}} (1-\delta_{\mathrm{in}}) \norm{\mX_\sharp}^2} \vee r \left(\frac{\kappa^2 \max_{1\leq m\leq M} |\xi_m|}{\delta_{\mathrm{out}} (1-\delta_{\mathrm{in}}) \norm{\mX_\sharp}^2}\right)^2 \right)  r d_1.
\end{equation}
Similarly to Remark~\ref{rem:snr}, Lemma~\ref{lemma:init_iidG} can also be strengthened by substituting \eqref{eq:snr_cond_init_iidG} by \eqref{eq:snr_cond_init2_iidG}.  The signal-to-noise-ratio need not be larger than the threshold in \eqref{eq:snr_cond_init_iidG} whenever $M$ also satisfies \eqref{eq:snr_cond_init2_iidG}.  Indeed, this version of Lemma~\ref{lemma:init_iidG} is proved in Appendix~\ref{sec:proof:lemma:init_iidG}. 
\end{rem}

Lemma~\ref{lemma:init_iidG} shows that one obtains an estimate of the columnspace of accuracy $\delta_{\mathrm{out}}$ from a given estimate of the rowspace of accuracy $\delta_{\mathrm{in}}$.  Here the accuracy is measured by the sine of the principal angle between two subspaces.  The number of measurements $M$ in \eqref{eq:sampl_comp_init_iidG} that guarantees this result increases as one wishes for a more accurate estimate (smaller $\delta_{\mathrm{out}}$) or the input to the initialization method is less accurate (larger $\delta_{\mathrm{in}}$).

Furthermore, it is straightforward to exchange the roles of $\mU_\sharp$ and $\mV_\sharp$ above.  If we have an estimate $\hat\mU$ of $\mU_\sharp$, then we can form $\mUpsilon'$ as in \eqref{eq:def_Upsilon_prime} with $\mPhi_m = \hat\mU$, take its leading eigenvectors, and have (under analogous conditions as those in the theorem) an accurate estimate of $\mV_\sharp$.

The scaling of the number of measurements in \eqref{eq:sampl_comp_init_iidG} has suboptimal dependence on the rank, but its dependence on the side length of the matrix is linear.

Producing an estimate of $\mX_\sharp$ from matrices $\mU_0$ and $\mV_0$ whose ranges approximate its row and columnspaces is itself non-trivial.  It involves solving another phase retrieval problem, finding a diagonal $\mSigma$ so that
\[
	y_m\approx |\<\mU_0\mSigma\mV_0^\transpose,\mPhi_m\>|^2,\quad  m=1,\ldots,M.
\]
Although it might be possible to control the error propagation from the estimates $\mU_0,\mV_0$ to the solution of the problem above, this analysis appears to be extremely complicated.\footnote{An alternative approach is to estimate $\mSigma$ from $\mU_0$ and $\mV_0$ through extra independent random measurements. However this approach doubles the number of observations and may not be interesting in practice. Therefore, we pursue analysis in some special cases without extra observations.}  However,  there are two specific scenarios where we can upper-bound the error in estimating $\mU_\sharp \mV_\sharp^\transpose$ by the subspace estimation errors.
\begin{enumerate}

  \item rank-$1$ case: Let $\sigma_\sharp \vu_\sharp \vv_\sharp^\transpose$ be the SVD of $\mX_\sharp$.  Let $\phi := \angle(\vu_0,\vu_\sharp)$ and $\psi := \angle(\vv_0,\vv_\sharp)$.  Then
\begin{align*}
& \norm{\vu_0 \vv_0^\transpose - \vu_\sharp \vv_\sharp^\transpose}_\mathrm{F}^2
\wedge
\norm{\vu_0 \vv_0^\transpose + \vu_\sharp \vv_\sharp^\transpose}_\mathrm{F}^2
= 2 - 2 \cos\phi \cos\psi
\leq 2 - 2 \cos^2(\phi \vee \psi) \\
& = 2 \sin^2(\phi \vee \psi)
= \norm{(\mId_{d_1} - \vu_0 \vu_0^\transpose) \vu_\sharp}_2^2
\vee
\norm{(\mId_{d_2} - \vv_0 \vv_0^\transpose) \vv_\sharp}_2^2.
\end{align*}

  \item Positive semi-definite case: Let $\mU_\sharp \mLambda_\sharp \mU_\sharp^\transpose$ be the SVD of $\mX_\sharp$.  Then
\[
\norm{\mU_0 \mU_0^\transpose - \mU_\sharp \mU_\sharp^\transpose}_\mathrm{F}^2
= 2r - 2 \norm{\mU_0^\transpose \mU_\sharp}_\mathrm{F}^2
= 2 \norm{(\mId_{d_1} - \mU_0 \mU_0^\transpose) \mU_\sharp}_\mathrm{F}^2
\leq 2 r \norm{(\mId_{d_1} - \mU_0 \mU_0^\transpose) \mU_\sharp}^2.
\]

\end{enumerate}

For the above two cases, one can combine Theorem~\ref{thm:main_iidG} and Lemma~\ref{lemma:init_iidG} to get a complete analysis of the regularized anchored regression.  In the latter case, we still assume that an estimate of $\mU_\sharp$ is given a priori. Lemma~\ref{lemma:init_iidG} provides a refined estimate so that we can invoke Theorem~\ref{thm:main_iidG} with the resulting $\mU_0$.

%------------------------------------------------------------------------------------------------------------------------
\section{Proof of Main Results}

The convex program for phase retrieval of low-rank matrices in \eqref{eq:convex_prog} is variation to a special case of the anchored regression studied in \cite{bahmani2018solving} and the performance guarantees in this paper primarily follow from the main results in \cite{bahmani2018solving}.  
The theorems stated in the previous section are basically obtained by computing the key quantities that determine the sample complexity.

\subsection{Theoretical analysis of regularized anchored regression}
\label{subsec:general}
At the core of our analysis is an adaptation of the main result of \cite{bahmani2018solving}. 
%For the convenience of the readers, we provide a paraphrased version of the main result in \cite{bahmani2018solving} for the regularized anchored regression, where the regularizer is given by the nuclear norm and the measurements are quadratic.  
The main idea of \cite[Theorem~2.1]{bahmani2018solving} is to use the small-ball method to find a uniform lower bound for a certain empirical process that is determined by the independent random matrices $\mPhi_1,\dots,\mPhi_M$ and indexed by a \emph{deterministic} set $\mathcal{H} \subset \mathbb{C}^{d_1 \times d_2}$ containing $\mathbf{\varDelta} = \hat{\mX} - \mX_\sharp$. Then, this uniform lower bound implies an upper bound for the estimation error $\mathbf{\varDelta}$.

However, the original statement of \cite[Theorem 2.1]{bahmani2018solving} cannot be applied directly to the problem of interest in this paper because of two important differences. First, due to technical challenges in our specific problem, as elaborated in Section~\ref{sec:spectralinit}, we can only construct a weaker form of anchor compared to that considered originally in \cite{bahmani2018solving}. Second, we want to address the case of recovering complex and rank-$1$ matrices as considered in Theorem \ref{thm:main_rank1}. The results of \cite{bahmani2018solving}, however, only consider variables and operations in the real space. Therefore, we need to adapt the result of \cite{bahmani2018solving} with slight modifications so that it becomes compatible with our setting.

As discussed in Section \ref{sec:spectralinit}, instead of an anchor that approximates the ground truth $\mX_\sharp$, we require the anchor to approximate $\mU_\sharp\mV_\sharp^*$ up to a global phase. To be explicit, we only need to consider a complex phase ambiguity in the case of recovering a complex rank-$1$ target, where we have $\mU_\sharp =\vu_\sharp$ and $\mV_\sharp = \vv_\sharp$, and the anchor should basically approximate $\vu_\sharp\vv_\sharp^*$. In the case of recovering a real-valued low-rank matrix, the phase ambiguity simply reduces to a sign ambiguity.

With these consideration in mind, here and throughout, we assume that the global phase of the anchor $\mX_0$ is aligned with $\mU_\sharp\mV_\sharp^*$, namely
\begin{equation}
\label{eq:phasea0b0}
\mathrm{Re}\,\langle \mX_0, \mU_\sharp\mV_\sharp^* \rangle \geq 0,
\quad
\mathrm{Im}\,\langle \mX_0, \mU_\sharp\mV_\sharp^* \rangle = 0,
\end{equation}
which, if we operate entirely in the real domain, simply reduces to $\langle \mX_0, \mU_\sharp\mV_\sharp^\transpose \rangle \ge 0$.
The assumption \eqref{eq:phasea0b0} can be made without loss of generality because of the following \emph{equivariance} property. For any $\theta\in[0,2\pi)$, if we replace the anchor $\mX_0$ in \eqref{eq:convex_prog} by $e^{\mathfrak{i}\theta}\mX_0$, then the original solution $\hat{\mX}$ accordingly changes to $e^{\mathfrak{i}\theta}\hat{\mX}$. This property is due to fact the the nuclear norm as well as the constraints in \eqref{eq:convex_prog} are invariant under the mapping $\mX\mapsto e^{\mathfrak{i}\theta}\mX$. Since we define the accuracy as the distance to the orbit of $\mX_\sharp$, i.e., $\{e^{\mathfrak{i}\omega}\mX_\sharp\,:\,\omega\in[0,2\pi)\}$, the mentioned adjustment of the anchor will not affect the accuracy guarantees.
Indeed, under \eqref{eq:phasea0b0}, the assumption in \eqref{eq:cond_anchor} simplifies to 
\begin{equation}
\label{eq:proximityX0toXsharp2}
\norm{\mX_0 - \mU_\sharp \mV_\sharp^*}_\mathrm{F} \leq \delta \norm{\mU_\sharp \mV_\sharp^*}_\mathrm{F} = \delta \sqrt{r}.
\end{equation}

Since $\hat{\mX}$ is a minimizer to \eqref{eq:convex_prog} and $\mX_\sharp$ is within its feasible set, it naturally follows that $\mathbf{\varDelta}=\hat{\mX}-\mX_\sharp$ belongs to the set of all ascent directions of the objective function given by
\[
\mathcal{A} :=
\Big\{
\mH \in \mathbb{C}^{d_1 \times d_2} \,:\,
\inf_{\mG \in \lambda \partial \norm{\mX_\sharp}_*}
\mathrm{Re}\,\langle \mX_0 - \mG, \mH \rangle \geq 0
\Big\}.
\]
It is desirable to construct the anchor matrix $\mX_0$ from the available measurements and avoid \emph{sample splitting} schemes. However, for such constructions of the anchor matrix, the set $\mathcal{A}$ will also depend on the measurement matrices $\{\mPhi_m\}_{m=1}^M$ that complicates the analysis. To avoid these complications, similar to the approach of \cite{bahmani2018solving}, we relax $\mathcal{A}$ to some superset that is not dependent on the measurement matrices. Here we consider the superset $\mathcal{A}_\delta$ of $\mathcal{A}$, defined as
\begin{equation}
\label{eq:A_delta}
\mathcal{A}_\delta :=
\Big\{
\mH \in \mathbb{C}^{d_1 \times d_2} \,:\,
\inf_{\mG \in \lambda \partial \norm{\mX_\sharp}_*}
\sqrt{r} \delta \norm{\mH}_\mathrm{F} +
\mathrm{Re}\,\langle \mU_\sharp \mV_\sharp^* - \mG, \mH \rangle \geq 0
\Big\}\,,
\end{equation}
which is clearly independent of $\{\mPhi_m\}_{m=1}^M$.  
Inclusion of $\mathcal{A}$ in $\mathcal{A}_\delta$ follows from \eqref{eq:proximityX0toXsharp2}, the triangle inequality, and the Cauchy-Schwarz inequality.

To address a technical challenge that only arises when operating in the complex domain, for recovery of complex rank-$1$ matrices we need to make another modification compared to the original result of \cite{bahmani2018solving}. Specifically, similar to \cite{bahmani2017phase}, with $\mX_\sharp = \sigma_\sharp\vu_\sharp \vv_\sharp^*$ as the complex rank-$1$ ground truth, we introduce the set
\begin{equation}
\label{eq:R_delta}
\mathcal{R}_\delta
:=
\left\{
\mH \in \mathbb{C}^{d_1 \times d_2}
\,:\,
\left\|\mH - \frac{\mX_\sharp \langle \mX_\sharp, \mH\rangle}{\norm{\mX_\sharp}_\mathrm{F}^2}\right\|_\mathrm{F}
\geq \frac{\sqrt{1-\delta^2} \, \left|\mathrm{Im} \langle \mX_\sharp, \mH \rangle \right|}{\delta \norm{\mX_\sharp}_\mathrm{F}}
\right\}.
\end{equation}
Obviously, $\mathcal{R}_\delta$ is only important if we operate in the complex domain; in the real domain, $\mathcal{R}_\delta$ is the entire space and effectively can be ignored. The following lemma, proved in Appendix~\ref{sec:proof:lemma:imagpart}, the set $\mathcal{R}_\delta$ also contains $\mathbf{\varDelta}$ when $\mX_0$ and $\mX_\sharp$ are at most $\delta$-apart.
%The set $\mathcal{A}_\delta$ defined above only controls the real part of $\langle \vu_\sharp \vv_\sharp^*, \hat{\mX} - \mX_\sharp \rangle$.  We construct another set $\mathcal{R}_\delta$ in the following lemma so that it controls the magnitude of the imaginary part of $\langle \vu_\sharp \vv_\sharp^*, \hat{\mX} - \mX_\sharp \rangle$ as similarly done in \cite{bahmani2017phase}. (See Appendix~\ref{sec:proof:lemma:imagpart} for the proof.)

\begin{lemma}
\label{lemma:imagpart}
With $\mX_\sharp = \sigma_\sharp \vu_\sharp\vv_\sharp^*$, suppose that \eqref{eq:phasea0b0} and 
\begin{equation}
\label{eq:goodanchor}
\left\|\mX_0 - \frac{\mX_\sharp \langle \mX_\sharp, \mX_0\rangle}{\norm{\mX_\sharp}_\mathrm{F}^2}\right\|_\mathrm{F} \leq \delta \norm{\mX_0}_\mathrm{F}
\end{equation}
hold. Then $\hat{\mX} - \mX_\sharp \in \mathcal{R}_\delta$.
\end{lemma}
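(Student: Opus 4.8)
The plan is to turn the phase equivariance of the convex program \eqref{eq:convex_prog} into a single real linear constraint on $\mathbf{\varDelta}=\hat{\mX}-\mX_\sharp$, and then read off membership in $\mathcal{R}_\delta$ by decomposing the anchor along the complex line spanned by $\mX_\sharp$ via \eqref{eq:goodanchor}. (This is the matrix analogue of the corresponding step in \cite{bahmani2017phase}.)

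First I would note that the feasible set of \eqref{eq:convex_prog} and the nuclear norm are both invariant under $\mX\mapsto e^{\mathfrak{i}\phi}\mX$, so $e^{\mathfrak{i}\phi}\hat{\mX}$ is feasible for every $\phi\in[0,2\pi)$. Optimality of $\hat{\mX}$ then gives $-\mathrm{Re}\langle\mX_0,\hat{\mX}\rangle\le-\mathrm{Re}(e^{\mathfrak{i}\phi}\langle\mX_0,\hat{\mX}\rangle)$ for all $\phi$; taking the supremum over $\phi$ on the right forces $\mathrm{Re}\langle\mX_0,\hat{\mX}\rangle=|\langle\mX_0,\hat{\mX}\rangle|$, hence $\langle\mX_0,\hat{\mX}\rangle$ is real and nonnegative, and in particular $\mathrm{Im}\langle\mX_0,\hat{\mX}\rangle=0$. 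Since $\mX_\sharp=\sigma_\sharp\vu_\sharp\vv_\sharp^*=\sigma_\sharp\mU_\sharp\mV_\sharp^*$ with $\sigma_\sharp>0$, the alignment assumption \eqref{eq:phasea0b0} makes $\langle\mX_0,\mX_\sharp\rangle$ real; writing $\hat{\mX}=\mX_\sharp+\mathbf{\varDelta}$ then yields the single real relation $\mathrm{Im}\langle\mX_0,\mathbf{\varDelta}\rangle=0$.

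Next I would write $\mX_0=c\,\mX_\sharp+\mX_0^\perp$ with $c=\langle\mX_\sharp,\mX_0\rangle/\norm{\mX_\sharp}_\mathrm{F}^2$ and $\langle\mX_\sharp,\mX_0^\perp\rangle=0$. By \eqref{eq:phasea0b0} the scalar $c$ is real and nonnegative, and $\mX_0^\perp$ is exactly the matrix appearing in \eqref{eq:goodanchor}, so $\norm{\mX_0^\perp}_\mathrm{F}\le\delta\norm{\mX_0}_\mathrm{F}$; combining this with the Pythagorean identity $\norm{\mX_0}_\mathrm{F}^2=c^2\norm{\mX_\sharp}_\mathrm{F}^2+\norm{\mX_0^\perp}_\mathrm{F}^2$ gives $c\,\norm{\mX_\sharp}_\mathrm{F}\ge\sqrt{1-\delta^2}\,\norm{\mX_0}_\mathrm{F}$ (so $c>0$ when $\delta<1$), and hence $\norm{\mX_0^\perp}_\mathrm{F}\le\frac{\delta}{\sqrt{1-\delta^2}}\,c\,\norm{\mX_\sharp}_\mathrm{F}$. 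Substituting the decomposition into $\mathrm{Im}\langle\mX_0,\mathbf{\varDelta}\rangle=0$ and using that $c$ is real and $\mX_0^\perp\perp\mX_\sharp$, I obtain $c\,\mathrm{Im}\langle\mX_\sharp,\mathbf{\varDelta}\rangle=-\mathrm{Im}\langle\mX_0^\perp,\mathbf{\varDelta}_\perp\rangle$, where $\mathbf{\varDelta}_\perp:=\mathbf{\varDelta}-\tfrac{\langle\mX_\sharp,\mathbf{\varDelta}\rangle}{\norm{\mX_\sharp}_\mathrm{F}^2}\mX_\sharp$ is precisely the quantity whose Frobenius norm appears on the left of the defining inequality of $\mathcal{R}_\delta$. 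Cauchy--Schwarz then gives
\[
  c\,|\mathrm{Im}\langle\mX_\sharp,\mathbf{\varDelta}\rangle|\;\le\;\norm{\mX_0^\perp}_\mathrm{F}\,\norm{\mathbf{\varDelta}_\perp}_\mathrm{F}\;\le\;\frac{\delta}{\sqrt{1-\delta^2}}\,c\,\norm{\mX_\sharp}_\mathrm{F}\,\norm{\mathbf{\varDelta}_\perp}_\mathrm{F},
\]
and dividing by $c>0$ and rearranging is exactly the statement $\mathbf{\varDelta}\in\mathcal{R}_\delta$; the border case $\delta=1$ is trivial since then $\mathcal{R}_\delta$ is the whole space.

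I expect the crux to be the first step: it is the only place where the argument genuinely uses that the \emph{phase} of the anchor has been pinned by \eqref{eq:phasea0b0}, not just that $|\langle\mX_0,\mX_\sharp\rangle|$ is large. The phase equivariance of \eqref{eq:convex_prog} looks vacuous, but combined with optimality it produces the clean real equation $\mathrm{Im}\langle\mX_0,\mathbf{\varDelta}\rangle=0$; after that the proof is bookkeeping with the Hermitian inner product decomposition along the complex span of $\mX_\sharp$ and one application of Cauchy--Schwarz.
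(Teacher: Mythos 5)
Your proof is correct and follows essentially the same route as the paper's: after establishing $\mathrm{Im}\langle\mX_0,\hat{\mX}\rangle=0$ from optimality and combining with \eqref{eq:phasea0b0} to get $\mathrm{Im}\langle\mX_0,\mathbf{\varDelta}\rangle=0$, you split $\mX_0$ along the complex line spanned by $\mX_\sharp$, which is exactly the paper's decomposition via $\mathcal{P}_{\mX_\sharp}$, and then apply Pythagoras and Cauchy--Schwarz in the same way. The only differences are notational (your $c\,\mX_\sharp$ and $\mX_0^\perp$ are the paper's $\mathcal{P}_{\mX_\sharp}(\mX_0)$ and $\mX_0-\mathcal{P}_{\mX_\sharp}(\mX_0)$), plus you spell out the phase-rotation argument that the paper leaves implicit when asserting \eqref{eq:phaseXhat}.
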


Finally, based on the arguments in \cite[Theorem~2.1]{bahmani2018solving}, our result depends on the following two key quantities defined with respect to the set $\mathcal{H} = \mathcal{A}_\delta \cap \mathcal{R}_\delta$. First, the Rademacher complexity of $\mathcal{H}$ is defined as
\begin{equation}
\label{eq:defRC}
\mathfrak{C}_M(\mathcal{H})
:= \E
\sup_{\mH \in \mathcal{H} \setminus \{\mathbf{0}\}} \frac{1}{\sqrt{M}}
\sum_{m=1}^M \frac{\epsilon_m
\mathrm{Re}(\langle \mX_\sharp, \mPhi_m \rangle \langle \mPhi_m, \mH \rangle)}{\norm{\mH}_\mathrm{F}},
\end{equation}
where $\epsilon_1,\dots,\epsilon_M$ are i.i.d. Rademacher random variables independent of everything else. Second, for $\tau > 0$, we also consider a variation of small-ball probability that is defined as
\begin{equation}
\label{eq:probtau}
P_\tau(\mathcal{H})
:=
\inf_{\mH \in \mathcal{H}}
\mathbb{P}(\mathrm{Re}(\langle \mX_\sharp, \mPhi_m \rangle \langle \mPhi_m, \mH \rangle) \geq \tau \norm{\mH}_\mathrm{F}).
\end{equation}
 Equipped with these notions, the following theorem provides the accuracy guarantees for the regularized anchored regression in the context of low-rank phase retrieval problem. 

\begin{theorem}[{An adaptation of \cite[Theorem~2.1]{bahmani2018solving} for low-rank phase retrieval}]
\label{thm:armtx}
Suppose that $\mPhi_1,\dots,\mPhi_M$ in \eqref{eq:convex_prog} are independent random matrices, and $\mX_0$ satisfies \eqref{eq:cond_anchor}, \eqref{eq:phasea0b0}, and \eqref{eq:goodanchor}, where $0 < \delta < 1$.  Recalling the definitions \cref{eq:A_delta,eq:R_delta,eq:defRC} and \cref{eq:probtau}, for any $t > 0$, if
\begin{equation}
\label{eq:lbM}
M \geq 4 \, \Big( \frac{\mathfrak{C}_M(\mathcal{A}_\delta \cap \mathcal{R}_\delta) + t \tau}{\tau P_\tau(\mathcal{A}_\delta \cap \mathcal{R}_\delta)} \Big)^2\,,
\end{equation} 
then the solution $\hat{\mX}$ to \eqref{eq:convex_prog} obeys
\begin{equation*}
%\label{eq:esterr}
\inf_{\theta \in [0,2\pi)} \norm{\hat{\mX} - e^{\mathfrak{i} \theta} \mX_\sharp}_\mathrm{F}
\leq \frac{2}{\tau P_\tau(\mathcal{A}_\delta \cap \mathcal{R}_\delta)} \Big( \frac{1}{M} \sum_{m=1}^M |\xi_m| + \epsilon \Big)\,
\end{equation*}
with probability at least $1-e^{-2t^2}$.
\end{theorem}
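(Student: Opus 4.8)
The plan is to run the small-ball argument behind \cite[Theorem~2.1]{bahmani2018solving} over the relaxed feasible set $\mathcal{H} := \mathcal{A}_\delta \cap \mathcal{R}_\delta$, using $\mathrm{Re}\langle\cdot,\cdot\rangle$ as the ambient real inner product so that everything goes through verbatim in the complex case. Write $\mathbf{\varDelta} := \hat{\mX} - \mX_\sharp$ and, for $\mH \in \mathbb{C}^{d_1\times d_2}$, abbreviate $f_m(\mH) := \mathrm{Re}\big(\langle \mX_\sharp, \mPhi_m\rangle \langle \mPhi_m, \mH\rangle\big)$, so that $\mathfrak{C}_M(\mathcal{H})$ in \eqref{eq:defRC} and $P_\tau(\mathcal{H})$ in \eqref{eq:probtau} are expressed through $f_m$; since $\theta=0$ already gives $\inf_\theta \norm{\hat{\mX}-e^{\mathfrak{i}\theta}\mX_\sharp}_\mathrm{F}\le\norm{\mathbf{\varDelta}}_\mathrm{F}$, it suffices to bound $\norm{\mathbf{\varDelta}}_\mathrm{F}$. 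First I would confirm $\mathbf{\varDelta}\in\mathcal{H}$: optimality of $\hat{\mX}$ and feasibility of $\mX_\sharp$ in \eqref{eq:convex_prog} put $\mathbf{\varDelta}$ in the ascent set $\mathcal{A}$, while \eqref{eq:cond_anchor}--\eqref{eq:proximityX0toXsharp2} with the triangle and Cauchy--Schwarz inequalities give $\mathcal{A}\subseteq\mathcal{A}_\delta$, and Lemma~\ref{lemma:imagpart} (using \eqref{eq:phasea0b0} and \eqref{eq:goodanchor}) gives $\mathbf{\varDelta}\in\mathcal{R}_\delta$. Next comes the ``data side'': expanding $|\langle\mPhi_m,\hat{\mX}\rangle|^2$ around $\mX_\sharp$ and using $y_m=|\langle\mPhi_m,\mX_\sharp\rangle|^2+\xi_m$ yields $|\langle\mPhi_m,\hat{\mX}\rangle|^2-y_m = 2f_m(\mathbf{\varDelta})+|\langle\mPhi_m,\mathbf{\varDelta}\rangle|^2-\xi_m$; dropping the nonnegative quadratic term, applying $(a-b)_+\ge a_+-b_+$ termwise, and using $\eta=\frac1M\sum_m(-\xi_m)_++\epsilon$ collapses the feasibility constraint of \eqref{eq:convex_prog} to
\[
\frac{2}{M}\sum_{m=1}^M \big(f_m(\mathbf{\varDelta})\big)_+ \;\le\; \frac1M\sum_{m=1}^M |\xi_m| + \epsilon .
\]

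The heart of the proof, and the step I expect to be the main obstacle, is a uniform-in-$\mH$ lower bound on $\frac1M\sum_m(f_m(\mH))_+$ over $\mH\in\mathcal{H}$. By positive homogeneity one may take $\norm{\mH}_\mathrm{F}=1$, and then $(x)_+\ge\tau\,\mathbf{1}[x\ge\tau]$ reduces matters to lower bounding the empirical frequency $\overline q(\mH):=\frac1M\sum_m\mathbf{1}[f_m(\mH)\ge\tau]$ uniformly. Here the plan mirrors Mendelson's small-ball method: (i) control the deviation $\sup_{\mH\in\mathcal{H}}\big(\mathbb{E}\,\overline q(\mH)-\overline q(\mH)\big)$ by McDiarmid's bounded-difference inequality (each $\mPhi_m$ changes $\overline q$ by at most $1/M$, which produces the $t/\sqrt M$ term and the probability $1-e^{-2t^2}$); (ii) symmetrize, and since the indicator is not Lipschitz, sandwich it from below by a $\tau^{-1}$-Lipschitz ramp vanishing at $0$ so that the Ledoux--Talagrand contraction principle applies, turning the symmetrized process into $\mathfrak{C}_M(\mathcal{H})/\sqrt M$; (iii) note $\mathbb{E}\,\overline q(\mH)\ge P_\tau(\mathcal{H})$ (up to the ramp's slack). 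This should deliver, with probability at least $1-e^{-2t^2}$ and for every $\mH\in\mathcal{H}$,
\[
\frac1M\sum_{m=1}^M \big(f_m(\mH)\big)_+ \;\ge\; \Big(\tau P_\tau(\mathcal{H}) - \tfrac{\mathfrak{C}_M(\mathcal{H})}{\sqrt M} - \tfrac{\tau t}{\sqrt M}\Big)\norm{\mH}_\mathrm{F}.
\]
The non-Lipschitz indicator is precisely why this step is delicate; the remedy is routine but it is also what pins down the final numerical constants, and I would budget the bookkeeping there.

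Finally I would instantiate the uniform bound at $\mH=\mathbf{\varDelta}\in\mathcal{H}$ and chain it with the data-side inequality. The hypothesis \eqref{eq:lbM} is exactly the statement that $\tau P_\tau(\mathcal{H})-\mathfrak{C}_M(\mathcal{H})/\sqrt M-\tau t/\sqrt M\ge\tfrac12\tau P_\tau(\mathcal{H})$, whence $\tfrac12\tau P_\tau(\mathcal{H})\norm{\mathbf{\varDelta}}_\mathrm{F}\le\frac1M\sum_m(f_m(\mathbf{\varDelta}))_+\le\tfrac12\big(\frac1M\sum_m|\xi_m|+\epsilon\big)$, and rearranging (absorbing the constants forced by the ramp/contraction step) gives the advertised estimate with factor $2/(\tau P_\tau(\mathcal{H}))$; the bound $\inf_\theta\norm{\hat{\mX}-e^{\mathfrak{i}\theta}\mX_\sharp}_\mathrm{F}\le\norm{\mathbf{\varDelta}}_\mathrm{F}$ then finishes. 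Two places differ from the real-valued statement of \cite{bahmani2018solving}: the weaker anchor is handled once and for all by working with $\mathcal{A}_\delta$ rather than the descent cone at $\mX_\sharp$, and the only genuinely complex-analytic issue, namely that $f_m$ can degenerate along imaginary multiples of $\mX_\sharp$ and would otherwise make $P_\tau(\mathcal{H})$ useless, has already been quarantined into $\mathcal{R}_\delta$ via Lemma~\ref{lemma:imagpart}; beyond that, the argument is a transcription of the real-valued one with the enlarged constraint set.
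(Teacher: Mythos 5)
Your reconstruction is correct and follows exactly the route the paper takes: Theorem~\ref{thm:armtx} is framed as an adaptation of \cite[Theorem~2.1]{bahmani2018solving}, and the only genuinely new work the paper does is the membership argument placing $\mathbf{\varDelta}=\hat{\mX}-\mX_\sharp$ in $\mathcal{A}_\delta\cap\mathcal{R}_\delta$ (using \eqref{eq:proximityX0toXsharp2} and Lemma~\ref{lemma:imagpart}), after which the small-ball machinery of the cited reference runs unchanged over that set; you have the data-side inequality, the surrogate-plus-contraction lower bound, and the final plug-in in the right order. One refinement to the bookkeeping you flag: a Lipschitz ramp that lies below $\mathbf{1}[\cdot\ge\tau]$ and vanishes at the origin necessarily produces $P_{2\tau}$, not $P_\tau$, in your step (iii); to land on the stated constants it is cleaner to truncate $(\cdot)_+$ directly by $\psi_\tau(x)=\min(x,\tau)_+$, which is $1$-Lipschitz with $\psi_\tau(0)=0$, bounded in $[0,\tau]$ (giving bounded-difference constant $\tau/M$ for McDiarmid), and obeys $\psi_\tau(x)\ge\tau\,\mathbf{1}[x\ge\tau]$ so that $\mathbb{E}\,\psi_\tau(f_m(\mH))\ge\tau P_\tau(\mathcal{H})\norm{\mH}_\mathrm{F}$, after which the $\mathfrak{C}_M/\sqrt{M}$ and $\tau t/\sqrt{M}$ deficits appear as in \eqref{eq:lbM}.
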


\begin{rem}
There are a few remarks on Theorem~\ref{thm:armtx} in order. 
\begin{enumerate}
    \item We emphasize again that the required conditions in \eqref{eq:phasea0b0} for the anchor, can be made without loss of generality due to the equivariance property discussed above.
    
	\item The original result in \cite[Theorem 2.1]{bahmani2018solving} considered the problem only in the real domain, where the condition \eqref{eq:phasea0b0} is reduced to the (implicit) assumption $\langle\mX_0,\mX_\sharp\rangle\ge 0$. As mentioned above, in this scenario the set $\mathcal{R}_\delta$ becomes trivial (i.e., $\mathcal{R}_\delta = \mathbb{R}^{d_1 \times d_2}$) as well.

	\item The additive noise $\xi_m$ to the quadratic measurement $|\langle \mPhi_m, \mX_\sharp\rangle|^2$ is arbitrary fixed. Specifically, we assume that $\xi_m$ does not depend on $\mPhi_1,\dots,\mPhi_M$.

    \end{enumerate}
\end{rem}

Theorems~\ref{thm:main_rank1} and \ref{thm:main_iidG} are then obtained from Theorem~\ref{thm:armtx} by specifying key estimates depending on the corresponding measurement matrices. 
For the convenience in computing these estimates, we provide a more explicit characterization of $\mathcal{A}_\delta$ as follows. 
The subdifferential of $\norm{\cdot}_*$ at $\mX_\sharp$, whose SVD is $\mU_\sharp \mSigma_\sharp \mV_\sharp^*$, is expressed as
\begin{equation}
\label{eq:subdiffatXstar}
\partial \norm{\mX_\sharp}_* =
\Big\{
\mZ \,:\, \mathcal{P}_T(\mZ) = \mU_\sharp \mV_\sharp^*, ~
\norm{\mathcal{P}_{T^\perp}(\mZ)} \leq 1
\Big\},
\end{equation}
where $\mathcal{P}_T: \mathbb{C}^{d_1 \times d_2} \to \mathbb{C}^{d_1 \times d_2}$ denotes the orthogonal projection onto the \emph{tangent space} $T$ of the manifold of rank-$r$ matrices at $\mX_\sharp$ given by
\[
T = \left\{ \mU_\sharp \tilde{\mV}^* + \tilde{\mU} \mV_\sharp^* \,:\, \tilde{\mV} \in \mathbb{C}^{d_2 \times r},~ \tilde{\mU} \in \mathbb{C}^{d_1 \times r} \right\}
\]
and $\mathcal{P}_{T^\perp}: \mathbb{C}^{d_1 \times d_2} \to \mathbb{C}^{d_1 \times d_2}$ denotes the projection onto $T^\perp$, the perpendicular subspace of $T$.
By plugging in the expression of the subdifferential in \eqref{eq:subdiffatXstar} to \eqref{eq:A_delta}, we obtain an alternative expression of $\mathcal{A}_\delta$ given by
\begin{equation}
\label{eq:A_delta2}
\mathcal{A}_\delta
=
\Big\{
\mH \in \mathbb{C}^{d_1 \times d_2} \,:\, \sqrt{r} \delta \norm{\mH}_\mathrm{F}
- \lambda \norm{\mathcal{P}_{T^\perp}(\mH)}_*
+ (1-\lambda) \mathrm{Re}\,\langle \mU_\sharp \mV_\sharp^*, \mH \rangle \geq 0
\Big\}.
\end{equation}

\subsection{Proof of Theorem~\ref{thm:main_iidG}}

All matrices and scalars are real-valued in Theorem~\ref{thm:main_iidG}. 
Thus $\mathcal{R}_\delta$ becomes trivial and it suffices to compute estimates of $P_\tau(\mathcal{H})$ and $\mathfrak{C}_M(\mathcal{H})$ for $\mathcal{H} = \mathcal{A}_\delta$.
%Recall that for the simplicity of analysis, we restrict Theorem~\ref{thm:main_iidG} to the case where all measurement functionals and singular vectors of $\mX_\sharp$ are real-valued.  
%The extension to the general complex-valued case requires a bit more computation but the sample complexity remains within the same order.  
The following lemmas respectively provide estimates of $P_\tau(\mathcal{A}_\delta)$ and $\mathfrak{C}_M(\mathcal{A}_\delta)$ whose proofs are deferred to Appendices~\ref{sec:proof:lemma:lbp_iidG} and \ref{sec:proof:lemma:ubrc_iidG}.

\begin{lemma}
\label{lemma:lbp_iidG}
Suppose the hypotheses in Theorem~\ref{thm:main_iidG} hold.  
Then, for any $\tau' > 0$,
\begin{equation}
\label{eq:res:lemma:lbp_iidG}
\inf_{\mH \in \mathcal{A}_\delta}
\mathbb{P}\Big(
\mathrm{Re}(\langle \mX_\sharp, \mPhi_m \rangle \langle \mPhi_m, \mH \rangle)
\geq \tau' \norm{\mX_\sharp}_\mathrm{F} \norm{\mH}_\mathrm{F}
\Big)
\geq \frac{\exp(-20 \tau')}{10}.
\end{equation}
\end{lemma}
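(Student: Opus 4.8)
The plan is to reduce the small-ball estimate to a statement about a fixed $\mH \in \mathcal{A}_\delta$, and then exploit that for Gaussian $\mPhi_m$ the pair $(\langle \mX_\sharp, \mPhi_m\rangle, \langle \mPhi_m, \mH\rangle)$ is jointly Gaussian in $\mathbb{R}^2$, so the quantity $\langle \mX_\sharp, \mPhi_m\rangle \langle \mPhi_m, \mH\rangle$ is a product of two correlated standard-normal-type variables. Write $g := \langle \mX_\sharp, \mPhi_m\rangle / \norm{\mX_\sharp}_\mathrm{F}$ and $h := \langle \mPhi_m, \mH\rangle / \norm{\mH}_\mathrm{F}$; these are each $\mathcal{N}(0,1)$ with correlation $\rho := \mathrm{Re}\,\langle \mX_\sharp, \mH\rangle / (\norm{\mX_\sharp}_\mathrm{F}\norm{\mH}_\mathrm{F})$. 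We must lower bound $\mathbb{P}(gh \geq \tau')$ uniformly over $\mH \in \mathcal{A}_\delta$, and the only way this probability can be small is if $\rho$ is very negative (so that $gh$ is typically negative). So the first key step is to use the defining inequality of $\mathcal{A}_\delta$ in the form \eqref{eq:A_delta2} to produce a \emph{lower bound on $\rho$}: since $\sqrt{r}\,\delta\,\norm{\mH}_\mathrm{F} - \lambda\norm{\mathcal{P}_{T^\perp}(\mH)}_* + (1-\lambda)\,\mathrm{Re}\,\langle \mU_\sharp\mV_\sharp^*, \mH\rangle \geq 0$ and $\norm{\mathcal{P}_{T^\perp}(\mH)}_* \geq 0$, we get $\mathrm{Re}\,\langle \mU_\sharp\mV_\sharp^*, \mH\rangle \geq -\frac{\sqrt{r}\,\delta}{1-\lambda}\norm{\mH}_\mathrm{F}$. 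Then I would relate $\mathrm{Re}\,\langle \mX_\sharp, \mH\rangle$ to $\mathrm{Re}\,\langle \mU_\sharp\mV_\sharp^*, \mH\rangle$ using $\mX_\sharp = \mU_\sharp\mSigma_\sharp\mV_\sharp^\transpose$ and the bound $\sigma_{\min}(\mX_\sharp)\,\mathrm{Re}\,\langle\mU_\sharp\mV_\sharp^*,\mH\rangle \le \mathrm{Re}\,\langle\mX_\sharp,\mathcal{P}_T\mH\rangle$ together with $\mathrm{Re}\,\langle\mX_\sharp,\mathcal{P}_{T^\perp}\mH\rangle = 0$; combined with $\norm{\mX_\sharp}_\mathrm{F} \le \sqrt{r}\,\sigma_{\max}(\mX_\sharp)$ and the definition of the condition number $\kappa = \sigma_{\max}/\sigma_{\min}$, this yields a bound of the shape $\rho \geq -\frac{\kappa\,\delta}{1-\lambda}$, which under hypothesis \eqref{eq:cond_flat_rankr} is bounded away from $-1$ by a fixed amount.

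The second key step is a clean quantitative small-ball bound for the product $gh$ of two jointly Gaussian standard variables with correlation $\rho > -1$. I would condition on $g$: given $g$, $h$ is Gaussian with mean $\rho g$ and variance $1-\rho^2$, so on the event $\{g \geq s\}$ (for a suitable $s = s(\tau') > 0$) we have $\mathbb{P}(h \geq \tau'/s \mid g) \geq \mathbb{P}(\mathcal{N}(\rho s, 1-\rho^2) \geq \tau'/s)$, and since $\rho \geq \rho_0 > -1$ this is bounded below by $\Phi\big((\rho_0 s - \tau'/s)/\sqrt{1-\rho_0^2}\big)$. Choosing $s \asymp \sqrt{\tau'}$ balances the two terms, and a standard Gaussian tail lower bound $1-\Phi(x) \gtrsim e^{-x^2/2}/x$ then gives an overall lower bound of the form $c_1 e^{-c_2 \tau'}$. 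The numerical constants $20$ and $10$ in \eqref{eq:res:lemma:lbp_iidG} come from tracking these constants carefully (using the specific value $0.45(2.8-\kappa)$ from \eqref{eq:cond_flat_rankr}, which guarantees $\rho_0$ is a fixed negative number bounded away from $-1$ uniformly — note $\kappa < 2.8$ is forced), and I would not grind through that optimization here beyond noting it is elementary.

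The main obstacle is the first step: carefully converting the membership $\mH \in \mathcal{A}_\delta$ into a usable lower bound on the correlation $\rho$ between $\langle\mX_\sharp,\mPhi_m\rangle$ and $\langle\mPhi_m,\mH\rangle$, and in particular passing from $\mathrm{Re}\,\langle\mU_\sharp\mV_\sharp^*,\mH\rangle$ to $\mathrm{Re}\,\langle\mX_\sharp,\mH\rangle$ without losing more than a factor of $\kappa$. The decomposition $\mH = \mathcal{P}_T\mH + \mathcal{P}_{T^\perp}\mH$ is essential here because $\mX_\sharp \in T$, so $\langle\mX_\sharp,\mathcal{P}_{T^\perp}\mH\rangle = 0$ and only the tangent component matters; on $T$ one writes $\mathcal{P}_T\mH = \mU_\sharp\tilde\mV^\transpose + \tilde\mU\mV_\sharp^\transpose$ and checks that $\langle\mX_\sharp,\mathcal{P}_T\mH\rangle$ and $\langle\mU_\sharp\mV_\sharp^\transpose,\mathcal{P}_T\mH\rangle$ are comparable up to $\sigma_{\min}$ and $\sigma_{\max}$. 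Once this deterministic geometry is in place, the probabilistic part is routine two-dimensional Gaussian analysis. The condition \eqref{eq:cond_flat_rankr} is exactly what is needed to make $\rho_0$ a usable constant, and this is why the final bound is dimension- and rank-free, as required for plugging into Theorem~\ref{thm:armtx}.
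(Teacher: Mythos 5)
The overall architecture of your proposal is right---reduce to a fixed $\mH$, recognize $(\langle\mX_\sharp,\mPhi\rangle,\langle\mPhi,\mH\rangle)$ as a correlated bivariate Gaussian, lower bound the correlation $\rho$ using membership in $\mathcal{A}_\delta$ together with \eqref{eq:cond_flat_rankr}, and then invoke a small-ball bound for the product of correlated Gaussians. The paper does exactly this, with the small-ball step packaged as Lemma~\ref{lemma:gaussprod} (which uses the polar decomposition $g_1 g_2 = \frac{1+\rho}{2}w_1^2 - \frac{1-\rho}{2}w_2^2$ and the independence of angle and radius rather than your conditional argument; both work, the polar form simply gives the clean constants $10$ and $20$).

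The genuine gap is in your deterministic step for lower bounding $\rho$. The inequality you propose,
\[
\sigma_{\min}(\mX_\sharp)\,\mathrm{Re}\,\langle\mU_\sharp\mV_\sharp^*,\mH\rangle \;\le\; \mathrm{Re}\,\langle\mX_\sharp,\mathcal{P}_T\mH\rangle,
\]
is false in general, and the conclusion $\rho\ge -\kappa\delta/(1-\lambda)$ that you derive from it fails. Writing $a_i := (\mU_\sharp^\transpose \mH \mV_\sharp)_{ii}$, we have $\langle\mU_\sharp\mV_\sharp^\transpose,\mH\rangle = \sum_i a_i$ and $\langle\mX_\sharp,\mH\rangle = \sum_i\sigma_i a_i$; when the $a_i$ have mixed signs these two sums are not comparable by any $\sigma$-factor. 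Concretely, take $r=2$, $\sigma_1=2$, $\sigma_2=1$, and $\mH$ diagonal in the singular basis with $a_1=-1$, $a_2=1$, $\mathcal{P}_{T^\perp}\mH=\mzero$: then $\mathrm{Re}\,\langle\mU_\sharp\mV_\sharp^\transpose,\mH\rangle=0$ (so $\delta=0$ works in \eqref{eq:A_delta2}), yet $\mathrm{Re}\,\langle\mX_\sharp,\mH\rangle=-1$ and $\rho = -1/\sqrt{10}<0$, contradicting $\rho\ge -\kappa\cdot 0/(1-\lambda)=0$. So the $\mathcal{A}_\delta$ constraint on $\langle\mU_\sharp\mV_\sharp^\transpose,\mH\rangle$ alone cannot pin $\rho$ down; the spread of the singular values contributes a negative term that must be controlled separately.

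The paper handles this by splitting around the \emph{mean} singular value $\bar\sigma = \norm{\mX_\sharp}_*/r$:
\[
\rho \;\ge\; \frac{\bar\sigma\,\mathrm{Re}\,\langle\mU_\sharp\mV_\sharp^\transpose,\mH\rangle}{\norm{\mX_\sharp}_\mathrm{F}\norm{\mH}_\mathrm{F}}\;-\;\frac{\bigl|\langle \mX_\sharp - \bar\sigma\,\mU_\sharp\mV_\sharp^\transpose,\mH\rangle\bigr|}{\norm{\mX_\sharp}_\mathrm{F}\norm{\mH}_\mathrm{F}}.
\]
The first term uses the $\mathcal{A}_\delta$ constraint as in your plan; the second is bounded via Cauchy--Schwarz using $\norm{\mX_\sharp-\bar\sigma\,\mU_\sharp\mV_\sharp^\transpose}_\mathrm{F}\le \tfrac{\sqrt{r}}{2}(\sigma_1-\sigma_r)$, which is precisely the variance-type control of the singular value spread. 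This is where the $(2.8-\kappa)$ term in \eqref{eq:cond_flat_rankr} actually enters; it is not just a convenience for making $\kappa\delta/(1-\lambda)<1$. Your proof plan as written omits the second term entirely, so the step from the $\mathcal{A}_\delta$ membership to a lower bound on $\rho$ would need to be replaced with this (or an equivalent) decomposition before the argument goes through.
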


\begin{lemma}
\label{lemma:ubrc_iidG}
Suppose the hypotheses in Theorem~\ref{thm:main_iidG} hold. Then
\begin{equation}
\label{eq:res:lemma:ubrc_iidG}
\mathfrak{C}_M(\mathcal{A}_\delta)
\leq \frac{C(1-\lambda+\delta) \norm{\mX_\sharp}_\mathrm{F} \sqrt{r(d_1+d_2) \log(d_1+d_2)}}{\lambda}
\end{equation}
for a numerical constant $C$.
\end{lemma}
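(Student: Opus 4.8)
The plan is to exploit that $\mathcal{A}_\delta$ is a cone — every term in the alternative description \eqref{eq:A_delta2} is positively homogeneous of degree one in $\mH$ — so that the supremum in \eqref{eq:defRC} may be restricted to $\{\mH\in\mathcal{A}_\delta:\norm{\mH}_\mathrm{F}=1\}$, and then, writing $\mZ:=\sum_{m=1}^M\epsilon_m\langle\mX_\sharp,\mPhi_m\rangle\mPhi_m$ (all quantities are real here), to observe that $\mathfrak{C}_M(\mathcal{A}_\delta)=M^{-1/2}\,\E\sup_{\mH\in\mathcal{A}_\delta,\,\norm{\mH}_\mathrm{F}=1}\langle\mZ,\mH\rangle$. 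For such $\mH$, \eqref{eq:A_delta2} together with the Cauchy--Schwarz bound $\langle\mU_\sharp\mV_\sharp^*,\mH\rangle\le\norm{\mU_\sharp\mV_\sharp^*}_\mathrm{F}=\sqrt{r}$ forces $\norm{\mathcal{P}_{T^\perp}(\mH)}_*\le\sqrt{r}(1-\lambda+\delta)/\lambda$, while $\mathcal{P}_T(\mH)$ has rank at most $2r$ and Frobenius norm at most $1$, hence $\norm{\mathcal{P}_T(\mH)}_*\le\sqrt{2r}$. Splitting $\langle\mZ,\mH\rangle=\langle\mZ,\mathcal{P}_T(\mH)\rangle+\langle\mZ,\mathcal{P}_{T^\perp}(\mH)\rangle$ and using operator/nuclear duality gives $\langle\mZ,\mH\rangle\le\sqrt{r}\big(\sqrt{2}+\tfrac{1-\lambda+\delta}{\lambda}\big)\norm{\mZ}$. (One can sharpen the tangent term to $\norm{\mathcal{P}_T(\mZ)}_\mathrm{F}$, but it is in any case dominated by the claimed rate and, since $\lambda$ is a fixed constant in Theorem~\ref{thm:main_iidG}, the $O(1)$ tangent piece is absorbed into the factor $\tfrac{1-\lambda+\delta}{\lambda}$.) It therefore suffices to show $\E\norm{\mZ}\lesssim\norm{\mX_\sharp}_\mathrm{F}\sqrt{M(d_1+d_2)\log(d_1+d_2)}$.

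This operator-norm estimate is the crux, and the difficulty is that $\mZ$, though a sum of independent mean-zero matrices, has summands $\epsilon_m\langle\mX_\sharp,\mPhi_m\rangle\mPhi_m$ that are degree-two Gaussian chaoses — not subgaussian, with no almost-sure operator-norm bound — so matrix Bernstein does not apply off the shelf. I would split each measurement along $\mX_\sharp$: with $g_m:=\langle\mX_\sharp,\mPhi_m\rangle/\norm{\mX_\sharp}_\mathrm{F}\sim\mathcal{N}(0,1)$ and $\mPhi_m^\perp:=\mathcal{P}_{\mX_\sharp^\perp}(\mPhi_m)$, the matrices $\{\mPhi_m^\perp\}$ are i.i.d.\ Gaussian and (by joint Gaussianity and orthogonality to $\mX_\sharp$) independent of $\{g_m\}$, and
\[
\mZ=\norm{\mX_\sharp}_\mathrm{F}\Big[\Big(\sum_m\epsilon_m g_m^2\Big)\frac{\mX_\sharp}{\norm{\mX_\sharp}_\mathrm{F}}+\sum_m\epsilon_m g_m\mPhi_m^\perp\Big].
\]
The first bracketed term has operator norm at most $\big|\sum_m\epsilon_m g_m^2\big|$, a centered sub-exponential sum with $\E\big|\sum_m\epsilon_m g_m^2\big|\lesssim\sqrt{M}$ by Bernstein; conditionally on $(g_m,\epsilon_m)_m$ the second term is a centered Gaussian matrix whose covariance is $\big(\sum_m g_m^2\big)$ times the projection onto the orthogonal complement of $\mX_\sharp$, so by Gaussian comparison (Slepian/Sudakov--Fernique) its operator norm is stochastically dominated by $\big(\sum_m g_m^2\big)^{1/2}$ times the operator norm of a standard $d_1\times d_2$ Gaussian matrix, giving $\E\norm{\sum_m\epsilon_m g_m\mPhi_m^\perp}\lesssim\E\big[(\sum_m g_m^2)^{1/2}\big]\,(\sqrt{d_1}+\sqrt{d_2})\lesssim\sqrt{M}\,(\sqrt{d_1}+\sqrt{d_2})$. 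Hence $\E\norm{\mZ}\lesssim\norm{\mX_\sharp}_\mathrm{F}\sqrt{M(d_1+d_2)}$ — in fact without a logarithm; the $\log(d_1+d_2)$ in \eqref{eq:res:lemma:ubrc_iidG} is the harmless price of the more robust route that instead bounds $\norm{\mZ}$ via a heavy-tailed matrix concentration inequality (matrix Bernstein after truncation, or the noncommutative Rosenthal inequality), whose leading term dominates precisely because $M$ is as large as Theorem~\ref{thm:main_iidG} requires.

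Combining the two bounds, $\mathfrak{C}_M(\mathcal{A}_\delta)\le M^{-1/2}\sqrt{r}\big(\sqrt{2}+\tfrac{1-\lambda+\delta}{\lambda}\big)\E\norm{\mZ}\lesssim\big(\sqrt{2}+\tfrac{1-\lambda+\delta}{\lambda}\big)\norm{\mX_\sharp}_\mathrm{F}\sqrt{r(d_1+d_2)\log(d_1+d_2)}$, which is the assertion once the $O(1)$ tangent term is absorbed. I expect the operator-norm estimate to be the only real work: the ``fourth-order'' structure means $\mZ$ is a matrix-valued quadratic — not linear — form in the Gaussians, so its summands are heavy-tailed and the standard matrix-concentration machinery must be handled with care; the decomposition $\mPhi_m=g_m\,\mX_\sharp/\norm{\mX_\sharp}_\mathrm{F}+\mPhi_m^\perp$, which peels off the scalar chaos $\sum_m\epsilon_m g_m^2$ and leaves a conditionally Gaussian matrix, is the device that makes the sharp $\sqrt{d_1+d_2}$ (rather than $d_1+d_2$) scaling attainable.
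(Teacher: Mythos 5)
Your proposal is correct, and it follows a genuinely different route from the paper's on the one step that matters.

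Both arguments begin the same way: normalize $\norm{\mH}_\mathrm{F}=1$, split $\mH$ into its tangent and orthogonal parts, and use the dual description \eqref{eq:A_delta2} to bound $\norm{\mathcal{P}_{T^\perp}(\mH)}_*\le\sqrt{r}(1-\lambda+\delta)/\lambda$. The paper keeps the two pieces separate: it bounds the tangent contribution by $\bigl(\E\norm{\mathcal{P}_T(\mPhi)\langle\mPhi,\mX_\sharp\rangle}_\mathrm{F}^2\bigr)^{1/2}$, which it computes exactly, and pairs the $T^\perp$ piece in the nuclear/operator duality. You instead bound everything by $\norm{\mZ}$ times a nuclear-norm budget, using $\norm{\mathcal{P}_T(\mH)}_*\le\sqrt{2r}$; this is marginally lossier in the constant but costs nothing at the level of the stated inequality (your ``absorption'' of the $\sqrt{2}$ into $(1-\lambda+\delta)/\lambda$ is sound because in Theorem~\ref{thm:main_iidG} one takes $\lambda=0.9-\delta$, so $(1-\lambda+\delta)/\lambda$ is bounded below by a numerical constant). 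The real divergence is in bounding $\E\norm{\mZ}$. The paper notices that $\langle\mPhi_m,\mX_\sharp\rangle=\langle\mathcal{P}_T(\mPhi_m),\mX_\sharp\rangle$ and that $\mathcal{P}_T(\mPhi_m)$ and $\mathcal{P}_{T^\perp}(\mPhi_m)$ are independent, so the scalar can be replaced by $\langle\mathcal{P}_T(\mPhi_m'),\mX_\sharp\rangle$ with an independent copy $\mPhi_m'$; it then applies the noncommutative Rosenthal inequality (Theorem~\ref{thm:ncrosenthal}) with $p=\log(d_1+d_2)$, which is where the $\sqrt{\log(d_1+d_2)}$ factor in the statement comes from. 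Your route peels off only the one-dimensional $\mathrm{span}(\mX_\sharp)$ direction, obtaining a scalar Rademacher-weighted chaos $\sum_m\epsilon_m g_m^2$ (handled by a variance bound) plus a matrix that is, conditionally on $(g_m,\epsilon_m)$, a centered Gaussian with covariance a multiple of a projection; the Sudakov--Fernique comparison then gives $\E\norm{\mZ}\lesssim\norm{\mX_\sharp}_\mathrm{F}\sqrt{M(d_1+d_2)}$ with no logarithm. So you actually prove a sharper estimate than \eqref{eq:res:lemma:ubrc_iidG}, with more elementary tools (Jensen, Bernstein for scalar chaos, Gaussian comparison) and without invoking noncommutative moment inequalities. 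The cost is that your argument is specific to the Gaussian measurement model (the decomposition $\mPhi_m=g_m\mX_\sharp/\norm{\mX_\sharp}_\mathrm{F}+\mPhi_m^\perp$ with independence of the two pieces is a Gaussian phenomenon), whereas the Rosenthal route is the paper's uniform device across the rank-$1$ and Gaussian sections.
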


To prove Theorem~\ref{thm:main_iidG}, we only need to apply the above estimates in Theorem~\ref{thm:armtx}. 
We first show that the assumptions of Theorem~\ref{thm:main_iidG} are sufficient to invoke Theorem~\ref{thm:armtx}. 
Following the discussion in Section \ref{subsec:general}, the condition \eqref{eq:phasea0b0} can be satisfied without loss of generality by flipping the sign of $\mX_0$ if necessary.
Fix $\tau'$ to a positive constant (e.g., $\tau' = 0.1$). Let $\tau = \tau' \norm{\mX_\sharp}_\mathrm{F}$. 
Then Lemma~\ref{lemma:lbp_iidG} implies that $\tau P_\tau(\mathcal{A}_\delta) \geq c \norm{\mX_\sharp}_\mathrm{F}$ for a numerical constant $c > 0$.  
Choosing $\lambda = 0.9 - \delta$ makes the right-hand side of \eqref{eq:res:lemma:ubrc_iidG} an increasing function of $\delta$.  
Then, by Lemma~\ref{lemma:ubrc_iidG}, the Rademacher complexity $\mathfrak{C}_M(\mathcal{A}_\delta)$ is upper-bounded by $\sqrt{r(d_1+d_2)\log(d_1+d_2)}$ up to a constant solely determined by $\delta$.  
Therefore, \eqref{eq:sampl_comp_iidG} implies that \eqref{eq:lbM} holds whenever $t\tau'$ is dominated by $\sqrt{M}$. 
We can choose $t$ so that the probability of failure is at most $e^{-2t^2} = e^{-c M}$, for some numerical constant $c>0$.
\qed

\subsection{Proof of Theorem~\ref{thm:main_rank1}}

Theorem~\ref{thm:main_rank1} considers recovery of complex-valued rank-$1$ matrices.  
We apply Theorem~\ref{thm:armtx} for $\mathcal{H} = \mathcal{A}_\delta \cap \mathcal{R}_\delta$ to prove Theorem~\ref{thm:main_rank1}.  
The following lemmas, proved in Appendix~\ref{sec:proof:lemma:lbp_rank1} and Appendix~\ref{sec:proof:lemma:ubrc_rank1}, respectively provide a lower bound on $P_\tau(\mathcal{H})$ and an upper bound on $\mathfrak{C}_M(\mathcal{H})$.

\begin{lemma}
\label{lemma:lbp_rank1}
Suppose the hypotheses in Theorem~\ref{thm:main_rank1} hold.  Suppose that $\delta + \lambda < 1$ and $\delta \leq 0.2$.  
Then there exists a numerical constant $\tau' > 0$ such that
\[
\inf_{\mH \in \mathcal{A}_\delta \cap \mathcal{R}_\delta}
\mathbb{P}\Big(
\mathrm{Re}(\vb^* \mX_\sharp^* \va \va^* \mH \vb)
\geq \tau' \norm{\mX_\sharp}_\mathrm{F} \norm{\mH}_\mathrm{F}
\Big)
\geq C_{\tau'},
\]
where $C_{\tau'}$ is a positive numerical constant that only depends on $\tau'$.
\end{lemma}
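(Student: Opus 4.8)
The plan is to establish a uniform small-ball lower bound for the random variable $Z := \mathrm{Re}(\vb^*\mX_\sharp^*\va\,\va^*\mH\vb)$ over $\mH \in \mathcal{A}_\delta \cap \mathcal{R}_\delta$, following the Paley--Zygmund-type strategy used for complex phase retrieval in \cite{bahmani2017phase}. The standard recipe is: (i) compute $\E Z$ and show it is bounded below by a constant times $\norm{\mX_\sharp}_\mathrm{F}\norm{\mH}_\mathrm{F}$ \emph{after} we exploit membership in $\mathcal{A}_\delta$ and $\mathcal{R}_\delta$; (ii) bound a higher moment, e.g.\ $\E Z^2$ or $\E|Z|^p$ for small $p>1$, from above by a constant times $(\norm{\mX_\sharp}_\mathrm{F}\norm{\mH}_\mathrm{F})^2$; then (iii) invoke Paley--Zygmund to get $\mathbb{P}(Z \geq \tau'\norm{\mX_\sharp}_\mathrm{F}\norm{\mH}_\mathrm{F}) \geq C_{\tau'}$ uniformly. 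Without loss of generality we normalize $\norm{\mX_\sharp}_\mathrm{F} = \sigma_\sharp = 1$ and $\norm{\mH}_\mathrm{F} = 1$, writing $\mX_\sharp = \vu_\sharp\vv_\sharp^*$, so that $Z = \mathrm{Re}\big((\va^*\vu_\sharp)\overline{(\vv_\sharp^*\vb)}\cdot \va^*\mH\vb\big) = \mathrm{Re}\big(\overline{\langle\va\vb^*,\mX_\sharp\rangle}\,\langle\va\vb^*,\mH\rangle\big)$.

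The first main step is the expectation computation. Decompose $\mH = \beta\mX_\sharp + \mH^\flat$ where $\beta = \langle\mX_\sharp,\mH\rangle$ (so $|\beta|^2 + \norm{\mH^\flat}_\mathrm{F}^2 = 1$) and $\mH^\flat \perp \mX_\sharp$. Since $\va,\vb$ are independent complex Gaussian, $\E[\overline{\langle\va\vb^*,\mX_\sharp\rangle}\langle\va\vb^*,\mH\rangle]$ expands using $\E[\va\va^*\otimes\va\va^*$-type] fourth-moment identities for complex Gaussians; the cross terms involving $\mH^\flat$ contribute only through $\trace$-type quantities while the $\beta\mX_\sharp$ piece contributes a term like $2|\beta|^2$ (the analogue of the real-case ``$3$'' becomes ``$2$'' for circularly-symmetric complex Gaussians, plus rank-$1$ structure). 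The upshot is $\E Z = \mathrm{Re}(\beta)\cdot(\text{const}) + \mathrm{Re}(\text{cross term})$, and the key point is that the real part of the cross term can be controlled, while $\mathrm{Im}(\beta)$ is exactly what $\mathcal{R}_\delta$ tames: by the defining inequality of $\mathcal{R}_\delta$ (with $\norm{\mX_\sharp}_\mathrm{F} = 1$), $\norm{\mH^\flat}_\mathrm{F} \geq \frac{\sqrt{1-\delta^2}}{\delta}|\mathrm{Im}(\beta)|$, so $\mathrm{Im}(\beta)$ is small unless $\norm{\mH^\flat}_\mathrm{F}$ is large; combined with the $\mathcal{A}_\delta$ constraint \eqref{eq:A_delta2}, which for rank-$1$ reads $\delta\norm{\mH}_\mathrm{F} - \lambda\norm{\mathcal{P}_{T^\perp}\mH}_* + (1-\lambda)\mathrm{Re}\langle\vu_\sharp\vv_\sharp^*,\mH\rangle \geq 0$, i.e.\ $\mathrm{Re}(\beta) \geq \frac{\lambda\norm{\mathcal{P}_{T^\perp}\mH}_* - \delta}{1-\lambda}$. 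Here is where $\delta \leq 0.2$ and $\delta + \lambda < 1$ enter: they guarantee that either $\mathrm{Re}(\beta)$ is bounded below by a positive constant, or $\norm{\mathcal{P}_{T^\perp}\mH}_*$ (hence a chunk of $\norm{\mH^\flat}_\mathrm{F}$) is small, and in either branch one shows $\E Z \geq c_0 > 0$. A short case analysis splitting on whether $\mathrm{Re}(\beta) \geq \gamma$ for a threshold $\gamma$ will be needed.

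The second step is the second-moment upper bound: $\E Z^2 \leq \E|\langle\va\vb^*,\mX_\sharp\rangle|^2|\langle\va\vb^*,\mH\rangle|^2$, which is a product of Gaussian quadratic forms. Conditioning on $\vb$ and using that $\va^*\vu_\sharp$ and $\va^*\mH\vb$ are jointly complex Gaussian in $\va$, the inner expectation is a fourth-moment of correlated complex Gaussians, giving something like $|\vv_\sharp^*\vb|^2\big(|\vb^*\mH^*\mH\vb| + |\vv_\sharp^*\mH\vb|^2 / \dots\big)$; taking expectation over $\vb$ yields a bound of the form $C(\norm{\mH}_\mathrm{F}^2 + |\langle\mX_\sharp,\mH\rangle|^2 + \dots) \leq C'$, a numerical constant. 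This part is routine moment bookkeeping for complex Gaussians and I would cite the tensor-expectation computations the authors advertise in the appendix. Then Paley--Zygmund gives $\mathbb{P}(Z \geq \theta\,\E Z) \geq (1-\theta)^2(\E Z)^2/\E Z^2 \geq (1-\theta)^2 c_0^2/C'$, and choosing $\theta = 1/2$, $\tau' = c_0/2$ finishes it with $C_{\tau'} = c_0^2/(4C')$.

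The main obstacle I anticipate is Step~1: showing $\E Z \geq c_0 > 0$ uniformly over $\mathcal{A}_\delta \cap \mathcal{R}_\delta$. The difficulty is that $\E Z$ is genuinely the real part of a complex quantity whose imaginary part (coming from $\mathrm{Im}\langle\mX_\sharp,\mH\rangle$ and from cross terms between $\mH^\flat$ and the rank-$1$ structure) is not a priori controlled; it is \emph{only} controlled because $\mH$ lies in $\mathcal{R}_\delta$, and intertwining that quadratic constraint with the linear $\mathcal{A}_\delta$ constraint to rule out all ``bad'' directions requires care. One must also correctly handle the part of $\mH^\flat$ that lies in the tangent space $T$ (which $\mathcal{R}_\delta$ controls but $\mathcal{A}_\delta$ via $\mathcal{P}_{T^\perp}$ does not), checking that such components either help $\E Z$ or are forced small. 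I expect the cleanest route is to parametrize $\mH = \beta\vu_\sharp\vv_\sharp^* + \vu_\sharp\vp^* + \vq\vv_\sharp^* + \mathcal{P}_{T^\perp}\mH$ with $\vp\perp\vv_\sharp$, $\vq\perp\vu_\sharp$, compute $\E Z$ exactly in these coordinates, and then optimize the resulting quadratic-in-real-parameters expression subject to the two constraints, verifying the minimum is positive precisely when $\delta \leq 0.2$.
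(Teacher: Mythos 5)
Your plan is a genuinely different strategy from the paper's, and unfortunately it has a fatal gap. The paper does not use Paley--Zygmund. It instead rewrites $Z := \mathrm{Re}(\vb^*\vv_\sharp\vu_\sharp^*\va\va^*\mH\vb)$ in distribution by factoring $\vu_\sharp^*\va = r_1 e^{\mathfrak{i}\phi_1}$ and $\vv_\sharp^*\vb = r_2 e^{\mathfrak{i}\phi_2}$ into independent Rayleigh radii and uniform phases, absorbing the phases into the orthogonal residuals by rotation invariance. This gives
\[
Z \stackrel{d}{=} r_1^2 r_2^2\,\mathrm{Re}(\vu_\sharp^*\mH\vv_\sharp) + r_1 r_2^2\, z_1 + r_1^2 r_2\, z_2 + r_1 r_2\, z_3,
\]
where $z_1, z_2$ are centered Gaussians and $z_3$ is a centered Gaussian bilinear form (chaos), all independent of $r_1, r_2$. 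One then conditions on $r_1, r_2 \in [\alpha,\beta]$ and splits into two cases according to whether $\norm{\mP_{\vu_\sharp^\perp}\mH\mP_{\vv_\sharp^\perp}}$ is small or large, obtaining the small-ball probability directly from Gaussian tail bounds and Lata{\l}a's two-sided estimates for Gaussian chaos.

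Regarding your Step~1: the first-moment calculation is both overcomplicated and ultimately irrelevant. Since $Z$ depends on $\va$ only through $\va\va^*$ and on $\vb$ only through $\vb\vb^*$, with $\E\va\va^*=\mId_{d_1}$ and $\E\vb\vb^*=\mId_{d_2}$, one gets simply $\E Z = \mathrm{Re}\langle\mX_\sharp,\mH\rangle$; there are no fourth-moment ``cross terms.'' The real problem is that this quantity is \emph{not} bounded below by a positive constant over $\mathcal{A}_\delta\cap\mathcal{R}_\delta$. From \eqref{eq:A_delta2} (with $r=1$, $\norm{\mH}_\mathrm{F}=1$) one only gets $\mathrm{Re}\langle\mX_\sharp,\mH\rangle \ge -\delta/(1-\lambda)$, which is strictly negative, and $\mathcal{R}_\delta$ constrains $\mathrm{Im}\langle\mX_\sharp,\mH\rangle$, not the real part. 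So there are feasible $\mH$ with $\E Z < 0$, and Paley--Zygmund yields nothing for those directions; the ``verify the minimum is positive when $\delta \le 0.2$'' plan fails because the minimum is negative. This is precisely why the paper's decomposition is needed: when the drift $r_1^2 r_2^2\,\mathrm{Re}\langle\mX_\sharp,\mH\rangle$ is negative, the constraints $\norm{\mH}_\mathrm{F}=1$ and $\mH\in\mathcal{R}_\delta$ force the variances of the residual Gaussian terms $z_1, z_2, z_3$ to be bounded below, and the small-ball probability comes from their fluctuations, not from a positive mean. Any approach anchored on $\E Z > 0$ cannot handle this regime.
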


\begin{lemma}
\label{lemma:ubrc_rank1}
Suppose the hypotheses in Theorem~\ref{thm:main_rank1} hold. Then
\begin{equation}
\label{eq:res:lemma:ubrc_rank1}
\mathfrak{C}_M(\mathcal{A}_\delta)
\leq \frac{C (1-\lambda+\delta) \norm{\mX_\sharp}_\mathrm{F} \sqrt{d_1+d_2} \, \log M}{\lambda}
\end{equation}
for a numerical constant $C$.
\end{lemma}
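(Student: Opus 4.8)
The plan is to mirror the treatment of the Gaussian-measurement case (Lemma~\ref{lemma:ubrc_iidG}): rewrite the Rademacher average as the expected norm of a single random matrix, split along the tangent space $T$ at $\mX_\sharp$, and estimate the Frobenius norm of the $T$-component and the operator norm of the $T^\perp$-component separately. The extra factor $\log M$ relative to the Gaussian case will come entirely from the heavy tails of the rank-$1$ matrices $\mPhi_m=\va_m\vb_m^*$. First, using $\langle\mX_\sharp,\mPhi_m\rangle\langle\mPhi_m,\mH\rangle=\langle\langle\mPhi_m,\mX_\sharp\rangle\mPhi_m,\mH\rangle$, the sum in \eqref{eq:defRC} equals $\mathrm{Re}\langle\mZ,\mH\rangle$ with $\mZ:=\sum_{m=1}^M\epsilon_m\langle\mPhi_m,\mX_\sharp\rangle\mPhi_m=\sum_{m=1}^M\epsilon_m\gamma_m\va_m\vb_m^*$ and $\gamma_m:=\va_m^*\mX_\sharp\vb_m$, so that $\mathfrak{C}_M(\mathcal{A}_\delta)=M^{-1/2}\,\E\sup_{\mH\in\mathcal{A}_\delta\setminus\{\mathbf{0}\}}\mathrm{Re}\langle\mZ,\mH\rangle/\norm{\mH}_\mathrm{F}$.

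From the reformulation \eqref{eq:A_delta2} of $\mathcal{A}_\delta$ with $r=1$, the Cauchy--Schwarz inequality together with $\norm{\mathcal{P}_T(\mH)}_\mathrm{F}\le\norm{\mH}_\mathrm{F}$ gives $\norm{\mathcal{P}_{T^\perp}(\mH)}_*\le\tfrac{\delta+1-\lambda}{\lambda}\norm{\mH}_\mathrm{F}$ for every $\mH\in\mathcal{A}_\delta$. Writing $\langle\mZ,\mH\rangle=\langle\mathcal{P}_T(\mZ),\mathcal{P}_T(\mH)\rangle+\langle\mathcal{P}_{T^\perp}(\mZ),\mathcal{P}_{T^\perp}(\mH)\rangle$ and using Cauchy--Schwarz on $T$ and operator/nuclear-norm duality on $T^\perp$ yields
\[
  \mathfrak{C}_M(\mathcal{A}_\delta)\le\frac{1}{\sqrt M}\,\E\norm{\mathcal{P}_T(\mZ)}_\mathrm{F}+\frac{\delta+1-\lambda}{\lambda}\cdot\frac{1}{\sqrt M}\,\E\norm{\mathcal{P}_{T^\perp}(\mZ)},
\]
so it suffices to show $\E\norm{\mathcal{P}_T(\mZ)}_\mathrm{F}\lesssim\norm{\mX_\sharp}_\mathrm{F}\sqrt{M(d_1+d_2)}$ and $\E\norm{\mathcal{P}_{T^\perp}(\mZ)}\lesssim\norm{\mX_\sharp}_\mathrm{F}\sqrt{M(d_1+d_2)}\,\log M$ (the first being of lower order after the $M^{-1/2}$ scaling).

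Here I would exploit the rank-$1$ structure. With $\mX_\sharp=\sigma_\sharp\vu_\sharp\vv_\sharp^*$, $g_m:=\va_m^*\vu_\sharp$, $h_m:=\vv_\sharp^*\vb_m$ one has $\gamma_m=\sigma_\sharp g_mh_m$; decomposing $\va_m=\overline{g_m}\,\vu_\sharp+\va_m^\perp$, $\vb_m=\overline{h_m}\,\vv_\sharp+\vb_m^\perp$ with $\va_m^\perp\perp\vu_\sharp$, $\vb_m^\perp\perp\vv_\sharp$, the independence of the components of a complex Gaussian vector along $\vu_\sharp$ and along its orthogonal complement (and likewise for $\vv_\sharp$) shows that $w_m:=\epsilon_m g_mh_m$ are i.i.d., mean zero, with $\E|w_m|^2=1$, sub-exponential, and jointly independent of $\{(\va_m^\perp,\vb_m^\perp)\}_m$. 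Moreover $\mathcal{P}_{T^\perp}(\mZ)=\sigma_\sharp\sum_m w_m\va_m^\perp\vb_m^{\perp*}$ exactly, while $\mathcal{P}_T(\mZ)$ consists of the three remaining (``cross'' and ``diagonal'') rank-$\le1$ pieces and lies in $\mZ\vv_\sharp\vv_\sharp^*+\vu_\sharp\vu_\sharp^*\mZ$. For the $T$-component, $\norm{\mathcal{P}_T(\mZ)}_\mathrm{F}\le 2\norm{\mZ\vv_\sharp}_2+\norm{\vu_\sharp^*\mZ}_2$, the Rademacher signs annihilate all cross terms in the second moments, and the Gaussian moment identities of the appendix give $\E\norm{\mZ\vv_\sharp}_2^2\vee\E\norm{\vu_\sharp^*\mZ}_2^2\lesssim M(d_1\vee d_2)\norm{\mX_\sharp}_\mathrm{F}^2$; hence $\E\norm{\mathcal{P}_T(\mZ)}_\mathrm{F}\lesssim\norm{\mX_\sharp}_\mathrm{F}\sqrt{M(d_1+d_2)}$ with no logarithmic loss.

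The main obstacle is the operator norm of $\mathcal{P}_{T^\perp}(\mZ)=\sigma_\sharp\sum_m w_m\va_m^\perp\vb_m^{\perp*}$: its summands are far from bounded, a typical one having operator norm $|w_m|\norm{\va_m^\perp}_2\norm{\vb_m^\perp}_2\asymp\sqrt{d_1d_2}$, much larger than the standard-deviation scale $\sqrt{M(d_1+d_2)}$ and heavy-tailed, so neither the scalar nor the matrix Bernstein inequality applies off the shelf. I would truncate: set $K\asymp\log M$ and split $w_m=w_m\mathbbm{1}\{|w_m|\le K\}+w_m\mathbbm{1}\{|w_m|>K\}$. The large part vanishes with probability $1-M^{-c}$ (since $|w_m|$ is sub-exponential), and its rare contribution, crudely bounded by $\sum_m|w_m|\norm{\va_m^\perp}_2\norm{\vb_m^\perp}_2$, has negligible expectation. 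On the truncated part I would condition on $(w_m)_m$ and, after a further benign truncation of $\norm{\va_m^\perp}_2,\norm{\vb_m^\perp}_2$ at scale $\sqrt{d_1},\sqrt{d_2}$ (valid with overwhelming probability once $\log M\lesssim d_1\wedge d_2$), apply matrix Bernstein: the summands then have operator norm $\lesssim K\sqrt{d_1d_2}$, while the conditional variance proxies are $\lesssim M(d_1\vee d_2)\norm{\mX_\sharp}_\mathrm{F}^2$, again by the Gaussian moment identities. This gives $\E\norm{\mathcal{P}_{T^\perp}(\mZ)}\lesssim\norm{\mX_\sharp}_\mathrm{F}\big(\sqrt{M(d_1+d_2)\log(d_1+d_2)}+\sqrt{d_1d_2}\,\log(d_1+d_2)\log M\big)$, and in the sample regime $M\gtrsim(d_1+d_2)\log^2 M$ under which \eqref{eq:sampl_comp_rank1} is invoked, both terms are $\lesssim\norm{\mX_\sharp}_\mathrm{F}\sqrt{M(d_1+d_2)}\log M$. (Alternatively, as hinted in the introduction, one may avoid the conditioning and apply the noncommutative Rosenthal inequality \cite{junge2013noncommutative} to the sub-exponential summands with Schatten exponent $p\asymp\log(d_1+d_2)$.) Reconciling the $\sqrt{d_1d_2}$-sized summands with the target $\sqrt{d_1+d_2}\,\log M$ scaling through this truncation is the crux of the argument.
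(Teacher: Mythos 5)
Your outline is correct, but for the dominant $\mathcal{P}_{T^\perp}$-term you take a genuinely different route than the paper. Both proofs start from the same $T$/$T^\perp$ split of $\mathcal{A}_\delta$ via \eqref{eq:A_delta2}, the same reduction to $\E\|\frac{1}{\sqrt M}\sum_m\epsilon_m\mathcal{P}_T(\cdot)\|_\mathrm{F}$ and $\E\|\frac{1}{\sqrt M}\sum_m\epsilon_m\mathcal{P}_{T^\perp}(\cdot)\|\cdot\frac{1-\lambda+\delta}{\lambda}$, and essentially the same second-moment computation for the $T$-part (the paper does it via a Pythagorean decomposition of $\mathcal{P}_T$, arriving at $r(d_1+d_2-r)+2$; your $\|\mZ\vv_\sharp\|_2$, $\|\vu_\sharp^*\mZ\|_2$ bookkeeping gets to the same order). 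The divergence is in the operator-norm term: the paper bypasses truncation entirely and applies the noncommutative Rosenthal inequality (Theorem~\ref{thm:ncrosenthal}) to $\mY_m=\epsilon_m\va_m^*\vu_\sharp\vv_\sharp^*\vb_m\,\mP_{\vu_\sharp^\perp}\va_m\vb_m^*\mP_{\vv_\sharp^\perp}$, computing $\E\mY_m\mY_m^*$ via Lemma~\ref{lemma:exp_gauss_quad_ip} and $(\E\|\mY_m\|^p)^{1/p}\lesssim p(d_1+d_2+p)$ via Khintchine/chi-square moments, then sets $p=\log M$; this handles the heavy-tailed summands in one shot with no conditioning. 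Your primary route, truncate $w_m$ at $K\asymp\log M$ and the Gaussian norms, then apply matrix Bernstein conditionally on $(w_m)$, is more elementary and can be made to work, but it is messier and has a small gap as sketched: after conditioning on $(w_m)$, the variance proxy is $\big(\sum_m|w_m|^2\mathbbm{1}\{|w_m|\le K\}\big)\cdot(d_1\vee d_2)$, and you need a separate concentration step (e.g., scalar Bernstein for the sub-exponential $|w_m|^2$) to replace $\sum_m|w_m|^2$ by its order-$M$ mean before unconditioning — you cannot simply assert the proxy is $\lesssim M(d_1\vee d_2)$. Both proofs ultimately invoke \eqref{eq:sampl_comp_rank1} to absorb the ``large summand''/large-$p$ residual, so that is not a distinguishing feature. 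You also flag the noncommutative-Rosenthal alternative yourself; that is indeed the paper's route (with $p\asymp\log M$ rather than $\log(d_1+d_2)$, a cosmetic difference in this regime).
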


The error bound in \eqref{eq:thm:esterr_rank1} then follows from Theorem~\ref{thm:armtx} with the above estimates given by Lemmas~\ref{lemma:lbp_rank1} and \ref{lemma:ubrc_rank1}.  
To apply Lemma~\ref{lemma:lbp_rank1}, we choose $\lambda = 0.9-\delta$.  
Then, similar to the proof of Theorem~\ref{thm:main_iidG}, the factor $(1-\lambda+\delta)/\lambda$ becomes an increasing function in $\delta$.  
The constant $C_\delta$ is given by this function of $\delta$ together with the result of Lemma~\ref{lemma:lbp_rank1}.

Finally, the error bound for the estimation of $\vu$ and $\vv$ in \eqref{eq:thm:esterr_factors_rank1} follows immediately from the Davis-Kahan Theorem (Theorem~\ref{thm:sintheta}). \qed

\section{Numerical Results}
We have conducted a Monte Carlo simulation to study the empirical performance of the proposed convex programs. Specifically, we considered the optimization problem in \eqref{eq:convex_prog_noiseless} in the noiseless case where the measurement matrices are given as the outer product of two Gaussian random vectors and the unknown rank-$1$ matrix is a square matrix ($d_1 = d_2 = d$). To solve \eqref{eq:convex_prog_noiseless}, we used the software package \texttt{TFOCS} \cite{becker2011templates} that uses a smoothed conic dual formulation.

\begin{figure}
	\centering
    \includegraphics[scale=0.75]{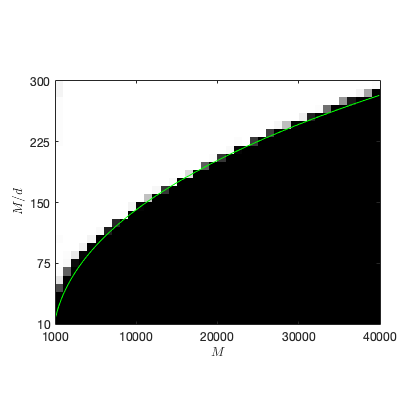}
    \vspace{-15mm}
    \caption{Empirical phase transition in the noiseless case with rank-$1$ measurements. The success rate out of 100 trials is plotted in a gray sclae (white: all success, black: all failure).}
    \label{fig:empPT}
\end{figure}

Figure~\ref{fig:empPT} illustrates the empirical phase transition. For a fixed number of measurements $M$, we vary the matrix size $d$ where the ratio $M/d$ belongs to a given interval. In Figure~\ref{fig:empPT}, the convex program provides the exact recovery when $d$ is below a certain threshold determined by $M$. The sample complexity result by Theorem~\ref{thm:main_rank1} and Lemma~\ref{lemma:init} quantifies this threshold as $C M / \log^\alpha M$ for some constants $C, \alpha > 0$. Alternatively, if the oversampling rate $M/d$ exceeds a polylog factor of $M$, then the convex program provides the exact recovery. The empirical phase transition occurs at $M/d \approx 0.14 \log^5 M$ or $d \approx 7.3 M/\log^5 M$ indicated by the green curve in the figure. Although, the requirements for the constants $C$ and $\alpha$ in our proofs seem conservative, our theory is consistent with the empirical performance up to the choice of these constants.

\section{Discussions}

We proposed a simple initial estimation using partial traces. The regularized anchored regression with the nuclear norm given by this initial estimate provides a stable estimate for LRPR. Performance guarantees were derived for several random measurement models.

The anchored regression was originally proposed for the plain phase retrieval problem and later modified to the regularized version to accommodate a geometric prior on the solution. 
There also exist alternative methods for phase retrieval and their modification with prior signal models. It would be possible to adapt the Wirtinger flow \cite{candes2015phase} and its variation for the sparsity prior \cite{cai2016optimal} to the low-rankness model. To fully convexify LRPR without requiring any initial estimate, one may apply the lifting-reformulation twice, which will provide a linear inverse problem where the solution is rearranged as a 4-way tensor of rank-$1$. While the rank-$1$ prior of the tensor can be promoted by a convex regularizer by the tensor nuclear norm, it is proven NP-hard to compute the tensor nuclear norm \cite{hillar2013most}. 

\section*{Acknowledgements}
This work was supported in part by NSF CCF-1718771, by C-BRIC, one of six centers in JUMP, a Semiconductor Research Corporation (SRC) program sponsored by DARPA, and by the EU Horizon 2020 research and innovation program under 646804-ERC-COG-BNYQ.  
The authors thank the anonymous reviewers for their careful reading of the manuscript and their many insightful comments and suggestions.

\appendix

\section{Expectations of symmetric Gaussian tensors}

We repeatedly use the expectation of various tensor products of an i.i.d. Gaussian vector, which are summarized below.  First we consider the expectation of the fourth-order tensor product.

\begin{lemma}
\label{lemma:exp_gauss_quad}
Let $\vg \sim \mathcal{N}(\vzero,\mId_d)$.  Then
\begin{align*}
\mathbb{E} \, \vg \otimes \vg \otimes \vg \otimes \vg
&= \sum_{j,k=1}^d (
\ve_j \otimes \ve_j \otimes \ve_k \otimes \ve_k
+ \ve_j \otimes \ve_k \otimes \ve_j \otimes \ve_k
+ \ve_j \otimes \ve_k \otimes \ve_k \otimes \ve_j
),
\end{align*}
where $\ve_j$ denotes the $j$th column of $\mId_d$ for $j=1,\dots,d$.
\end{lemma}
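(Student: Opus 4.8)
The plan is to compute the entries of the fourth-order tensor $\mathbb{E}\,\vg\otimes\vg\otimes\vg\otimes\vg$ directly, indexed by quadruples $(i,j,k,\ell)\in\{1,\dots,d\}^4$. The $(i,j,k,\ell)$ entry is $\mathbb{E}[g_i g_j g_k g_\ell]$, where $g_1,\dots,g_d$ are i.i.d.\ standard Gaussians. So the whole statement reduces to evaluating this fourth moment for every index pattern and checking that the right-hand side has the same entries.

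First I would recall (or derive via Isserlis/Wick's theorem for real Gaussians) that for jointly Gaussian mean-zero variables,
\[
\mathbb{E}[g_i g_j g_k g_\ell]
= \mathbb{E}[g_ig_j]\,\mathbb{E}[g_kg_\ell]
+ \mathbb{E}[g_ig_k]\,\mathbb{E}[g_jg_\ell]
+ \mathbb{E}[g_ig_\ell]\,\mathbb{E}[g_jg_k].
\]
Since the $g$'s are i.i.d.\ standard, $\mathbb{E}[g_pg_q]=\delta_{pq}$, so this becomes $\delta_{ij}\delta_{k\ell}+\delta_{ik}\delta_{j\ell}+\delta_{i\ell}\delta_{jk}$. (If one prefers to avoid citing Wick's theorem, one can verify this by the elementary case analysis on how the four indices coincide: all four equal gives $\mathbb{E}[g^4]=3$, matching $1+1+1$; two matched pairs of distinct values gives $1$, matching exactly one of the three terms; and any pattern with an index appearing an odd number of times gives $0$ by independence and $\mathbb{E}[g]=0$, matching $0+0+0$.)

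Next I would observe that the $(i,j,k,\ell)$ entry of $\sum_{j',k'}\ve_{j'}\otimes\ve_{j'}\otimes\ve_{k'}\otimes\ve_{k'}$ is precisely $\delta_{ij}\delta_{k\ell}$, and similarly the second and third sums contribute $\delta_{ik}\delta_{j\ell}$ and $\delta_{i\ell}\delta_{jk}$ respectively. Summing, the $(i,j,k,\ell)$ entry of the claimed right-hand side equals $\delta_{ij}\delta_{k\ell}+\delta_{ik}\delta_{j\ell}+\delta_{i\ell}\delta_{jk}$, which matches the left-hand side entry-by-entry. Since two tensors with identical entries are equal, the identity follows.

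There is no real obstacle here; the only thing to be a little careful about is the bookkeeping of which pairing in Wick's theorem corresponds to which of the three summand tensors, and making sure the index positions line up correctly (the first sum pairs positions $\{1,2\}$ and $\{3,4\}$, the second pairs $\{1,3\}$ and $\{2,4\}$, the third pairs $\{1,4\}$ and $\{2,3\}$). This is purely mechanical, so the whole argument is short.
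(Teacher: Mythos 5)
Your proof is correct and takes essentially the same route as the paper: expand into coordinate basis tensors and reduce to the fourth moment $\mathbb{E}[g_i g_j g_k g_\ell]=\delta_{ij}\delta_{k\ell}+\delta_{ik}\delta_{j\ell}+\delta_{i\ell}\delta_{jk}$. Your version is actually more complete than the paper's one-line justification, which only remarks that odd moments vanish and leaves the even-moment case analysis implicit.
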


\begin{proof}[Proof of Lemma~\ref{lemma:exp_gauss_quad}]
The expectation of $\vg \otimes \vg \otimes \vg \otimes \vg$ is written as
\begin{align*}
& \mathbb{E} \sum_{i_1,i_2,i_3,i_4 = 1}^d
(\ve_{i_1} \ve_{i_1}^\transpose \otimes \ve_{i_2} \ve_{i_2}^\transpose \otimes \ve_{i_3} \ve_{i_3}^\transpose \otimes \ve_{i_4} \ve_{i_4}^\transpose)
(\vg \otimes \vg \otimes \vg \otimes \vg) \\
&= \sum_{i_1,i_2,i_3,i_4 = 1}^d
\mathbb{E} g_{i_1} g_{i_2} g_{i_3} g_{i_4}
(\ve_{i_1} \otimes \ve_{i_2} \otimes \ve_{i_3} \otimes \ve_{i_4}),
\end{align*}
where $g_i$ denotes the $i$th entry of $\vg$ for $i=1,\dots,d$.  The proof completes by noting that all odd moments of a standard normal variable vanish.
\end{proof}

The following lemma is a direct consequence of Lemma~\ref{lemma:exp_gauss_quad}.

\begin{lemma}
\label{lemma:exp_gauss_quad_ip}
Let $\vx, \vy \in \mathbb{R}^d$ and $\vg \sim \mathcal{N}(\vzero,\mId_d)$.  Then
\[
\mathbb{E} (\vx^\transpose \vg \vg^\transpose \vy) \vg \vg^\transpose
= (\vx^\transpose \vy) \mId_d + \vx \vy^\transpose + \vy \vx^\transpose.
\]
\end{lemma}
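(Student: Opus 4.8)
The plan is to reduce the claim to the tensor identity of Lemma~\ref{lemma:exp_gauss_quad} by viewing the left-hand side as a contraction of the fourth-order moment tensor against $\vx$ and $\vy$. First I would write the scalar factor entrywise, $\vx^\transpose\vg\vg^\transpose\vy = \sum_{i,j=1}^d x_i y_j g_i g_j$, and note $(\vg\vg^\transpose)_{k\ell} = g_k g_\ell$, so that the $(k,\ell)$ entry of the matrix on the left equals $\sum_{i,j=1}^d x_i y_j\,\mathbb{E}[g_i g_j g_k g_\ell]$. Equivalently, in tensor notation, $\mathbb{E}(\vx^\transpose\vg\vg^\transpose\vy)\vg\vg^\transpose$ is obtained from $T := \mathbb{E}\,\vg\otimes\vg\otimes\vg\otimes\vg$ by contracting its first coordinate slot with $\vx$ and its second with $\vy$, leaving a $d\times d$ matrix in the remaining two slots.

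Next I would substitute the closed form of $T$ from Lemma~\ref{lemma:exp_gauss_quad} and perform the contraction term by term. The term $\sum_{j,k}\ve_j\otimes\ve_j\otimes\ve_k\otimes\ve_k$ contributes $\big(\sum_j x_j y_j\big)\sum_k \ve_k\ve_k^\transpose = (\vx^\transpose\vy)\mId_d$; the term $\sum_{j,k}\ve_j\otimes\ve_k\otimes\ve_j\otimes\ve_k$ contributes $\sum_{j,k} x_j y_k\,\ve_j\ve_k^\transpose = \vx\vy^\transpose$; and the term $\sum_{j,k}\ve_j\otimes\ve_k\otimes\ve_k\otimes\ve_j$ contributes $\sum_{j,k} x_j y_k\,\ve_k\ve_j^\transpose = \vy\vx^\transpose$. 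Summing the three pieces yields the claimed identity. Alternatively, one can bypass the tensor bookkeeping entirely by invoking Isserlis' theorem directly, $\mathbb{E}[g_i g_j g_k g_\ell] = \delta_{ij}\delta_{k\ell} + \delta_{ik}\delta_{j\ell} + \delta_{i\ell}\delta_{jk}$, and reading off the $(k,\ell)$ entry of $\sum_{i,j} x_i y_j\,\mathbb{E}[g_i g_j g_k g_\ell]$ as $(\vx^\transpose\vy)\delta_{k\ell} + x_k y_\ell + x_\ell y_k$.

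Since every step is a finite linear manipulation, there is no genuine obstacle here. The only point requiring care is keeping track of which of the four tensor slots is contracted against $\vx$, which against $\vy$, and which ordered pair forms the output matrix; mislabeling would swap the roles of $\vx\vy^\transpose$ and $\vy\vx^\transpose$ (harmless in the final answer by the symmetry of the sum, but worth stating cleanly so the derivation is unambiguous).
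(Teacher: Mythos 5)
Your proposal is correct and matches the paper's intent: the paper states Lemma~\ref{lemma:exp_gauss_quad_ip} is a direct consequence of Lemma~\ref{lemma:exp_gauss_quad}, and your contraction of the fourth-order moment tensor against $\vx$ and $\vy$ is precisely the derivation it leaves implicit. The Isserlis-theorem shortcut you mention is an equally valid and slightly more self-contained route, and your remark about slot bookkeeping being harmless by the symmetry of the final expression is accurate.
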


Next we consider the expectation of an 8-way tensor product applying to a fourth-order tensor product of a unit vector.

\begin{lemma}
\label{lemma:exp_gauss_octa}
Let $\vx \in \mathbb{S}^{d-1}$ and $\vg \sim \mathcal{N}(\vzero,\mId_d)$.  Then
\begin{equation}
\begin{aligned}
{} & \mathbb{E} (\vx^\transpose \vg)^4 (\vg \otimes \vg \otimes \vg \otimes \vg)
%\\
%{} &
= 24 (\vx \otimes \vx \otimes \vx \otimes \vx) \\
{} & \qquad + 12 \sum_{l=1}^d (
\vx \otimes \vx \otimes \ve_l \otimes \ve_l
+ \vx \otimes \ve_l \otimes \vx \otimes \ve_l
+ \vx \otimes \ve_l \otimes \ve_l \otimes \vx \\
{} & \qquad \quad \quad \quad \quad + \ve_l \otimes \vx \otimes \vx \otimes \ve_l
+ \ve_l \otimes \vx \otimes \ve_l \otimes \vx
+ \ve_l \otimes \ve_l \otimes \vx \otimes \vx) \\
{} & \qquad + 3 \sum_{j,k=1}^d (
\ve_j \otimes \ve_j \otimes \ve_k \otimes \ve_k
+ \ve_j \otimes \ve_k \otimes \ve_j \otimes \ve_k
+ \ve_j \otimes \ve_k \otimes \ve_k \otimes \ve_j
),
\end{aligned}
\label{res:lemma:exp_gauss_octa}
\end{equation}
where $\ve_l$ denotes the $j$th column of $\mId_d$ for $l=1,\dots,d$.
\end{lemma}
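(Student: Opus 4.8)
\textbf{Proof proposal for Lemma~\ref{lemma:exp_gauss_octa}.}

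The plan is to reduce the desired identity to a purely combinatorial bookkeeping of Gaussian moments, exactly as in the proof of Lemma~\ref{lemma:exp_gauss_quad}, but now tracking an extra factor $(\vx^\transpose\vg)^4$. First I would expand both $(\vx^\transpose\vg)^4 = \sum_{p_1,p_2,p_3,p_4} x_{p_1}x_{p_2}x_{p_3}x_{p_4}\, g_{p_1}g_{p_2}g_{p_3}g_{p_4}$ and $\vg\otimes\vg\otimes\vg\otimes\vg = \sum_{i_1,i_2,i_3,i_4} g_{i_1}g_{i_2}g_{i_3}g_{i_4}\,(\ve_{i_1}\otimes\ve_{i_2}\otimes\ve_{i_3}\otimes\ve_{i_4})$, so that the left-hand side becomes
\[
\sum_{p_1,\dots,p_4}\sum_{i_1,\dots,i_4} x_{p_1}x_{p_2}x_{p_3}x_{p_4}\,\big(\mathbb{E}\, g_{p_1}g_{p_2}g_{p_3}g_{p_4}g_{i_1}g_{i_2}g_{i_3}g_{i_4}\big)\,(\ve_{i_1}\otimes\ve_{i_2}\otimes\ve_{i_3}\otimes\ve_{i_4}).
\]
By Isserlis'/Wick's theorem, $\mathbb{E}\, g_{p_1}\cdots g_{i_4}$ equals the sum over all $\frac{8!}{2^4 4!}=105$ perfect matchings (pairings) of the eight indices $\{p_1,p_2,p_3,p_4,i_1,i_2,i_3,i_4\}$ of the product of Kronecker deltas $\delta_{a b}$ over the paired indices. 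So the whole computation is organized by summing over these $105$ pairings.

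Next I would classify the $105$ pairings according to how many of the four $p$-slots are matched \emph{among themselves} (call this number $2s$ of them internally paired, $s\in\{0,1,2\}$), versus matched to $i$-slots. This is the natural organizing principle because each $p$--$p$ pairing contracts with two copies of $\vx$ to give $\|\vx\|_2^2 = 1$ (using $\vx\in\mathbb{S}^{d-1}$), each $p$--$i$ pairing turns an $\ve_{i_k}$ into $\vx$ via $\sum_{p} x_p \ve_p = \vx$, and each surviving $i$--$i$ pairing contributes a factor $\sum_l \ve_l\otimes\cdots\otimes\ve_l\otimes\cdots$ in the two matched tensor slots. The three cases are: (i) all four $p$'s paired internally (this forces the four $i$'s to be paired internally too) — there are $3\times 3 = 9$ such pairings, each giving a term of the form $\ve_j\otimes\cdots\otimes\ve_k\otimes\cdots$; grouping the $3$ ways to internally pair the $i$'s and noting the $3$ internal $p$-pairings each contribute $1$, this yields the coefficient $3\sum_{j,k}(\cdots)$ on the last line; (ii) exactly two $p$'s internally paired (contributing one factor of $1$) and the other two $p$'s each matched to a distinct $i$, with the remaining two $i$'s matched to each other — counting gives, after regrouping, the coefficient $12$ in front of the six two-$\vx$ terms; (iii) no $p$'s internally paired, so all four $p$'s are matched bijectively to the four $i$'s — there are $4! = 24$ such pairings, each converting every $\ve_{i_k}$ to $\vx$, giving $24\,(\vx\otimes\vx\otimes\vx\otimes\vx)$.

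The main obstacle — really the only nontrivial part — is getting the \emph{multiplicities} right in case (ii), and making sure the six distinct tensor patterns $\vx\otimes\vx\otimes\ve_l\otimes\ve_l,\ \vx\otimes\ve_l\otimes\vx\otimes\ve_l,\dots$ each receive exactly the coefficient $12$. I would handle this by fixing which pair of tensor slots (among the $\binom{4}{2}=6$ choices) ends up carrying the summed index $\ve_l\otimes\ve_l$ — these are the two $i$-slots paired with each other — and then counting the number of pairings consistent with that choice: there are $3$ ways to pick which two of the four $p$'s are internally paired (the remaining two $p$'s must go to the two designated $\vx$-slots), and $2$ ways to match those two $p$'s to those two $i$-slots, giving $3\times 2 = 6$; combined with an overcount factor I would double-check by the sanity check that the total coefficient count $9\cdot 1 + 6\cdot 6 + 24 \cdot 1$ against $105$ total pairings weighted appropriately matches $\mathbb{E}(\vx^\transpose\vg)^4 \cdot \mathbb{E}\|\vg\|^{?}$-type consistency; in practice the cleanest verification is to contract the claimed right-hand side of \eqref{res:lemma:exp_gauss_octa} against $\vx^{\otimes 4}$ and against $\ve_a^{\otimes 2}\otimes\ve_b^{\otimes 2}$ and check we recover $\mathbb{E}(\vx^\transpose\vg)^8 = 105$ and $\mathbb{E}(\vx^\transpose\vg)^4 g_a^2 g_b^2$ respectively, which pins down all coefficients uniquely. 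Once the case-(ii) count is confirmed to be $12$ per pattern, collecting the three cases gives precisely \eqref{res:lemma:exp_gauss_octa}, and the hypothesis $\vx\in\mathbb{S}^{d-1}$ is used exactly to set every internal $p$-pair contraction $\vx^\transpose\vx$ equal to $1$.
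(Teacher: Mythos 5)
Your Wick/Isserlis pairing approach is a genuinely different route from the paper's proof, which instead writes $\mId_d = \mP_{\vx} + \mP_{\vx^\perp}$, inserts the resulting $2^4$ operator products $\mB_1\otimes\cdots\otimes\mB_4$ into the expectation, and uses the independence of $\vx^\transpose\vg$ and $\mP_{\vx^\perp}\vg$ to kill the odd-moment terms and factor the rest (the three surviving cases being the same ``all $\mP_\vx$'', ``two of each'', ``all $\mP_{\vx^\perp}$'' split that you see). The two organizing principles agree at the end; your version is more directly combinatorial and arguably shorter once the pairing census is done carefully, while the paper's version is more structural and lets Lemma~\ref{lemma:exp_gauss_quad} be recycled as a black box in Case~3. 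Both approaches are correct; the choice between them is a matter of taste.

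There is, however, a concrete arithmetic slip in your Case~(ii) count, and it is exactly at the point you yourself flag as uncertain. For a fixed target pattern (say the slots for $i_3,i_4$ carry the contracted $\ve_l\otimes\ve_l$ and the slots for $i_1,i_2$ become $\vx$), the number of Wick pairings is: $\binom{4}{2}=6$ ways to choose which two of the four $p$-indices form the internal $p$--$p$ pair (not $3$; the figure $3$ is the number of \emph{perfect matchings} of four objects, which is not what is being selected here since only one $p$--$p$ pair exists), times $2!=2$ ways to match the remaining two $p$'s to $i_1,i_2$, giving $6\times 2 = 12$ directly --- no mysterious ``overcount factor'' is needed. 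With this correction, the census reads $9 + 6\cdot 12 + 24 = 105$, consistent with $\mathbb{E}(\vx^\transpose\vg)^8=105$, and the coefficients $24$, $12$, $3$ fall out cleanly. You should also note that contracting against $\vx^{\otimes 4}$ alone gives only one scalar equation for three unknowns, so by itself it cannot ``pin down all coefficients''; the contraction against $\ve_a\otimes\ve_a\otimes\ve_b\otimes\ve_b$ with $a\neq b$ yields a polynomial identity in $x_a,x_b$ whose coefficient matching does complete the verification, but this should be stated as such rather than as a single extra equation.
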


\begin{proof}[Proof of Lemma~\ref{lemma:exp_gauss_octa}]

The expectation $\mathbb{E} (\vx^\transpose \vg)^4 (\vg \otimes \vg \otimes \vg \otimes \vg)$ is rewritten as
\begin{align}
& \mathbb{E} (\vg \otimes \vg \otimes \vg \otimes \vg) (\vg \otimes \vg \otimes \vg \otimes \vg)^\transpose (\vx \otimes \vx \otimes \vx \otimes \vx) \nonumber \\
&= \sum_{\mB_1,\mB_2,\mB_3,\mB_4 \in \{\mP_{\vx},\mP_{\vx^\perp}\}}
\mathbb{E} (\mB_1 \otimes \mB_2 \otimes \mB_3 \otimes \mB_4) (\vg \otimes \vg \otimes \vg \otimes \vg) (\vg \otimes \vg \otimes \vg \otimes \vg)^\transpose (\vx \otimes \vx \otimes \vx \otimes \vx), \label{eq1:proof:lemma:exp_gauss_octa}
\end{align}
where $\mP_{\vx}$ and $\mP_{\vx^\perp}$ denote the orthogonal projection operators onto the subspace spanned by $\vx$ and its orthogonal complement, respectively.

If any of $\mB_1,\mB_2,\mB_3,\mB_4$ is different from the other three matrices, then the corresponding summand in \eqref{eq1:proof:lemma:exp_gauss_octa} becomes zero since it has a factor that is an odd moment of $\bm{x}^\transpose \bm{g} \sim \mathcal{N}(0,1)$.  Therefore, it suffices to consider the following three cases.

\noindent\textbf{Case 1:} $\mB_1 = \mB_2 = \mB_3 = \mB_4 = \mP_{\vx}$.

Since
\[
\mP_{\vx} \otimes \mP_{\vx} \otimes \mP_{\vx} \otimes \mP_{\vx}
= (\vx \vx^\transpose \otimes \vx \vx^\transpose \otimes \vx \vx^\transpose \otimes \vx \vx^\transpose)
= (\vx \otimes \vx \otimes \vx \otimes \vx) (\vx \otimes \vx \otimes \vx \otimes \vx)^\transpose,
\]
it follows that the corresponding summand is written as
\begin{equation}
\begin{aligned}
& (\vx \otimes \vx \otimes \vx \otimes \vx)
\mathbb{E} (\vx \otimes \vx \otimes \vx \otimes \vx)^\transpose (\vg \otimes \vg \otimes \vg \otimes \vg) (\vg \otimes \vg \otimes \vg \otimes \vg)^\transpose (\vx \otimes \vx \otimes \vx \otimes \vx) \\
&= \mathbb{E} (\vx^\transpose \vg)^8 (\vx \otimes \vx \otimes \vx \otimes \vx)
= 105 (\vx \otimes \vx \otimes \vx \otimes \vx).
\end{aligned}
\label{eq2:proof:lemma:exp_gauss_octa}
\end{equation}

\noindent\textbf{Case 2:} Two of $\mB_1, \mB_2, \mB_3, \mB_4$ are $\mP_{\vx}$ and the other two matrices are $\mP_{\vx^\perp}$.

First we consider the sub-case where $\mB_1 = \mB_2 = \mP_{\vx}$ and $\mB_3 = \mB_4 = \mP_{\vx^\perp}$.  Since $\mP_{\vx^\perp} \vg$ and $\vx^\transpose \vg$ are independent, we can replace $\vx^\transpose \vg$ by $\vx^\transpose \vg'$ where $\vg'$ is an independent copy of $\vg$.  Then the corresponding summand is written as
\begin{align*}
& \mathbb{E}_{\vg'} (\vg'^\transpose \vx)^6 \, \mathbb{E}_{\vg} \mP_{\vx^\perp} \vg \otimes \mP_{\vx^\perp} \vg \otimes \vx \otimes \vx
%\\
%&
= 15 (\mP_{\vx^\perp} \otimes \mP_{\vx^\perp}) (\mathbb{E}_{\vg} \vg \otimes \vg) \otimes \vx \otimes \vx \\
& = 15 (\mP_{\vx^\perp} \otimes \mP_{\vx^\perp}) \mathrm{vec} (\mathbb{E}_{\vg} \vg \vg^\transpose) \otimes \vx \otimes \vx
%\\
%&
= 15 (\mP_{\vx^\perp} \otimes \mP_{\vx^\perp}) \mathrm{vec} (\mId_d) \otimes \vx \otimes \vx \\
& = 15 \, \mathrm{vec} (\mP_{\vx^\perp} \mId_d \mP_{\vx^\perp}) \otimes \vx \otimes \vx %\\
%&
= 15 \, \mathrm{vec} (\mId_d - \mP_{\vx}) \otimes \vx \otimes \vx \\
& = 15 \Big(- \vx \otimes \vx \otimes \vx \otimes \vx + \sum_{l=1}^d \ve_l \otimes \ve_l \otimes \vx \otimes \vx\Big).
\end{align*}

The summands corresponding to the other sub-cases of Case~2 are calculated similarly, and the partial summation of \eqref{eq1:proof:lemma:exp_gauss_octa} for Case~2 is written as
\begin{equation}
\begin{aligned}
{} & - 90 \, \vx \otimes \vx \otimes \vx \otimes \vx + 15 \sum_{l=1}^d (
\vx \otimes \vx \otimes \ve_l \otimes \ve_l
+ \vx \otimes \ve_l \otimes \vx \otimes \ve_l
+ \vx \otimes \ve_l \otimes \ve_l \otimes \vx \\
{} & \quad \quad \quad \quad + \ve_l \otimes \vx \otimes \vx \otimes \ve_l
+ \ve_l \otimes \vx \otimes \ve_l \otimes \vx
+ \ve_l \otimes \ve_l \otimes \vx \otimes \vx).
\end{aligned}
\label{eq3:proof:lemma:exp_gauss_octa}
\end{equation}

\noindent\textbf{Case 3:} $\mB_1 = \mB_2 = \mB_3 = \mB_4 = \mP_{\vx^\perp}$.

Again by the independence between $\mP_{\vx^\perp} \vg$ and $\vx^\transpose \vg$, the corresponding summand is written as
\begin{equation}
\begin{aligned}
& \mathbb{E}_{\vg'} (\vg'^\transpose \vx)^4 \, \mathbb{E}_{\vg} (\mP_{\vx^\perp} \otimes \mP_{\vx^\perp} \otimes \mP_{\vx^\perp} \otimes \mP_{\vx^\perp}) (\vg \otimes \vg \otimes \vg \otimes \vg) \\
& = 3 (\mP_{\vx^\perp} \otimes \mP_{\vx^\perp} \otimes \mP_{\vx^\perp} \otimes \mP_{\vx^\perp}) \mathbb{E}_{\vg} (\vg \otimes \vg \otimes \vg \otimes \vg).
\end{aligned}
\label{eq4a:proof:lemma:exp_gauss_octa}
\end{equation}
By plugging in the expression of $\mathbb{E} \, \vg \otimes \vg \otimes \vg \otimes \vg$ in Lemma~\ref{lemma:exp_gauss_quad}, the right-hand side of \eqref{eq4a:proof:lemma:exp_gauss_octa} is written as
\begin{equation}
\begin{aligned}
& 3 \underbrace{
(\mP_{\vx^\perp} \otimes \mP_{\vx^\perp} \otimes \mP_{\vx^\perp} \otimes \mP_{\vx^\perp}) \sum_{j,k=1}^d \ve_j \otimes \ve_j \otimes \ve_k \otimes \ve_k
}_{\text{($\S$)}} \\
& \quad + 3 \underbrace{
(\mP_{\vx^\perp} \otimes \mP_{\vx^\perp} \otimes \mP_{\vx^\perp} \otimes \mP_{\vx^\perp}) \sum_{j,k=1}^d \ve_j \otimes \ve_k \otimes \ve_j \otimes \ve_k
}_{\text{($\S\S$)}} \\
& \quad + 3 \underbrace{
(\mP_{\vx^\perp} \otimes \mP_{\vx^\perp} \otimes \mP_{\vx^\perp} \otimes \mP_{\vx^\perp}) \sum_{j,k=1}^d \ve_j \otimes \ve_k \otimes \ve_k \otimes \ve_j
}_{\text{($\S\S\S$)}}.
\end{aligned}
\label{eq4:proof:lemma:exp_gauss_octa}
\end{equation}
The first term ($\S$) in \eqref{eq4:proof:lemma:exp_gauss_octa} is rewritten as
\begin{align*}
\text{($\S$)} & = (\mP_{\vx^\perp} \otimes \mP_{\vx^\perp} \otimes \mP_{\vx^\perp} \otimes \mP_{\vx^\perp}) [\mathrm{vec}(\mId_d) \otimes \mathrm{vec}(\mId_d)] \\
& = [(\mP_{\vx^\perp} \otimes \mP_{\vx^\perp}) \mathrm{vec}(\mId_d)] \otimes [(\mP_{\vx^\perp} \otimes \mP_{\vx^\perp}) \mathrm{vec}(\mId_d)] \\
& = \mathrm{vec}(\mP_{\vx^\perp}) \otimes \mathrm{vec}(\mP_{\vx^\perp})
= \mathrm{vec}(\mId_d - \mP_{\vx}) \otimes \mathrm{vec}(\mId_d - \mP_{\vx}) \\
& = \mathrm{vec}(\mP_{\vx}) \otimes \mathrm{vec}(\mP_{\vx}) + \mathrm{vec}(\mId_d) \otimes \mathrm{vec}(\mId_d)
- \mathrm{vec}(\mId_d) \otimes \mathrm{vec}(\mP_{\vx})
- \mathrm{vec}(\mP_{\vx}) \otimes \mathrm{vec}(\mId_d) \\
& = \vx \otimes \vx \otimes \vx \otimes \vx + \sum_{j,k=1}^d \ve_j \otimes \ve_j \otimes \ve_k \otimes \ve_k
- \sum_{l=1}^d
(\vx \otimes \vx \otimes \ve_l \otimes \ve_l + \ve_l \otimes \ve_l \otimes \vx \otimes \vx).
\end{align*}
Similarly ($\S\S$) and ($\S\S\S$) are written as the sum of rank-$1$ tensors.  Then applying these results to \eqref{eq4:proof:lemma:exp_gauss_octa} provides
\begin{equation}
\begin{aligned}
& \mathbb{E}_{\vg'} (\vg'^\transpose \vx)^4 \, \mathbb{E}_{\vg} (\mP_{\vx^\perp} \otimes \mP_{\vx^\perp} \otimes \mP_{\vx^\perp} \otimes \mP_{\vx^\perp}) (\vg \otimes \vg \otimes \vg \otimes \vg) \\
& = 9 \, \vx \otimes \vx \otimes \vx \otimes \vx
- 3 \sum_{l=1}^d (
\vx \otimes \vx \otimes \ve_l \otimes \ve_l
+ \vx \otimes \ve_l \otimes \vx \otimes \ve_l
+ \vx \otimes \ve_l \otimes \ve_l \otimes \vx \\
{} & \quad \quad \quad \quad + \ve_l \otimes \vx \otimes \vx \otimes \ve_l
+ \ve_l \otimes \vx \otimes \ve_l \otimes \vx
+ \ve_l \otimes \ve_l \otimes \vx \otimes \vx) \\
{} & + 3 \sum_{j,k=1}^d (
\ve_j \otimes \ve_j \otimes \ve_k \otimes \ve_k
+ \ve_j \otimes \ve_k \otimes \ve_j \otimes \ve_k
+ \ve_j \otimes \ve_k \otimes \ve_k \otimes \ve_j
).
\end{aligned}
\label{eq6:proof:lemma:exp_gauss_octa}
\end{equation}
The identity in \eqref{res:lemma:exp_gauss_octa} is then obtained by combining \eqref{eq2:proof:lemma:exp_gauss_octa}, \eqref{eq3:proof:lemma:exp_gauss_octa}, and \eqref{eq6:proof:lemma:exp_gauss_octa} through \eqref{eq1:proof:lemma:exp_gauss_octa}.
\end{proof}

\section{Moment and tail bounds of random matrices}

The following lemma, which provides a central moment bound on a standard normal variable, is a direct consequence of the Khintchine inequality (e.g., \cite[Corollary~5.12]{vershynin2012introduction}).

\begin{lemma}
\label{lemma:gaussianLP}
Let $g \sim \mathcal{N}(0,1)$.  Then there exists a numerical constant $C$ such that
\[
(\E |g|^p)^{1/p} \leq C \sqrt{p}, \quad \forall p \in \mathbb{N}.
\]
\end{lemma}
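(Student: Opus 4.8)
The plan is to reduce the claim to the standard Gaussian tail bound together with an elementary estimate for the Gamma function. First I would recall that $\mathbb{P}(|g| \ge t) \le 2e^{-t^2/2}$ for every $t \ge 0$, which follows from a crude bound on the Gaussian density. Then, by the layer-cake (integration-by-parts) identity,
\[
\E|g|^p = \int_0^\infty p\, t^{p-1}\, \mathbb{P}(|g| \ge t)\, dt \le 2p \int_0^\infty t^{p-1} e^{-t^2/2}\, dt,
\]
and the change of variables $s = t^2/2$ turns the integral into a Gamma integral, yielding $\E|g|^p \le p\, 2^{p/2}\, \Gamma(p/2)$.

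Next I would apply a Stirling-type bound such as $\Gamma(x) \le x^x$ for $x \ge 1$, which gives $\Gamma(p/2) \le (p/2)^{p/2}$ for every integer $p \ge 2$. Taking $p$-th roots,
\[
(\E|g|^p)^{1/p} \le p^{1/p}\, \sqrt{2}\, (p/2)^{1/2} = p^{1/p}\, \sqrt{p} \le e^{1/e}\, \sqrt{p},
\]
where the last step uses $\sup_{p \ge 1} p^{1/p} = e^{1/e} < 2$. The single remaining case $p = 1$ is immediate since $\E|g| = \sqrt{2/\pi} < 1 \le e^{1/e}$, so the lemma holds with, say, $C = 2$.

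An alternative that bypasses the tail bound is to use the exact formula $\E|g|^p = 2^{p/2}\Gamma((p+1)/2)/\sqrt{\pi}$ and apply the same Stirling estimate to $\Gamma((p+1)/2)$; this is essentially the content of the cited Khintchine-type inequality \cite[Corollary~5.12]{vershynin2012introduction}, which one may simply quote. I do not expect any genuine obstacle here: the only points that need a little care are picking a clean Gamma-function bound valid for all integer $p \ge 1$ and treating small values of $p$ separately so that the constant $C$ is genuinely numerical.
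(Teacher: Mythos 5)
Your proof is correct, and each step checks out: the Chernoff tail bound $\mathbb{P}(|g|\ge t)\le 2e^{-t^2/2}$, the layer-cake identity giving $\E|g|^p \le p\,2^{p/2}\Gamma(p/2)$, the elementary bound $\Gamma(x)\le x^x$ for $x\ge 1$, and the observation that $\sup_{p\ge1}p^{1/p}=e^{1/e}<2$. The paper itself offers no proof but simply cites the Khintchine-type moment characterization of subgaussian variables from \cite[Corollary~5.12]{vershynin2012introduction}; your tail-to-moment argument is exactly the standard argument underlying that corollary, so in substance the two routes coincide, with yours being the unpacked, self-contained version. The alternative you mention, starting from the closed-form $\E|g|^p = 2^{p/2}\Gamma\big((p+1)/2\big)/\sqrt{\pi}$, is an equally valid shortcut and is arguably cleaner since it avoids the factor-of-two bookkeeping in the tail bound. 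Either way, the argument is complete and the constant $C=2$ (indeed $C=e^{1/e}$) is legitimately numerical.
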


We also use moment and tail bounds of random matrices in the spectral norm given by the noncommutative Rosenthal inequality \cite[Theorem~0.4]{junge2013noncommutative}.

\begin{theorem}[{Noncommutative Rosenthal inequality \cite[Theorem~0.4]{junge2013noncommutative}}]
\label{thm:ncrosenthal}
Let $\mY_1,\dots,\mY_M$ be independent random matrices with zero-mean.  Then there exists a numerical constant $C$ such that
\[
\Big(\E \Big\| \sum_{m=1}^M \mY_m \Big\|^p\Big)^{1/p}
\leq C \Big[
\sqrt{p} \Big( \Big\| \sum_{m=1}^M \E \mY_m \mY_m^* \Big\|^{1/2}
\vee
\Big\| \sum_{m=1}^M \E \mY_m^* \mY_m \Big\|^{1/2}
\Big)
\vee
p \Big( \sum_{m=1}^M \E \norm{\mY_m}^p \Big)^{1/p}
\Big]
\]
for all $1 \leq p < \infty$.
\end{theorem}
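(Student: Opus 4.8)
This is a result quoted verbatim from \cite[Theorem~0.4]{junge2013noncommutative}, so a self-contained proof is not really expected here; nonetheless, the route I would follow is to derive it from the \emph{noncommutative Burkholder--Rosenthal martingale inequality} (Junge--Xu). Place $\mY_1,\dots,\mY_M$ in the noncommutative $L_p$-space of $M_d\bar{\otimes}L_\infty(\Omega)$ with trace $\mathbb{E}\otimes\mathrm{tr}$, let $\mathcal{F}_m$ be the filtration generated by $\mY_1,\dots,\mY_m$, and note that $\mZ_m:=\sum_{k\le m}\mY_k$ is a martingale with difference sequence $(\mY_m)$. For $p\ge 2$ the Burkholder--Rosenthal inequality gives
\[
\Big\|\sum_{m=1}^M \mY_m\Big\|_p
\le C\Big[\sqrt{p}\,\Big(\Big\|\Big(\textstyle\sum_m \mathbb{E}_{m-1}\mY_m\mY_m^*\Big)^{1/2}\Big\|_p
\vee \Big\|\Big(\textstyle\sum_m \mathbb{E}_{m-1}\mY_m^*\mY_m\Big)^{1/2}\Big\|_p\Big)
\vee p\Big(\textstyle\sum_m \|\mY_m\|_p^p\Big)^{1/p}\Big],
\]
with exactly the $\sqrt p$ / $p$ split of constants asserted in the statement. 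Independence is then what collapses this to the claimed form: $\mathbb{E}_{m-1}\mY_m\mY_m^*=\mathbb{E}\,\mY_m\mY_m^*$ is a \emph{deterministic} matrix, so $\sum_m\mathbb{E}_{m-1}\mY_m\mY_m^*$ is a fixed PSD matrix whose relevant norm is $\|\sum_m\mathbb{E}\,\mY_m\mY_m^*\|^{1/2}$, and similarly on the other side. Finally the case $1\le p\le 2$ reduces to $p=2$ by monotonicity of $L_p$ norms and Hölder, and the passage from the Schatten-$p$ left-hand side to the spectral norm uses $\|\mZ\|_{\mathrm{op}}\le\|\mZ\|_p$; in every application of the theorem in this paper $p$ is taken of order $\log(d_1+d_2)$, so the conditioned square-function norms and the operator norms of the deterministic variance proxies agree up to an absolute constant and nothing is lost.

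An alternative, more self-contained route is the moment (polynomial) method, equivalently Stein's method of exchangeable pairs as in Mackey--Jordan--Chen--Farrell--Tropp. Here one estimates $\mathbb{E}\,\mathrm{tr}\big((\mZ\mZ^*)^{p}\big)$ directly, expands the trace into a sum over length-$2p$ products of the $\mY_m$'s, and uses the mean-zero/independence structure to discard every term in which some index appears exactly once; the surviving terms are then grouped by their index-coincidence pattern, the ``pair'' patterns producing the variance quantities $\|\sum_m\mathbb{E}\,\mY_m\mY_m^*\|$ and $\|\sum_m\mathbb{E}\,\mY_m^*\mY_m\|$ and the higher-multiplicity patterns producing $\sum_m\mathbb{E}\,\|\mY_m\|^p$. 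One concludes via $\|\mZ\|^{2p}\le\mathrm{tr}\big((\mZ\mZ^*)^{p}\big)$ and, if a sharper statement is desired, an optimization over $p$. This is the same bookkeeping that underlies the moment form of the matrix Bernstein inequality, and the exchangeable-pair construction (resample one uniformly chosen summand) packages it more transparently.

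The main obstacle in either route is non-commutativity: the trace of a $2p$-fold product of matrices does not factor into a product of expectations even after conditioning, so the combinatorial argument identifying which index patterns survive and bounding the contribution of each — with constants that grow only like $\sqrt p$ and $p$ rather than, say, $p^p$ — is the genuinely hard step, and in the martingale route that difficulty is imported wholesale from the proof of the noncommutative Burkholder--Rosenthal inequality, whose sharp constants rest on Junge's noncommutative Doob inequality and duality for the conditioned square-function norms. The remaining ingredients (reduction to even integer $p$, the collapse of conditional expectations under independence, and the Schatten-to-operator-norm conversion in the regime $p\gtrsim\log d$ relevant to this paper) are routine.
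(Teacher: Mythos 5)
You correctly identify that the paper states this as an imported result, citing \cite[Theorem~0.4]{junge2013noncommutative} without reproducing a proof, so there is no internal argument in the paper against which to compare. Your first route (noncommutative Burkholder--Rosenthal plus collapse of the conditional expectations $\E_{m-1}\mY_m\mY_m^*$ to the deterministic $\E\,\mY_m\mY_m^*$ under independence) is in fact the mechanism underlying Junge--Zeng's own derivation, and your second route via moments/exchangeable pairs is a genuinely different, more combinatorial path that yields Rosenthal-type bounds with comparable but not identical constants. One point worth making more explicit, which you touch on but do not fully resolve: the inequality as written in the paper replaces all Schatten-$p$ norms by operator norms, and for the variance terms the inequality $\bigl\|\bigl(\sum_m\E\,\mY_m\mY_m^*\bigr)^{1/2}\bigr\|_{S_p}\le d^{1/p}\,\bigl\|\sum_m\E\,\mY_m\mY_m^*\bigr\|^{1/2}$ goes in the needed direction only with the dimensional factor $d^{1/p}$, so the operator-norm version as stated holds with a universal constant only once $p\gtrsim\log d$ (or once the $d^{1/p}$ is absorbed). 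Since every invocation of Theorem~\ref{thm:ncrosenthal} in this paper sets $p$ of order $\log M\gtrsim\log(d_1+d_2)$, this is harmless here, but it means the blanket ``for all $1\le p<\infty$'' in the operator-norm formulation is a paraphrase rather than a literal restatement of the $L_p$ result being cited; your observation that ``nothing is lost'' in the regime actually used is the right way to square this.
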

Then the following lemma follows immediately from Theorem~\ref{thm:ncrosenthal}.

\begin{lemma}
\label{lemma:wgaussouterproduct}
Let $\vg_1,\dots,\vg_M \in \mathbb{R}^d$ be independent copies of $\vg \sim \mathcal{N}(\vzero,\mId_d)$, $\vlambda = [\lambda_1,\dots,\lambda_M]^\transpose \in \mathbb{R}^M$, and $\nu \in (0,1)$.  Then there exist numerical constants $C_1,C_2 > 0$ such that
\begin{equation}
\label{eq:wgaussouterproduct_moment}
\Big( \E \Big\| \frac{1}{M} \sum_{m=1}^M \lambda_m (\vg_m \vg_m^\transpose - \mId_d) \Big\|^p \Big)^{1/p}
\leq C_1 \norm{\vlambda}_\infty \Big[ M^{-1/2} \sqrt{p d} + M^{1/p-1} p (d + p) \Big]
\end{equation}
for all $p \in \mathbb{N}$, and
\[
\Big\| \frac{1}{M} \sum_{m=1}^M \lambda_m (\vg_m \vg_m^\transpose - \mId_d) \Big\|
\leq \delta
\]
holds with probability $1 - \nu$ provided
\begin{equation}
\label{eq:wgaussouterproduct_M}
M \geq C_2 \left(\delta^{-1} \norm{\vlambda}_\infty \vee \delta^{-2} \norm{\vlambda}_\infty^2\right) \left(d \log(M/\nu) \vee \log^2(M/\nu)\right).
\end{equation}
\end{lemma}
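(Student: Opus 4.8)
The plan is to regard $\frac1M\sum_{m=1}^M\lambda_m(\vg_m\vg_m^\transpose-\mId_d)$ as a sum of independent mean-zero random matrices, apply the noncommutative Rosenthal inequality (Theorem~\ref{thm:ncrosenthal}) to obtain the moment bound \eqref{eq:wgaussouterproduct_moment}, and then pass to the tail bound by Markov's inequality with an optimally chosen exponent.

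Set $\mY_m := \frac1M\lambda_m(\vg_m\vg_m^\transpose-\mId_d)$ for $m=1,\dots,M$; these are independent and mean-zero, so Theorem~\ref{thm:ncrosenthal} applies to $\sum_m\mY_m$. Two ingredients are needed. First, the variance parameter: since each $\mY_m$ is symmetric, $\E\mY_m\mY_m^* = \E\mY_m^*\mY_m = \frac{\lambda_m^2}{M^2}\,\E(\vg\vg^\transpose-\mId_d)^2$, and expanding the square together with $\E\norm{\vg}_2^2\,\vg\vg^\transpose = (d+2)\mId_d$ and $\E\vg\vg^\transpose=\mId_d$ gives $\E(\vg\vg^\transpose-\mId_d)^2=(d+1)\mId_d$; hence $\bigl\|\sum_m\E\mY_m\mY_m^*\bigr\|\le(d+1)\norm{\vlambda}_\infty^2/M$, and $\sqrt p$ times its square root is at most a constant multiple of $\norm{\vlambda}_\infty M^{-1/2}\sqrt{pd}$. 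Second, the ``weak'' term: the eigenvalues of $\vg\vg^\transpose-\mId_d$ are $\norm{\vg}_2^2-1$ (once) and $-1$, so $\norm{\vg\vg^\transpose-\mId_d}\le\norm{\vg}_2^2+1$, and since $\norm{\vg}_2^2$ is a chi-square variable with $d$ degrees of freedom, $\E\norm{\vg}_2^{2p}=2^p\Gamma(d/2+p)/\Gamma(d/2)\le(d+2p)^p$, which yields $\bigl(\E\norm{\vg\vg^\transpose-\mId_d}^p\bigr)^{1/p}\lesssim d+p$. Consequently $\bigl(\sum_m\E\norm{\mY_m}^p\bigr)^{1/p}\le\norm{\vlambda}_\infty M^{1/p-1}\bigl(\E\norm{\vg\vg^\transpose-\mId_d}^p\bigr)^{1/p}\lesssim\norm{\vlambda}_\infty M^{1/p-1}(d+p)$, and multiplying by $p$ gives the second bracketed term in \eqref{eq:wgaussouterproduct_moment}. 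Plugging both estimates into Theorem~\ref{thm:ncrosenthal} proves \eqref{eq:wgaussouterproduct_moment}.

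To get the tail bound I would use Markov's inequality in the form $\mathbb{P}\bigl(\norm{\tfrac1M\sum_m\lambda_m(\vg_m\vg_m^\transpose-\mId_d)}>\delta\bigr)\le\bigl((\E\norm{\cdot}^p)^{1/p}/\delta\bigr)^p$ and take $p=\lceil\log(M/\nu)\rceil$. This choice makes $M^{1/p}\le e$, so the second bracketed term of \eqref{eq:wgaussouterproduct_moment} becomes at most a constant times $\norm{\vlambda}_\infty M^{-1}p(d+p)$, and requiring both bracketed terms to be at most $\delta/e$ is, after substituting $p\approx\log(M/\nu)$, exactly the condition \eqref{eq:wgaussouterproduct_M} with a suitable $C_2$ (the two regimes $d\log(M/\nu)$ and $\log^2(M/\nu)$ arising from $pd$ and $p^2$ respectively, and the $\delta^{-1}\norm{\vlambda}_\infty$ versus $\delta^{-2}\norm{\vlambda}_\infty^2$ scalings from the two terms). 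Under that condition $(\E\norm{\cdot}^p)^{1/p}/\delta\le e^{-1}$, so the Markov bound is at most $e^{-p}\le\nu$, which is the claim.

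There is no conceptual obstacle — the lemma genuinely reduces to substitution in Theorem~\ref{thm:ncrosenthal} — so the only care needed is in the chi-square moment estimate $\bigl(\E\norm{\vg\vg^\transpose-\mId_d}^p\bigr)^{1/p}\lesssim d+p$ and in bookkeeping the optimized exponent $p\approx\log(M/\nu)$ against the precise $d\log(M/\nu)\vee\log^2(M/\nu)$ form of \eqref{eq:wgaussouterproduct_M}.
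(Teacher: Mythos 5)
Your proof is correct and follows essentially the same route as the paper: apply the noncommutative Rosenthal inequality to $\mY_m=\lambda_m(\vg_m\vg_m^\transpose-\mId_d)$, compute $\E\mY_m^2=(d+1)\lambda_m^2\mId_d$ for the variance term and $(\E\norm{\vg\vg^\transpose-\mId_d}^p)^{1/p}\lesssim d+p$ for the weak term, then pass to the tail by Markov with $p\approx\log(M/\nu)$. The only cosmetic difference is that you bound the chi-square moment $\E\norm{\vg}_2^{2p}$ exactly via $2^p\Gamma(d/2+p)/\Gamma(d/2)\le(d+2p)^p$, whereas the paper uses the subgaussian concentration $\bigl\|\norm{\vg}_2-\sqrt d\bigr\|_{\psi_2}\lesssim 1$; both give the same $d+p$ scaling.
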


\begin{proof}[Proof of Lemma~\ref{lemma:wgaussouterproduct}]
We apply Theorem~\ref{thm:ncrosenthal} for $\mY_m = \lambda_m (\vg_m \vg_m^\transpose - \mId_d)$ for $m=1,\dots,M$. By the traingle inequality, we have 
\[
(\mathbb{E} \norm{\mY_m}^p)^{1/p} 
\leq \lambda_m + \lambda_m (\mathbb{E} \norm{\vg_m \vg_m^\transpose}^p)^{1/p}
= \lambda_m + \lambda_m (\mathbb{E} \norm{\vg_m}_2^{2p})^{1/p}
\leq C_1 \lambda_m (d + p).
\]
Here the last step follows since 
\[
\|\norm{\vg}_2 - \sqrt{d}\|_{L_{2p}} \leq C \sqrt{2p} \left\| \norm{\vg}_2 - \sqrt{d} \, \right\|_{\psi_2} \leq C' \sqrt{p},
\]
where $\norm{\cdot}_{\psi_2}$ denotes the subgaussian norm.
Therefore we obtain 
\begin{equation}
\label{eq:wgaussouterproduct_moment:eq1}
\Big( \sum_{m=1}^M \E \norm{\mY_m}^p \Big)^{1/p} 
\leq C_3 \|\vlambda\|_\infty M^{1/p} (d + p).
\end{equation}

Furthermore, the expectation of $\mY_m^2 = \lambda_m^2( \vg_m \vg_m^\transpose \vg_m \vg_m^\transpose - 2 \vg_m \vg_m^\transpose + \mId_d)$ is computed by using Lemma~\ref{lemma:exp_gauss_quad_ip} as $\mathbb{E} \mY_m^2 = \lambda_m^2 (d+1) \mId_d$. Therefore it follows that
\begin{equation}
\label{eq:wgaussouterproduct_moment:eq2}
\Big\| \sum_{m=1}^M \E \mY_m^2 \Big\|^{1/2} = \sqrt{d+1} \|\vlambda\|_2 \leq C_4 \sqrt{Md} \|\vlambda\|_\infty. 
\end{equation}
Then \eqref{eq:wgaussouterproduct_moment} is obtained by plugging in \eqref{eq:wgaussouterproduct_moment:eq1} and \eqref{eq:wgaussouterproduct_moment:eq2} to Theorem~\ref{thm:ncrosenthal}. 

Next, by the Markov inequality, we have
\begin{align}
\mathbb{P}
\Big(
\Big\| \frac{1}{M} \sum_{m=1}^M \mY_m \Big\|
> \delta
\Big)
& \leq
\delta^{-p} \E \Big\| \frac{1}{M} \sum_{m=1}^M \mY_m \Big\|^p \nonumber \\
& \leq C_1 \delta^{-p} \norm{\vlambda}_\infty^p \Big[ M^{-1/2} \sqrt{p d} + M^{1/p-1} p (d + p) \Big]^p. \label{eq:wgaussouterproduct_moment:eq3}
%& \leq C_1 \delta^{-p} \norm{\vlambda}_\infty \Big[ M^{-1/2} \sqrt{p d} + M^{1/p-1} p (d + p^2 d^{1/p}) \Big]. \label{eq:wgaussouterproduct_moment:eq3}
\end{align}
Let $p = \log (M/\nu)$. Then \eqref{eq:wgaussouterproduct_M} implies that the right-hand side of \eqref{eq:wgaussouterproduct_moment:eq3} is upper-bounded by $\nu$. This completes the proof.
\end{proof}

\section{Proof of Lemma~\ref{lemma:init}}
%\label{sec:proof:lemma:init_rank1}

Let $\phi := \angle(\vu_0,\vu_\sharp)$ and $\psi := \angle(\vv_0,\vv_\sharp)$.
Then
\[
\inf_{\theta \in [0,2\pi)} \norm{\vu_0 \vv_0^\transpose - e^{\mathfrak{i} \theta} \vu_\sharp \vv_\sharp^\transpose}_\mathrm{F}^2
= 2 - 2 \cos\phi \cos\psi
\leq 2 - 2 \cos^2(\phi \vee \psi) = 2 \sin^2(\phi \vee \psi).
\]
Therefore, it suffices to show
\[
\sin (\phi \vee \psi) = \sin\phi \vee \sin\psi \leq \frac{\delta}{\sqrt{2}}.
\]

We will only show $\sin\phi \leq \sqrt{\delta/2}$.  The derivation of the other part is essentially the same due to symmetry.  Without loss of generality, we assume $\norm{\mX_\sharp}_\mathrm{F} = 1$ (or equivalently $\sigma_\sharp = 1$).

Since $\mX_\sharp$ is a scalar multiple of the most dominant eigenvector of $\E \mUpsilon$, we use the Davis-Kahan theorem \cite{davis1970rotation} to bound the error in estimating $\vu_\sharp$ as the dominant eigenvector of $\mUpsilon$.  
Among variations of the Davis-Kahan theorem, we use the version given in terms of the principal angle between two subspaces.  
The following theorem states this result and is obtained by combining the argument of \cite[Corollary~7.2.6]{golub2012matrix} and the $\sin\theta$ theorem for any unitarily invariant norm \cite{wedin1972perturbation}.

\begin{theorem}[{Davis-Kahan $\sin\theta$ theorem}]
	\label{thm:sintheta}
	Let $\mA, \mDelta \in \mathbb{C}^{n \times n}$ satisfy that $\mA$ and $\mA+\mDelta$ are positive semidefinite. Let $\mQ \in \mathbb{C}^{n \times r}$ (resp. $\hat{\mQ} \in \mathbb{C}^{n \times r}$) denote the matrix whose columns are the eigenvectors of $\mA$ (resp. $\mA+\mDelta$) corresponding to the $r$-largest eigenvalues.  Suppose that $\lambda_r(\mA) > \lambda_{r+1}(\mA)$.  If
	\begin{equation*}
%		\label{eq:smallperturb}
		\norm{\mDelta} \leq \frac{\lambda_r(\mA)-\lambda_{r+1}(\mA)}{5},
	\end{equation*}
	then
	\begin{equation*}
%		\label{eq:errbnd}
		\sin\angle(\mathrm{span}(\mQ),\mathrm{span}(\hat{\mQ})) \leq \frac{4 \norm{\mDelta}}{\lambda_r(\mA)-\lambda_{r+1}(\mA)}.	
	\end{equation*}
\end{theorem}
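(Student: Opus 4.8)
The plan is to derive Theorem~\ref{thm:sintheta} from the classical Davis--Kahan $\sin\Theta$ estimate, the one subtlety being that a spectral gap is assumed only for $\mA$ and not for $\mA+\mDelta$. First I would fix spectral decompositions $\mA = \mQ\mLambda_1\mQ^* + \mQ_\perp\mLambda_2\mQ_\perp^*$ and $\mA+\mDelta = \hat\mQ\hat\mLambda_1\hat\mQ^* + \hat\mQ_\perp\hat\mLambda_2\hat\mQ_\perp^*$ with $[\mQ,\mQ_\perp]$ and $[\hat\mQ,\hat\mQ_\perp]$ unitary, $\mLambda_1$ (resp.\ $\hat\mLambda_1$) carrying the $r$ largest eigenvalues of $\mA$ (resp.\ $\mA+\mDelta$) and $\mLambda_2,\hat\mLambda_2$ the remaining (smaller) ones. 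Set $g := \lambda_r(\mA) - \lambda_{r+1}(\mA) > 0$. The first step is Weyl-inequality bookkeeping: $|\lambda_k(\mA+\mDelta) - \lambda_k(\mA)| \le \norm{\mDelta}$ for every $k$, so under the hypothesis $\norm{\mDelta} \le g/5$ the perturbed matrix also has a positive gap ($\ge \tfrac35 g$) at index $r$, which makes $\hat\mQ$ well defined; moreover $\mathrm{spec}(\mLambda_2) \subseteq (-\infty, \lambda_{r+1}(\mA)]$ is separated from $\mathrm{spec}(\hat\mLambda_1) \subseteq [\lambda_r(\mA+\mDelta), \infty)$ by at least $g_0 := \lambda_r(\mA+\mDelta) - \lambda_{r+1}(\mA) \ge g - \norm{\mDelta} \ge \tfrac45 g$.

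The heart of the argument is a Sylvester equation for $\mX := \mQ_\perp^*\hat\mQ$. From $(\mA+\mDelta)\hat\mQ = \hat\mQ\hat\mLambda_1$, i.e.\ $\mA\hat\mQ = \hat\mQ\hat\mLambda_1 - \mDelta\hat\mQ$, together with $\mQ_\perp^*\mA = \mLambda_2\mQ_\perp^*$, left-multiplying by $\mQ_\perp^*$ yields
\[
\mLambda_2\mX - \mX\hat\mLambda_1 = -\,\mQ_\perp^*\mDelta\hat\mQ .
\]
Since $\mathrm{spec}(\mLambda_2)$ and $\mathrm{spec}(\hat\mLambda_1)$ are $g_0$-separated, the Sylvester operator $\mZ \mapsto \mLambda_2\mZ - \mZ\hat\mLambda_1$ is invertible with inverse of norm at most $1/g_0$ (in any unitarily invariant norm --- this is where Wedin's $\sin\Theta$ theorem is invoked --- and, for the operator norm, elementary because $\mLambda_2$ and $\hat\mLambda_1$ are Hermitian with strictly separated spectra, e.g.\ via $\mX = -\int_0^\infty e^{t\mLambda_2}(\mQ_\perp^*\mDelta\hat\mQ)e^{-t\hat\mLambda_1}\,dt$). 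Hence $\norm{\mX} \le \norm{\mQ_\perp^*\mDelta\hat\mQ}/g_0 \le \norm{\mDelta}/g_0 \le \tfrac54\,\norm{\mDelta}/g \le 4\norm{\mDelta}/g$.

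Finally I would identify $\norm{\mX}$ with the sine of the principal angle: since $[\mQ,\mQ_\perp]^*\hat\mQ$ has orthonormal columns, $(\mQ^*\hat\mQ)^*(\mQ^*\hat\mQ) + \mX^*\mX = \mId_r$, so the eigenvalues of $\mX^*\mX$ are $1 - \cos^2\theta_i = \sin^2\theta_i$ for the principal angles $\theta_1 \le \dots \le \theta_r$ between $\Span(\mQ)$ and $\Span(\hat\mQ)$; thus $\norm{\mX} = \sin\theta_r = \sin\angle(\Span(\mQ),\Span(\hat\mQ))$, and combining with the last display gives the claim. The main obstacle, and the reason this is not verbatim the textbook statement, is precisely the gap mismatch: the quantity that actually controls the Sylvester operator is the separation between $\mathrm{spec}(\mLambda_2)$ and $\mathrm{spec}(\hat\mLambda_1)$, not $g$ itself, so the Weyl step and the smallness condition $\norm{\mDelta} \le g/5$ are needed to convert one into the other; the looser constant $4$ (rather than $\tfrac54$) is the price one pays if, following \cite{golub2012matrix}, one routes the argument through a fixed-point bound on invariant subspaces rather than the Hermitian Sylvester estimate above.
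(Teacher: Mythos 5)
Your proof is correct, and it supplies something the paper itself does not: a worked-out argument. The paper only states that Theorem~\ref{thm:sintheta} ``is obtained by combining the argument of [Corollary~7.2.6 of Golub--Van Loan] and the $\sin\theta$ theorem for any unitarily invariant norm [Wedin 1972],'' with no proof given. What you have written is the Hermitian Sylvester-equation route made explicit: Weyl's inequality turns $\norm{\mDelta}\le g/5$ into a lower bound $g_0=\lambda_r(\mA+\mDelta)-\lambda_{r+1}(\mA)\ge 4g/5$ on the one-sided separation $\lambda_{\max}(\mLambda_2)<\lambda_{\min}(\hat\mLambda_1)$; the Sylvester operator $\mZ\mapsto\mLambda_2\mZ-\mZ\hat\mLambda_1$ is then inverted with spectral norm at most $1/g_0$ via the exponential integral; and $\norm{\mQ_\perp^*\hat\mQ}$ is identified with $\sin\angle(\Span\mQ,\Span\hat\mQ)$ through $(\mQ^*\hat\mQ)^*(\mQ^*\hat\mQ)+\mX^*\mX=\mId_r$. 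As you yourself observe, this gives the sharper bound $\sin\angle\le\tfrac54\norm{\mDelta}/g$; the constants $5$ and $4$ in the stated theorem are exactly those produced by the Golub--Van Loan fixed-point argument for invariant subspaces, which the paper elects to cite rather than reproduce, and which is looser but applies beyond the Hermitian setting. Two cosmetic points: the sign in your integral representation should be $\mX=+\int_0^\infty e^{t\mLambda_2}(\mQ_\perp^*\mDelta\hat\mQ)e^{-t\hat\mLambda_1}\,dt$ rather than $-$ (differentiating $e^{t\mLambda_2}\mC e^{-t\hat\mLambda_1}$ and integrating gives $\mLambda_2\mX-\mX\hat\mLambda_1=-\mC$ with the $+$ sign; the norm bound is unaffected), and because the separation here is one-sided, you do not actually need to invoke Wedin's general unitarily-invariant $\sin\theta$ theorem --- the integral formula already closes the spectral-norm case elementarily.
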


To prove Lemma~\ref{lemma:init}, we apply Theorem~\ref{thm:sintheta} to $\mA = \E \mUpsilon$ and $\mDelta = \mUpsilon - \E \mUpsilon$ with $r=1$.
Since
\[
\mA = \vu_\sharp \vu_\sharp^* + \Big(1 + \frac{1}{M} \sum_{m=1}^M \xi_m\Big) \mId_{d_1},
\]
it follows that
\[
\lambda_k(\mA) = \lambda_k(\vu_\sharp \vu_\sharp^*) + 1 + \frac{1}{M} \sum_{m=1}^M \xi_m.
\]
Therefore we obtain
\[
\lambda_1(\mA) - \lambda_2(\mA) = \lambda_1(\vu_\sharp \vu_\sharp^*) - \lambda_2(\vu_\sharp \vu_\sharp^*) = 1.
\]

It remains to show
\begin{equation}
\label{eq:small_perturb_rank1}
\norm{\mDelta} \leq \frac{\delta}{4\sqrt{2}}.
\end{equation}

Let us first decompose $\mDelta$ into its noise-free portion and the remainder as
\[
\mDelta
= \frac{1}{M} \sum_{m=1}^M |\vb_m^* \vv_\sharp|^2 |\va_m^* \vu_\sharp|^2 \va_m \va_m^* - \E |\vb_m^* \vv_\sharp|^2 |\va_m^* \vu_\sharp|^2 \va_m \va_m^*
+ \frac{1}{M} \sum_{m=1}^M \xi_m (\va_m \va_m^* - \mId_{d_1}).
\]
Then \eqref{eq:small_perturb_rank1} is implied by
\begin{equation}
\Big\| \frac{1}{M} \sum_{m=1}^M |\vb_m^* \vv_\sharp|^2 |\va_m^* \vu_\sharp|^2 \va_m \va_m^* - \E |\vb_m^* \vv_\sharp|^2 |\va_m^* \vu_\sharp|^2 \va_m \va_m^* \Big\|
\leq \frac{\delta}{8\sqrt{2}}
\label{eq:small_perturb_rank1_part1}
\end{equation}
and
\begin{equation}
\Big\| \frac{1}{M} \sum_{m=1}^M \xi_m (\va_m \va_m^* - \mId_{d_1}) \Big\|\leq \frac{\delta}{8\sqrt{2}}.
\label{eq:small_perturb_rank1_part2}
\end{equation}
Indeed, by Lemma~\ref{lemma:wgaussouterproduct}, \eqref{eq:snr_cond_init2} implies that \eqref{eq:small_perturb_rank1_part2} holds with probability $1 - \nu/2$ where $\nu = M^{-\alpha}$.

In the remainder of the proof, we show \eqref{eq:sampl_comp_init} implies \eqref{eq:small_perturb_rank1_part1} with probability $1-\nu/2$.
Let
\[
\mY_m = \mZ_m - \E \mZ_m, \quad m=1,\dots,M,
\]
where
\begin{equation}
\label{eq:def_Zm}
\mZ_m = |\vb_m^* \vv_\sharp|^2 |\va_m^* \vu_\sharp|^2 \va_m \va_m^*.
\end{equation}
Then \eqref{eq:small_perturb_rank1_part1} is written as
\begin{equation}
\Big\| \frac{1}{M} \sum_{m=1}^M \mY_m \Big\|
\leq \frac{\delta}{8\sqrt{2}}.
\label{eq:small_perturb_rank1_part1a}
\end{equation}

To show \eqref{eq:small_perturb_rank1_part1a}, we use the noncommutative Rosenthal inequality in Theorem \ref{thm:ncrosenthal}.
By direct calculation, we obtain
\[
\E \mZ_m = \vu_\sharp \vu_\sharp^* + \mId_{d_1}.
\]
Next, by plugging in \eqref{eq:def_Zm} into $\E \mZ_m^* \mZ_m$, we obtain
\begin{equation}
\label{eq:init_var}
\E \mZ_m^* \mZ_m = \E |\vb_m^* \vv_\sharp|^4 |\va_m^* \vu_\sharp|^4 \va_m \va_m^* \va_m \va_m^*.
\end{equation}
By decomposing the right-hand side of \eqref{eq:init_var} with $\mP_{\vu} + \mP_{\vu^\perp} = \mId_{d_1}$, $\E \mZ_m^* \mZ_m$ is rewritten as
\begin{subequations}
\label{eq:init_var_part1a}
\begin{align}
\E \mZ_m^* \mZ_m
&= \E |\vb_m^* \vv_\sharp|^4 |\va_m^* \vu_\sharp|^4 \mP_{\vu_\sharp} \va_m \va_m^* \mP_{\vu_\sharp} \va_m \va_m^* \mP_{\vu_\sharp} \label{eq:init_var_part1} \\
&+ \E |\vb_m^* \vv_\sharp|^4 |\va_m^* \vu_\sharp|^4 \mP_{\vu_\sharp} \va_m \va_m^* \mP_{\vu_\sharp^\perp} \va_m \va_m^* \mP_{\vu_\sharp} \label{eq:init_var_part2} \\
&+ \E |\vb_m^* \vv_\sharp|^4 |\va_m^* \vu_\sharp|^4 \mP_{\vu_\sharp^\perp} \va_m \va_m^* \mP_{\vu_\sharp} \va_m \va_m^* \mP_{\vu_\sharp^\perp} \label{eq:init_var_part3a} \\
&+ \E |\vb_m^* \vv_\sharp|^4 |\va_m^* \vu_\sharp|^4 \mP_{\vu_\sharp^\perp} \va_m \va_m^* \mP_{\vu_\sharp^\perp} \va_m \va_m^* \mP_{\vu_\sharp^\perp} \label{eq:init_var_part3b} \\
&+ \E |\vb_m^* \vv_\sharp|^4 |\va_m^* \vu_\sharp|^4 \mP_{\vu_\sharp} \va_m \va_m^* \mP_{\vu_\sharp^\perp} \va_m \va_m^* \mP_{\vu_\sharp^\perp} \label{eq:init_var_part4} \\
&+ \E |\vb_m^* \vv_\sharp|^4 |\va_m^* \vu_\sharp|^4 \mP_{\vu_\sharp^\perp} \va_m \va_m^* \mP_{\vu_\sharp^\perp} \va_m \va_m^* \mP_{\vu_\sharp} \label{eq:init_var_part5} \\
&+ \E |\vb_m^* \vv_\sharp|^4 |\va_m^* \vu_\sharp|^4 \mP_{\vu_\sharp} \va_m \va_m^* \mP_{\vu_\sharp} \va_m \va_m^* \mP_{\vu_\sharp^\perp} \label{eq:init_var_part6} \\
&+ \E |\vb_m^* \vv_\sharp|^4 |\va_m^* \vu_\sharp|^4 \mP_{\vu_\sharp^\perp} \va_m \va_m^* \mP_{\vu_\sharp} \va_m \va_m^* \mP_{\vu_\sharp}. \label{eq:init_var_part7}
\end{align}
\end{subequations}

Since $\vu_\sharp^* \va_m$ and $\mP_{\vu_\sharp^\perp} \va_m$ are independent, which follows from $\va_m \sim \mathcal{CN}(\vzero,\mId_{d_1})$, we can substitute $\mP_{\vu_\sharp^\perp} \va_m$ by $\mP_{\vu_\sharp^\perp} \breve{\va}_m$, where $\breve{\va}_m$ is an independent copy of $\va_m$. For a standard complex Gaussian random variable $\breve{g} \sim \mathcal{CN}(0,1)$, we have
\[
\E |\breve{g}|^2 = 1, ~ \E |\breve{g}|^4 = 2,
~ \E |\breve{g}|^6 = 6, ~ \E |\breve{g}|^8 = 24.
\]
Therefore, by using these even-order moments of $\mathcal{CN}(0,1)$ together with the independence between $\va_m$ and $\breve{\va}_m$, we can compute \cref{eq:init_var_part1,eq:init_var_part2,eq:init_var_part3a,eq:init_var_part3b} as follows:
\begin{align*}
\E |\vb_m^* \vv_\sharp|^4 |\va_m^* \vu_\sharp|^4 \mP_{\vu_\sharp} \va_m \va_m^* \mP_{\vu_\sharp} \va_m \va_m^* \mP_{\vu_\sharp} &= 48 \mP_{\vu_\sharp}, \\
\E |\vb_m^* \vv_\sharp|^4 |\va_m^* \vu_\sharp|^4 \mP_{\vu_\sharp} \va_m \breve{\va}_m^* \mP_{\vu_\sharp^\perp} \breve{\va}_m \va_m^* \mP_{\vu_\sharp} &= 12(d_1-1) \mP_{\vu_\sharp}, \\
\E |\vb_m^* \vv_\sharp|^4 |\va_m^* \vu_\sharp|^4 \mP_{\vu_\sharp^\perp} \breve{\va}_m \va_m^* \mP_{\vu_\sharp} \va_m \breve{\va}_m^* \mP_{\vu_\sharp^\perp} &= 12 \mP_{\vu_\sharp^\perp} \\
\E |\vb_m^* \vv_\sharp|^4 |\va_m^* \vu_\sharp|^4 \mP_{\vu_\sharp^\perp} \breve{\va}_m \breve{\va}_m^* \mP_{\vu_\sharp^\perp} \breve{\va}_m \breve{\va}_m^* \mP_{\vu_\sharp^\perp} &= 4(d_1+1) \mP_{\vu_\sharp^\perp}.
\end{align*}
Furthermore, each of the remaining summands \cref{eq:init_var_part4,eq:init_var_part5,eq:init_var_part6,eq:init_var_part7} vanishies since it has a factor given as a central Gaussian moments of an odd order.

Applying the above results to \eqref{eq:init_var_part1a} provides
\[
\E \mZ_m^* \mZ_m = (12d_1+36) \mP_{\vu_\sharp} + (4d_1+16) \mP_{\vu_\sharp^\perp}.
\]
Then, by the definition of $\mY_m$, we have
\begin{align*}
\E \mY_m^* \mY_m
= \E \mZ_m^* \mZ_m - (\E \mZ_m)^* (\E \mZ_m)
= (12d_1+32) \mP_{\vu_\sharp} + (4d_1+15) \mP_{\vu_\sharp^\perp}.
\end{align*}
Therefore, for $d_1 \geq 3$, we have
\begin{equation}
\label{eq:init_ncr_sigma}
\Big\| \sum_{m=1}^M \E \mY_m \mY_m^* \Big\|^{1/2}
\vee
\Big\| \sum_{m=1}^M \E \mY_m^* \mY_m \Big\|^{1/2}
\leq C_1 \sqrt{M d_1}.
\end{equation}

Next we compute the $p$th moment of the spectral norm.  The $p$th moment is considered as the norm in $L_p$.  Then by the triangle inequality in $L_p$ we obtain
\begin{equation}
\label{eq:msn_eq1}
\left(\E \norm{\mY_m}^p\right)^{1/p}
\leq \left(\E \norm{\mZ_m}^p\right)^{1/p} + \norm{\E \mZ_m}
\leq \left(\E \norm{\mZ_m}^p\right)^{1/p} + 2.
\end{equation}
Again by the triangle inequality we obtain
\begin{equation}
\label{eq:msn_eq2}
\begin{aligned}
(\E \norm{\mZ_m}^p)^{1/p}
& = \left[\E \left(
|\vb_m^* \vv_\sharp|^2 |\va_m^* \vu_\sharp|^2 \norm{\mP_{\vu_\sharp}\va_m}_2^2
+ |\vb_m^* \vv_\sharp|^2 |\va_m^* \vu_\sharp|^2 \norm{\mP_{\vu_\sharp^\perp}\va_m}_2^2\right)^p\right]^{1/p} \\
& \leq
\left(\E |\vb_m^* \vv_\sharp|^{2p} |\va_m^* \vu_\sharp|^{4p}\right)^{1/p}
+ \left(\E |\vb_m^* \vv_\sharp|^{2p} |\va_m^* \vu_\sharp|^{2p} \norm{\mP_{\vu_\sharp^\perp} \breve{\va}_m}_2^{2p}\right)^{1/p} \\
& \leq
\left(\E |\vb_m^* \vv_\sharp|^{2p}\right)^{1/p} \left(\E |\va_m^* \vu_\sharp|^{4p}\right)^{1/p}
+
\left(\E |\vb_m^* \vv_\sharp|^{2p}\right)^{1/p}
\left(\E |\va_m^* \vu_\sharp|^{2p}\right)^{1/p}
\left(\E \norm{\breve{\va}_m}_2^{2p}\right)^{1/p}.
\end{aligned}
\end{equation}
Since $\va_m^* \vu_\sharp \sim \mathcal{CN}(0,1)$ and $\vb_m^* \vv_\sharp \sim \mathcal{CN}(0,1)$, by Lemma~\ref{lemma:gaussianLP}, there exists a numerical constant $C_2$ such that
\[
(\E |\va_m^* \vu_\sharp|^p)^{1/p} = (\E |\vb_m^* \vv_\sharp|^p)^{1/p} \leq C_2 \sqrt{p}.
\]
Since $2 \norm{\breve{\va}_m}_2^2$ is a chi-square random variable of the degree-of-freedom $2 d_1$, we obtain
\begin{equation*}
%\label{eq:chisquare_moment}
\left(\E \norm{\breve{\va}_m}_2^{2p}\right)^{1/p} 
\leq C_3 (d_1 + p),
\quad \forall p \geq 2.
\end{equation*}
Applying these upper estimates of the moments to \eqref{eq:msn_eq2} then to \eqref{eq:msn_eq1} provides
\begin{align*}
(\E \norm{\mY_m}^p)^{1/p} & \leq C_4 (p^2 d_1 + p^3),
\end{align*}
which implies
\begin{equation}
\label{eq:init_ncr_R}
p \Big( \sum_{m=1}^M \E \norm{\mY_m}^p \Big)^{1/p}
\leq C_4 M^{1/p} (p^3 d_1 + p^4).
\end{equation}

By applying \eqref{eq:init_ncr_sigma} and \eqref{eq:init_ncr_R} to Theorem~\ref{thm:ncrosenthal}, we obtain
\begin{equation}
\label{eq:init_moment_eq1}
\begin{aligned}
\Big( \E \Big\| \frac{1}{M} \sum_{m=1}^M \mY_m \Big\|^p \Big)^{1/p}
%\\ & \quad
\leq C_5
\Bigg[
\sqrt{\frac{p d_1}{M}}
+ \frac{M^{1/p} (p^3 d_1 + p^4)}{M}
\Bigg]
\end{aligned}
\end{equation}
for all $p \geq 2$ and $d_1 \geq 3$.

Finally, similar to \cite[Proposition~7.11]{foucart2013mathematical}, we derive a tail bound from moment bounds.  It follows from the Markov inequality that
\begin{equation}
\label{eq:init_moment_eq2}
\mathbb{P}
\Big(
\Big\| \frac{1}{M} \sum_{m=1}^M \mY_m \Big\|
> \frac{\delta}{8\sqrt{2}}
\Big)
\leq
\Big(\frac{8\sqrt{2}}{\delta}\Big)^p
\E \Big\| \frac{1}{M} \sum_{m=1}^M \mY_m \Big\|^p.
\end{equation}
By plugging in \eqref{eq:init_moment_eq1} to \eqref{eq:init_moment_eq2}, it follows that \eqref{eq:small_perturb_rank1_part1a} holds with probability $\nu$ provided that
\begin{equation*}
%\label{eq:init_moment_eq3}
C_6
\Bigg[
\sqrt{\frac{p d_1}{M}}
+ \frac{M^{1/p} (p^3 d_1 + p^4)}{M}
\Bigg]
\leq \delta \nu^{1/p}.
\end{equation*}
Then we set $p = \log (M/\nu)$ so that \eqref{eq:sampl_comp_init} implies that \eqref{eq:sampl_comp_init} holds with probability $1-\nu/2$.  Therefore, the probability for violating \eqref{eq:small_perturb_rank1_part1a} becomes $\nu = M^{-\alpha}$.  This completes the proof.

\section{Proof of Lemma~\ref{lemma:init_iidG}}
\label{sec:proof:lemma:init_iidG}
%\begin{proof}[Proof of Lemma~\ref{lemma:init_iidG}]
To simplify notation, let
\begin{align*}
\mZ_m :=
\langle \mPhi_m, \mX_\sharp \rangle^2
\mPhi_m \hat{\mV} \hat{\mV}^\transpose \mPhi_m^\transpose,
\quad m=1,\dots,M.
\end{align*}
Then $\mUpsilon$ is written as
\begin{equation}
\label{eq:exp_Ups2}
\mUpsilon = \frac{1}{M} \sum_{m=1}^M (
\mZ_m + \underbrace{\xi_m \mPhi_m \hat{\mV} \hat{\mV}^\transpose \mPhi_m^\transpose}_{\text{($\flat$)}}
).
\end{equation}

We derive the expectation of $\mUpsilon$ in the following steps:  First the expectation of the noise part ($\flat$) in \eqref{eq:exp_Ups2} is computed as
\begin{equation}
\label{eq:exp_Ups_noise}
\E \xi_m \mPhi_m \hat{\mV} \hat{\mV}^\transpose \mPhi_m^\transpose
= \xi_m \mathrm{tr}(\hat{\mV} \hat{\mV}^\transpose) \mId_{d_1}
= r \xi_m \mId_{d_1}.
\end{equation}
Next we compute $\E \mZ_m$ by using Lemma~\ref{lemma:exp_gauss_quad}.  Let $\vx_\sharp = \mathrm{vec}(\mX_\sharp)$ and $\vphi_m = \mathrm{vec}(\mPhi_m)$ for $m=1,\dots,M$.  Then $\mZ_m$ is rewritten as
\begin{align*}
\mZ_m
{} & =
(\mathrm{tr} \otimes \mId_{d_1})
\Big[
(\hat{\mV}^\transpose \otimes \mId_{d_1})
\langle \vphi_m, \vx_\sharp \rangle^2 \vphi_m \vphi_m^\transpose
(\hat{\mV} \otimes \mId_{d_1})
\Big].
\end{align*}
Since the partial trace operator is linear, the expectation of $\mZ_m$ is written as
\begin{equation}
\begin{aligned}
\mathbb{E} \mZ_m
{} & =
(\mathrm{tr} \otimes \mId_{d_1})
\Big[
(\hat{\mV}^\transpose \otimes \mId_{d_1})
\mathbb{E} \langle \vphi_m, \vx_\sharp \rangle^2 \vphi_m \vphi_m^\transpose
(\hat{\mV} \otimes \mId_{d_1})
\Big] \\
{} & = (\mathrm{tr} \otimes \mId_{d_1})
\Big[
(\hat{\mV}^\transpose \otimes \mId_{d_1})
(2 \vx_\sharp \vx_\sharp^\transpose + \norm{\mX_\sharp}_\mathrm{F}^2 \mId_{d_1 d_2})
(\hat{\mV} \otimes \mId_{d_1})
\Big] \\
{} & = 2 \mX_\sharp \hat{\mV} \hat{\mV}^\transpose \mX_\sharp^\transpose
+ r \norm{\mX_\sharp}_\mathrm{F}^2 \mId_{d_1},
\end{aligned}
\label{eq:exp_Z_m}
\end{equation}
where the second identity follows from Lemma~\ref{lemma:exp_gauss_quad}.
Then by combining \eqref{eq:exp_Ups_noise} and \eqref{eq:exp_Z_m}, the expectation of $\mUpsilon$ is written as
\begin{equation}
\label{eq:exp_mUps}
\E \mUpsilon
= 2 \mX_\sharp \hat{\mV} \hat{\mV}^\transpose \mX_\sharp^\transpose
+ \Big(r \norm{\mX_\sharp}_\mathrm{F}^2 + \frac{r}{M} \sum_{m=1}^M \xi_m \Big) \mId_{d_1}.
\end{equation}

It follows from \eqref{eq:sin_estV} that $\mX_\sharp \hat{\mV} \hat{\mV}^\transpose \mX_\sharp^\transpose$ in the right-hand side of \eqref{eq:exp_mUps} has rank-$r$ and its invariant space coincides with that of $\mX_\sharp \mX_\sharp^\transpose = \mU_\sharp \mSigma_\sharp^2 \mU_\sharp^\transpose$.  The inclusion of the former subspace to the latter is obvious from the construction.  Furthermore, the rank of $\mX_\sharp \hat{\mV} \hat{\mV}^\transpose \mX_\sharp^\transpose$ is at most $r$.  Indeed, the $r$th largest singular value of $\mX_\sharp \hat{\mV} \hat{\mV}^\transpose \mX_\sharp^\transpose$ satisfies
\begin{align*}
& \sigma_r(\mX_\sharp \hat{\mV} \hat{\mV}^\transpose \mX_\sharp^\transpose)
\geq \sigma_r(\mX_\sharp)^2 \sigma_r(\hat{\mV} \hat{\mV}^\transpose \mV_\sharp \mV_\sharp^\transpose) \\
& \geq \sigma_r(\mX_\sharp)^2 \left(\sigma_r(\mV_\sharp \mV_\sharp^\transpose) - \norm{(\mId_{d_2} - \hat{\mV} \hat{\mV}^\transpose) \mV_\sharp \mV_\sharp^\transpose}\right) 
\geq (1-\delta_{\mathrm{in}}) \sigma_r(\mX_\sharp)^2,
\end{align*}
where the last step follows from \eqref{eq:sin_estV}.  Therefore, we deduce that $\mX_\sharp \hat{\mV} \hat{\mV}^\transpose \mX_\sharp^\transpose$ and $\mX_\sharp \mX_\sharp^\transpose$ have the same invariant subspace.

Recall that the columns of $\mU_0$ are the eigenvectors of $\mUpsilon$ corresponding to the $r$-largest eigenvalues.  Furthermore the subspace spanned by the top $r$ eigenvectors of $\E \mUpsilon$, is the same to the columnspace of $\mU_\sharp$. Therefore, the Davis-Kahan theorem (Theorem~\ref{thm:sintheta}) provides an upper bound for the estimation error measured by the principal angle between subspaces (the left-hand side of \eqref{eq:sin_estU}).  To this end, we apply Theorem~\ref{thm:sintheta} to $\mA = \E \mUpsilon$ and $\mDelta = \mUpsilon - \E \mUpsilon$ as shown below.

Since the spectral gap in $\mA$ satisfies
\[
\lambda_r(\mA) - \lambda_{r+1}(\mA)
= \lambda_r(\E \mUpsilon) - \lambda_{r+1}(\E \mUpsilon)
= \lambda_r(2 \mX_\sharp \hat{\mV} \hat{\mV}^\transpose \mX_\sharp^\transpose)
\geq 2 (1-\delta_{\mathrm{in}}) [\sigma_r(\mX_\sharp)]^2,
\]
the error bound in \eqref{eq:sin_estU} is obtained by Theorem~\ref{thm:sintheta} provided that
\begin{equation}
\label{eq:ub_spn_mDelta_Gaussian}
\norm{\mDelta} = \norm{\mUpsilon - \E \mUpsilon} \leq \frac{(1-\delta_{\mathrm{in}}) \delta_{\mathrm{out}} \sigma_r(\mX_\sharp)^2}{2}.
\end{equation}
By the triangle inequality, we obtain a sufficient condition for \eqref{eq:ub_spn_mDelta_Gaussian} given by
\begin{equation}
\label{eq:ub_spn_mDelta_Gaussian_part1}
\Big\| \frac{1}{M} \sum_{m=1}^M (
\mZ_m - \E \mZ_m) \Big\|
\leq \frac{(1-\delta_{\mathrm{in}}) \delta_{\mathrm{out}} \sigma_r(\mX_\sharp)^2}{4}
\end{equation}
and
\begin{equation}
\label{eq:ub_spn_mDelta_Gaussian_part2}
\Big\|
\frac{1}{M} \sum_{m=1}^M
\xi_m (\mPhi_m \hat{\mV} \hat{\mV}^\transpose \mPhi_m^\transpose - r \mId_{d_1})
\Big\|
\leq \frac{(1-\delta_{\mathrm{in}}) \delta_{\mathrm{out}} \sigma_r(\mX_\sharp)^2}{4}.
\end{equation}

In the remainder, we show that \eqref{eq:ub_spn_mDelta_Gaussian_part1} and \eqref{eq:ub_spn_mDelta_Gaussian_part1} hold with high probability when the conditions in \eqref{eq:sampl_comp_init_iidG} and \eqref{eq:snr_cond_init2_iidG} are satisfied.  First, by Lemma~\ref{lemma:wgaussouterproduct}, it follows from \eqref{eq:snr_cond_init2_iidG} that \eqref{eq:ub_spn_mDelta_Gaussian_part2} holds with probability $1-M^{-\alpha}/2$. Then it remains to show that \eqref{eq:ub_spn_mDelta_Gaussian_part1} holds with probability $1 - M^{-\alpha}/2$ when \eqref{eq:sampl_comp_init_iidG} is satisfied.
By the Markov inequality,
\begin{align*}
& \mathbb{P}\Big(
\Big\| \frac{1}{M} \sum_{m=1}^M (
\mZ_m - \E \mZ_m) \Big\|
> \frac{(1-\delta_{\mathrm{in}}) \delta_{\mathrm{out}} \sigma_r(\mX_\sharp)^2}{4}
\Big) 
\\
& \leq
\Big(\frac{4}{(1-\delta_{\mathrm{in}}) \delta_{\mathrm{out}} \sigma_r(\mX_\sharp)^2}\Big)^p
\cdot
\mathbb{E} \Big\| \frac{1}{M} \sum_{m=1}^M (\mZ_m - \mathbb{E} \mZ_m) \Big\|^p
\end{align*}
for any $p > 0$.
Therefore, \eqref{eq:ub_spn_mDelta_Gaussian_part1} holds with probability $1-M^{-\alpha}/2$ if
\begin{equation}
\label{eq:ub_spn_mDelta_Gaussian_part1a}
\underbrace{
\Big( \mathbb{E} \Big\| \frac{1}{M} \sum_{m=1}^M (\mZ_m - \mathbb{E} \mZ_m) \Big\|^p \Big)^{1/p}
}_{\text{($\ddag$)}}
\leq \frac{(1-\delta_{\mathrm{in}}) \delta_{\mathrm{out}} \sigma_r(\mX_\sharp)^2 M^{-\alpha/p}}{4}.
\end{equation}

To get an upper estimate of ($\ddag$) in \eqref{eq:ub_spn_mDelta_Gaussian_part1a}, we apply the noncommutative Rosenthal inequality (Theorem~\ref{thm:ncrosenthal}) to $\mY_m = \mZ_m - \mathbb{E} \mZ_m$ for $m=1,\dots,M$.  The first step is to compute the expectation of $\mY_m^2$ as follows:  Let $\mQ_1,\mQ_2,\mQ_3,\mQ_4 \in \mathbb{R}^{d_1 \times d_2}$.  Note that each entry of $\mQ_1 \mQ_2^\transpose \mQ_3 \mQ_4^\transpose$ is given as a linear combination of the entries of $\mathrm{vec}(\mQ_1) \otimes \mathrm{vec}(\mQ_2) \otimes \mathrm{vec}(\mQ_3) \otimes \mathrm{vec}(\mQ_4)$.  Therefore, there exists a linear map $\mathcal{R}: \mathbb{R}^{(d_1 d_2)^4} \to \mathbb{R}^{d_1 \times d_1}$ that satisfies
\[
\mathcal{R}[\mathrm{vec}(\mQ_1) \otimes \mathrm{vec}(\mQ_2) \otimes \mathrm{vec}(\mQ_3) \otimes \mathrm{vec}(\mQ_4)]
= \mQ_1 \mQ_2^\transpose \mQ_3 \mQ_4^\transpose.
\]
We also define
\[
\mT_m := \langle \mPhi_m, \mX_\sharp \rangle^4
[\mathrm{vec}(\mPhi_m \hat{\mV}) \otimes \mathrm{vec}(\mPhi_m \hat{\mV}) \otimes \mathrm{vec}(\mPhi_m \hat{\mV}) \otimes \mathrm{vec}(\mPhi_m \hat{\mV})].
\]
Then $\mZ_m^2$ is written as $\mZ_m^2 = \mathcal{R}(\mT_m)$.
Since $\mathrm{vec}(\mPhi_m \hat{\mV}) = (\hat{\mV}^\transpose \otimes \mId_{d_1}) \mathrm{vec}(\mPhi_m)$, it follows that $\mathbb{E} \mT_m$ is written as
\begin{align*}
\mathbb{E} \mT_m
&=
[ (\hat{\mV}^\transpose \otimes \mId_{d_1}) \otimes (\hat{\mV}^\transpose \otimes \mId_{d_1}) \otimes (\hat{\mV}^\transpose \otimes \mId_{d_1}) \otimes (\hat{\mV}^\transpose \otimes \mId_{d_1}) ] 
%\\
%&\quad \cdot 
\mathbb{E} (\vphi_m^\transpose \mathrm{vec}(\mX_\sharp))^4
(\vphi_m \otimes \vphi_m \otimes \vphi_m \otimes \vphi_m) \\
& = 24 \, [ \mathrm{vec}(\mX_\sharp \hat{\mV}) \otimes \mathrm{vec}(\mX_\sharp \hat{\mV}) \otimes \mathrm{vec}(\mX_\sharp \hat{\mV}) \otimes \mathrm{vec}(\mX_\sharp \hat{\mV}) ] \\
& \quad + 12 \sum_{k_1=1}^d \sum_{k_2=1}^d \Big[
\mathrm{vec}(\mX_\sharp \hat{\mV}) \otimes \mathrm{vec}(\mX_\sharp \hat{\mV}) \otimes \mathrm{vec}(\ve_{k_1} \tilde{\ve}_{k_2}^\transpose \hat{\mV}) \otimes \mathrm{vec}(\ve_{k_1} \tilde{\ve}_{k_2}^\transpose \hat{\mV}) \\
& \qquad\qquad\qquad + \mathrm{vec}(\mX_\sharp \hat{\mV}) \otimes \mathrm{vec}(\ve_{k_1} \tilde{\ve}_{k_2}^\transpose \hat{\mV}) \otimes \mathrm{vec}(\mX_\sharp \hat{\mV}) \otimes \mathrm{vec}(\ve_{k_1} \tilde{\ve}_{k_2}^\transpose \hat{\mV}) \\
& \qquad\qquad\qquad + \mathrm{vec}(\mX_\sharp \hat{\mV}) \otimes \mathrm{vec}(\ve_{k_1} \tilde{\ve}_{k_2}^\transpose \hat{\mV}) \otimes \mathrm{vec}(\ve_{k_1} \tilde{\ve}_{k_2}^\transpose \hat{\mV}) \otimes \mathrm{vec}(\mX_\sharp \hat{\mV}) \\
& \qquad\qquad\qquad + \mathrm{vec}(\ve_{k_1} \tilde{\ve}_{k_2}^\transpose \hat{\mV}) \otimes \mathrm{vec}(\mX_\sharp \hat{\mV}) \otimes \mathrm{vec}(\mX_\sharp \hat{\mV}) \otimes \mathrm{vec}(\ve_{k_1} \tilde{\ve}_{k_2}^\transpose \hat{\mV}) \\
& \qquad\qquad\qquad + \mathrm{vec}(\ve_{k_1} \tilde{\ve}_{k_2}^\transpose \hat{\mV}) \otimes \mathrm{vec}(\mX_\sharp \hat{\mV}) \otimes \mathrm{vec}(\ve_{k_1} \tilde{\ve}_{k_2}^\transpose \hat{\mV}) \otimes \mathrm{vec}(\mX_\sharp \hat{\mV}) \\
& \qquad\qquad\qquad + \mathrm{vec}(\ve_{k_1} \tilde{\ve}_{k_2}^\transpose \hat{\mV}) \otimes \mathrm{vec}(\ve_{k_1} \tilde{\ve}_{k_2}^\transpose \hat{\mV}) \otimes \mathrm{vec}(\mX_\sharp \hat{\mV}) \otimes \mathrm{vec}(\mX_\sharp \hat{\mV}) \Big] \\
& \quad + 3 \sum_{j_1,k_1=1}^{d_1} \sum_{j_2,k_2=1}^{d_2} \Big[
\mathrm{vec}(\ve_{j_1} \tilde{\ve}_{j_2}^\transpose \hat{\mV}) \otimes \mathrm{vec}(\ve_{j_1} \tilde{\ve}_{j_2}^\transpose \hat{\mV}) \otimes \mathrm{vec}(\ve_{k_1} \tilde{\ve}_{k_2}^\transpose \hat{\mV}) \otimes \mathrm{vec}(\ve_{k_1} \tilde{\ve}_{k_2}^\transpose \hat{\mV}) \\
& \qquad\qquad\qquad + \mathrm{vec}(\ve_{j_1} \tilde{\ve}_{j_2}^\transpose \hat{\mV}) \otimes \mathrm{vec}(\ve_{k_1} \tilde{\ve}_{k_2}^\transpose \hat{\mV}) \otimes \mathrm{vec}(\ve_{j_1} \tilde{\ve}_{j_2}^\transpose \hat{\mV}) \otimes \mathrm{vec}(\ve_{k_1} \tilde{\ve}_{k_2}^\transpose \hat{\mV}) \\
& \qquad\qquad\qquad + \mathrm{vec}(\ve_{j_1} \tilde{\ve}_{j_2}^\transpose \hat{\mV}) \otimes \mathrm{vec}(\ve_{k_1} \tilde{\ve}_{k_2}^\transpose \hat{\mV}) \otimes \mathrm{vec}(\ve_{k_1} \tilde{\ve}_{k_2}^\transpose \hat{\mV}) \otimes \mathrm{vec}(\ve_{j_1} \tilde{\ve}_{j_2}^\transpose \hat{\mV}) \Big],
\end{align*}
where Lemma~\ref{lemma:exp_gauss_octa} is used to compute $\mathbb{E} \mT_m$ in the second step.
Also by the linearity of the map $\mathcal{R}$, it follows that
\begin{align*}
\mathbb{E} \mZ_m^2
&= \mathcal{R}(\mathbb{E} \mT_m) \\
& = 24 \mX_\sharp \hat{\mV} \hat{\mV}^\transpose \mX_\sharp^\transpose \mX_\sharp \hat{\mV} \hat{\mV}^\transpose \mX_\sharp^\transpose \\
& \quad + 12 \norm{\mX_\sharp}_\mathrm{F}^2 \sum_{l_1=1}^d \sum_{l_2=1}^d \Big[
\mX_\sharp \hat{\mV}
\hat{\mV}^\transpose \mX_\sharp^\transpose
\ve_{l_1} \tilde{\ve}_{l_2}^\transpose \hat{\mV}
\hat{\mV}^\transpose \tilde{\ve}_{l_2} \ve_{l_1}^\transpose %\\
%& \qquad\qquad\qquad
+
\mX_\sharp \hat{\mV}
\hat{\mV}^\transpose \tilde{\ve}_{l_2} \ve_{l_1}^\transpose
\mX_\sharp \hat{\mV}
\hat{\mV}^\transpose \tilde{\ve}_{l_2} \ve_{l_1}^\transpose \\
& \qquad\qquad\qquad +
\mX_\sharp \hat{\mV}
\hat{\mV}^\transpose \tilde{\ve}_{l_2} \ve_{l_1}^\transpose
\ve_{l_1} \tilde{\ve}_{l_2}^\transpose \hat{\mV}
\hat{\mV}^\transpose \mX_\sharp^\transpose %\\
%& \qquad\qquad\qquad
+
\ve_{l_1} \tilde{\ve}_{l_2}^\transpose \hat{\mV}
\hat{\mV}^\transpose \mX_\sharp^\transpose
\mX_\sharp \hat{\mV}
\hat{\mV}^\transpose \tilde{\ve}_{l_2} \ve_{l_1}^\transpose \\
& \qquad\qquad\qquad +
\ve_{l_1} \tilde{\ve}_{l_2}^\transpose \hat{\mV}
\hat{\mV}^\transpose \mX_\sharp^\transpose
\ve_{l_1} \tilde{\ve}_{l_2}^\transpose \hat{\mV}
\hat{\mV}^\transpose \mX_\sharp^\transpose %\\
%& \qquad\qquad\qquad
+
\ve_{l_1} \tilde{\ve}_{l_2}^\transpose \hat{\mV}
\hat{\mV}^\transpose \tilde{\ve}_{l_2} \ve_{l_1}^\transpose
\mX_\sharp \hat{\mV}
\hat{\mV}^\transpose \mX_\sharp^\transpose
\Big] \\
& \quad + 3 \norm{\mX_\sharp}_\mathrm{F}^4 \sum_{j_1,k_1=1}^{d_1} \sum_{j_2,k_2=1}^{d_2} \Big[
\ve_{j_1} \tilde{\ve}_{j_2}^\transpose \hat{\mV}
\hat{\mV}^\transpose \tilde{\ve}_{j_2} \ve_{j_1}^\transpose
\ve_{k_1} \tilde{\ve}_{k_2}^\transpose \hat{\mV}
\hat{\mV}^\transpose \tilde{\ve}_{k_2} \ve_{k_1}^\transpose \\
& \qquad\qquad\qquad
+
\ve_{j_1} \tilde{\ve}_{j_2}^\transpose \hat{\mV}
\hat{\mV}^\transpose \tilde{\ve}_{k_2} \ve_{k_1}^\transpose
\ve_{j_1} \tilde{\ve}_{j_2}^\transpose \hat{\mV}
\hat{\mV}^\transpose \tilde{\ve}_{k_2} \ve_{k_1}^\transpose %\\
%& \qquad\qquad\qquad
+
\ve_{j_1} \tilde{\ve}_{j_2}^\transpose \hat{\mV}
\hat{\mV}^\transpose \tilde{\ve}_{k_2} \ve_{k_1}^\transpose
\ve_{k_1} \tilde{\ve}_{k_2}^\transpose \hat{\mV}
\hat{\mV}^\transpose \tilde{\ve}_{j_2} \ve_{j_1}^\transpose \Big].
\end{align*}
After direct calculation, the above expression for $\mathbb{E} \mZ_m^2 $ simplifies to
\begin{equation}
\begin{aligned}
\mathbb{E} \mZ_m^2
{} & = 24 \mX_\sharp \hat{\mV} \hat{\mV}^\transpose \mX_\sharp^\transpose \mX_\sharp \hat{\mV} \hat{\mV}^\transpose \mX_\sharp^\transpose \\
{} & + 12 (2r+d_1+2) \norm{\mX_\sharp}_\mathrm{F}^2 \mX_\sharp \hat{\mV} \hat{\mV}^\transpose \mX_\sharp^\transpose + 12 \norm{\mX_\sharp}_\mathrm{F}^2 \norm{\mX_\sharp \hat{\mV}}_\mathrm{F}^2 \mId_{d_1} \\
{} & + 3 \norm{\mX_\sharp}_\mathrm{F}^4 r(r+d_1+1) \mId_{d_1}.
\end{aligned}
\label{eq:exp_Z_m_sq}
\end{equation}
Then, by combining \eqref{eq:exp_Z_m} and \eqref{eq:exp_Z_m_sq}, we obtain
\begin{align*}
\mathbb{E} \mY_m^2
{} & = \mathbb{E} \mZ_m^2 - (\mathbb{E} \mZ_m)^2 \\
{} & = 20 \mX_\sharp \hat{\mV} \hat{\mV}^\transpose \mX_\sharp^\transpose \mX_\sharp \hat{\mV} \hat{\mV}^\transpose \mX_\sharp^\transpose 
%{} & 
+ 4 (5 r + 3 d_1 + 6) \norm{\mX_\sharp}_\mathrm{F}^2 \mX_\sharp \hat{\mV} \hat{\mV}^\transpose \mX_\sharp^\transpose \\
{} & + \left( 12 \norm{\mX_\sharp}_\mathrm{F}^2 \norm{\mX_\sharp \hat{\mV}}_\mathrm{F}^2 + \norm{\mX_\sharp}_\mathrm{F}^4 r(2r + 3d_1 + 3) \right)  \mId_{d_1}.
\end{align*}
Therefore, the spectral norm of $\mathbb{E} \mY_m^2$ is upper-bounded by
\begin{align*}
\norm{\mathbb{E} \mY_m^2}
{} & \leq 20 \norm{\mX_\sharp}^4
+ 4 (5r+3d_1+6) \norm{\mX_\sharp}_\mathrm{F}^2 \norm{\mX_\sharp}^2 
%\\
%{} & 
+ 12 \norm{\mX_\sharp}_\mathrm{F}^2 \norm{\mX_\sharp \hat{\mV}}_\mathrm{F}^2 
+ r (2r+3d_1+3) \norm{\mX_\sharp}_\mathrm{F}^4.
\end{align*}
Collecting the results for $m=1,\dots,M$ gives
\begin{equation}
\label{eq:ncRosenthal_part1_gaussian}
\Big\| \sum_{m=1}^M \mathbb{E} \mY_m^2 \Big\|^{1/2}
\leq C r^{3/2} \sqrt{M d_1} \norm{\mX_\sharp}^2.
\end{equation}
Moreover, by applying the triangle inequality in $L_p$ twice to \eqref{eq:exp_Z_m}, we obtain
\begin{equation}
\label{eq:exp_Y_m}
\begin{aligned}
(\E \norm{\mY_m}^p)^{1/p}
{} & \leq
\underbrace{
\Big[
\E \Big(
\langle \mPhi_m, \mX_\sharp \rangle^2
\norm{\mPhi_m \hat{\mV} \hat{\mV}^\transpose \mPhi_m^\transpose
- r \mId_{d_1}}
\Big)^p
\Big]^{1/p}
}_{\text{($\natural$)}} \\
{} & +
\underbrace{
r \Big[ \E \Big( \langle \mPhi_m, \mX_\sharp \rangle^2 - \norm{\mX_\sharp}_\mathrm{F}^2 \Big)^p \Big]^{1/p}
}_{\text{($\natural\natural$)}}
+ \underbrace{
2 \norm{ \mX_\sharp \hat{\mV} \hat{\mV}^\transpose \mX_\sharp^\transpose }
}_{\text{($\natural\natural\natural$)}}.
\end{aligned}
\end{equation}
By the Cauchy-Schwarz inequality in $L_2$, the first term ($\natural$) on the right-hand side of \eqref{eq:exp_Y_m} is upper-bounded by
\begin{align*}
\text{($\natural$)}
\leq \Big(
\E \langle \mPhi_m, \mX_\sharp \rangle^{4p}
\Big)^{1/2p}
\cdot \Big(
\E \norm{\mPhi_m \hat{\mV} \hat{\mV}^\transpose \mPhi_m^\transpose
- r \mId_{d_1}}^{2p}
\Big)^{1/2p}.
\end{align*}
Since $\langle \mX_\sharp, \mPhi_m \rangle \sim \mathcal{N}(0, \norm{\mX_\sharp}_\mathrm{F}^2)$, by Lemma~\ref{lemma:gaussianLP}, we have
\[
\Big(
\mathbb{E} \langle \mPhi_m, \mX_\sharp \rangle^{4p}
\Big)^{1/2p}
\leq C p \norm{\mX_\sharp}_\mathrm{F}^2.
\]
Then it follows from $\mathrm{vec}(\mPhi_m \hat{\mV}) = (\hat{\mV}^\transpose \otimes \mId_{d_1}) \vphi_m$ that
\[
\mathbb{E} \mathrm{vec}(\mPhi_m \hat{\mV}) \mathrm{vec}(\mPhi_m \hat{\mV})^\transpose
= \mathbb{E} (\hat{\mV}^\transpose \otimes \mId_{d_1}) \vphi_m \vphi_m^\transpose (\hat{\mV} \otimes \mId_{d_1}) \vphi_m
= \hat{\mV}^\transpose \hat{\mV} \otimes \mId_{d_1} = \mId_{d_2 d_1}, 
\]
which implies that $\mPhi_1 \hat{\mV}, \dots, \mPhi_M \hat{\mV} \in \mathbb{R}^{d_1 \times r}$ are independent copies of a standard i.i.d. Gaussian matrix.  Thus Lemma~\ref{lemma:wgaussouterproduct} implies
\begin{align*}
\Big( \mathbb{E} \|\mPhi_m \hat{\mV} \hat{\mV}^\transpose \mPhi_m^\transpose
- r \mId_{d_1}\|^{2p} \Big)^{1/2p}
\leq C
\left[
\sqrt{rpd_1}
+ r^{1/2p} p \left(d_1+p\right)
\right].
\end{align*}
Then, by the triangle inequality in $L_p$ and Lemma~\ref{lemma:gaussianLP}, ($\natural\natural$) is upper-bounded as 
\[
\text{($\natural\natural$)}
\leq
r \Big( \E \langle \mPhi_m, \mX_\sharp \rangle^{2p} \Big)^{1/p} + r \norm{\mX_\sharp}_\mathrm{F}^2
\leq C' r p \norm{\mX_\sharp}_\mathrm{F}^2.
\]
The last term is trivially upper-bounded by $\text{($\natural\natural\natural$)} \leq \norm{\mX_\sharp}_\mathrm{F}^2$.

By collecting the above results, we obtain that the $L_p$-norm of $\norm{\mY_m}$ is upper-bounded by
\begin{equation}
\label{eq:ncRosenthal_part2_gaussian}
(\E \norm{\mY_m}^p)^{1/p}
\leq
C_1 \norm{\mX_\sharp}_\mathrm{F}^2 \, p
\left( r + \sqrt{rpd_1}+r^{1/2p} p \left(d_1+p\right) \right).
\end{equation}
Then, by applying \eqref{eq:ncRosenthal_part1_gaussian} and \eqref{eq:ncRosenthal_part2_gaussian} to Theorem~\ref{thm:ncrosenthal}, we obtain that ($\ddag$) in \eqref{eq:ub_spn_mDelta_Gaussian_part1a} is upper-bounded by
\begin{equation*}
%\label{eq:ncRosenthal_gaussian}
\begin{aligned}
{} & \Big(\mathbb{E} \Big\| \frac{1}{M} \sum_{m=1}^M (\mZ_m - \mathbb{E} \mZ_m) \Big\|^p\Big)^{1/p} \\
{} & \quad
\leq
C_3 \norm{\mX_\sharp}^2 \Big(
r^{3/2} M^{-1/2} \sqrt{pd_1}
+ M^{1/p-1} 
p^2 r \left( r + \sqrt{rpd_1}+r^{1/2p} p \left(d_1+p\right) \right)  \Big).
\end{aligned}
\end{equation*}
Finally, we choose $p = \log(M/M^{-\alpha}) = (\alpha+1)\log M$.  Then \eqref{eq:sampl_comp_init_iidG} implies \eqref{eq:ub_spn_mDelta_Gaussian_part1a}.
This completes the proof.
%\end{proof}

\section{Proof of Lemma~\ref{lemma:imagpart}}
\label{sec:proof:lemma:imagpart}

Since $\hat{\mX}$ is a minimizer to \eqref{eq:convex_prog}, it satisfies
\begin{equation}
\label{eq:phaseXhat}
\mathrm{Im}\,\langle \mX_0, \hat{\mX} \rangle = 0.
\end{equation}
Then by \eqref{eq:phaseXhat} and \eqref{eq:phasea0b0} together with the fact that $\mathrm{rank}(\mX_\sharp) = 1$, we have
\[
\mathrm{Im}\,\langle \mX_0, \mH \rangle = 0,
\]
where $\mH = \hat{\mX} - \mX_\sharp$. Let $\mathcal{P}_{\mX_\sharp}$ denote the orthogonal projection onto $\mathbb{C} \mX_\sharp$,  that is
\[
\mathcal{P}_{\mX_\sharp}: \mM \mapsto \frac{\mX_\sharp \langle \mX_\sharp, \mM\rangle}{\norm{\mX_\sharp}_\mathrm{F}^2}.
\]
Then it follows that
\begin{align*}
0 = |\mathrm{Im}\,\langle \mX_0, \mH \rangle|
\geq
|\mathrm{Im}\,\langle \mathcal{P}_{\mX_\sharp}(\mX_0), \mH \rangle|
- |\mathrm{Im}\,\langle \mX_0 - \mathcal{P}_{\mX_\sharp}(\mX_0), \mH \rangle|,
\end{align*}
which is rearranged as
\begin{equation}
\label{eq:ubIm1}
|\mathrm{Im}\,\langle \mathcal{P}_{\mX_\sharp}(\mX_0), \mH \rangle|
\leq |\mathrm{Im}\,\langle \mX_0 - \mathcal{P}_{\mX_\sharp}(\mX_0), \mH \rangle|.
\end{equation}
By \eqref{eq:phasea0b0}, the left-hand side of \eqref{eq:ubIm1} is bounded from below as
\begin{align*}
& |\mathrm{Im}\,\langle \mathcal{P}_{\mX_\sharp}(\mX_0), \mH \rangle|
= \frac{\left|\mathrm{Im} (\langle \mX_0, \mX_\sharp\rangle \langle \mX_\sharp, \mH\rangle) \right|}{\norm{\mX_\sharp}_\mathrm{F}^2} \\
&= \frac{\langle \mX_0, \mX_\sharp\rangle}{\norm{\mX_\sharp}_\mathrm{F}} \cdot
\frac{\left|\mathrm{Im} \langle \mX_\sharp, \mH\rangle \right|}{\norm{\mX_\sharp}_\mathrm{F}}
\geq \sqrt{1-\delta^2} \cdot \norm{\mX_0}_\mathrm{F} \cdot \frac{\left|\mathrm{Im} \langle \mX_\sharp, \mH\rangle \right|}{\norm{\mX_\sharp}_\mathrm{F}}.
\end{align*}
Since the linear operator $\iota: \mM \mapsto \mM - \mathcal{P}_{\mX_\sharp}(\mM)$ is self-adjoint and idempotent, the right-hand side of \eqref{eq:ubIm1} is bounded from above as
\begin{align*}
& |\mathrm{Im}\,\langle \mX_0 - \mathcal{P}_{\mX_\sharp}(\mX_0), \mH \rangle|
= |\mathrm{Im}\,\langle \mX_0 - \mathcal{P}_{\mX_\sharp}(\mX_0), \mH - \mathcal{P}_{\mX_\sharp}(\mH) \rangle| \\
& \quad \leq \norm{\mX_0 - \mathcal{P}_{\mX_\sharp}(\mX_0)}_\mathrm{F} \cdot \norm{\mH - \mathcal{P}_{\mX_\sharp}(\mH)}_\mathrm{F}
\leq \delta \,\norm{\mX_0}_\mathrm{F} \cdot \norm{\mH - \mathcal{P}_{\mX_\sharp}(\mH)}_\mathrm{F}.
\end{align*}
Applying the above bounds to \eqref{eq:ubIm1} completes the proof.

\section{Proof of Lemma~\ref{lemma:lbp_iidG}}
\label{sec:proof:lemma:lbp_iidG}

The following lemma provides a tail probability of the product of two jointly Gaussian variables.

\begin{lemma}[{A variation of \cite[Lemma~5]{bahmani2017phase}}]
\label{lemma:gaussprod}
Let $g_1,g_2$ be random variables that satisfy
\[
\begin{bmatrix} g_1 \\ g_2 \end{bmatrix} \sim \mathcal{N}\left(\vzero,\begin{bmatrix} 1 & \rho \\ \rho & 1\end{bmatrix}\right).
\]
Then for all $t > 0$
\begin{equation}
\label{eq:tail_gauss_prod}
\mathbb{P}(g_1 g_2 > t) 
\geq \frac{2}{\pi} \cos^{-1}\left(\frac{\sqrt{3-\rho}}{2}\right) \exp\left(-\frac{2t}{1+\rho}\right).
\end{equation}
\end{lemma}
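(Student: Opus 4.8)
The plan is to diagonalize the indefinite quadratic form $g_1 g_2$ and then pass to polar coordinates, so that the exponential factor $\exp(-2t/(1+\rho))$ emerges from the radial ($\chi^2_2$) part and the $\cos^{-1}$ factor from the uniform angular part. First realize the prescribed covariance by writing $g_1 = z_1$ and $g_2 = \rho z_1 + \sqrt{1-\rho^2}\, z_2$ with $z_1, z_2$ i.i.d.\ $\mathcal{N}(0,1)$, and introduce polar coordinates $z_1 = r\cos\theta$, $z_2 = r\sin\theta$. Then $r^2$ and $\theta$ are independent, with $r^2$ exponentially distributed of mean $2$ (so $\mathbb{P}(r^2 > s) = e^{-s/2}$) and $\theta$ uniform on $[0,2\pi)$. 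A routine double-angle computation, writing $\beta := \cos^{-1}\rho \in [0,\pi]$ so that $\cos\beta = \rho$ and $\sin\beta = \sqrt{1-\rho^2}$, gives
\[
g_1 g_2 = r^2 \left( \rho \cos^2\theta + \sqrt{1-\rho^2}\, \cos\theta \sin\theta \right) = \frac{r^2}{2} \left( \rho + \cos(2\theta - \beta) \right).
\]

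Next, for any $t > 0$ I would lower-bound the event $\{g_1 g_2 > t\}$ by the smaller event $\{ r^2 > 4t/(1+\rho)\} \cap \{ \cos(2\theta - \beta) \ge (1-\rho)/2 \}$: on this event $\rho + \cos(2\theta - \beta) \ge (1+\rho)/2 > 0$, whence $g_1 g_2 > \tfrac12 \cdot \tfrac{4t}{1+\rho} \cdot \tfrac{1+\rho}{2} = t$. By independence of $r$ and $\theta$, and since $2\theta - \beta$ sweeps the circle uniformly as $\theta \sim \mathrm{Unif}[0,2\pi)$ (so the probability that its cosine exceeds $c \in [-1,1]$ equals $\cos^{-1}(c)/\pi$),
\[
\mathbb{P}(g_1 g_2 > t) \ge \mathbb{P}\left( r^2 > \frac{4t}{1+\rho} \right) \mathbb{P}\left( \cos(2\theta - \beta) \ge \frac{1-\rho}{2} \right) = e^{-2t/(1+\rho)} \cdot \frac{1}{\pi} \cos^{-1}\left( \frac{1-\rho}{2} \right).
\]
Finally, the stated form follows from the half-angle identity $\cos^{-1}(x) = 2\cos^{-1}\bigl(\sqrt{(1+x)/2}\,\bigr)$ (valid for $x \in [-1,1]$ since $\cos^{-1}(x)/2 \in [0,\pi/2]$) applied with $x = (1-\rho)/2$, for which $(1 + (1-\rho)/2)/2 = (3-\rho)/4$, so that
\[
\frac{1}{\pi} \cos^{-1}\left( \frac{1-\rho}{2} \right) = \frac{2}{\pi} \cos^{-1}\left( \sqrt{\frac{3-\rho}{4}} \right) = \frac{2}{\pi} \cos^{-1}\left( \frac{\sqrt{3-\rho}}{2} \right).
\]

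There is no substantive obstacle here; the only care needed is (i) keeping the constants aligned in the event inclusion so that the radial threshold is exactly $4t/(1+\rho)$ and the exponent is exactly $2t/(1+\rho)$, and (ii) checking the degenerate cases — for $\rho = -1$ both sides vanish (indeed $g_1 g_2 = -g_1^2 \le 0$), and for $\rho = 1$ the bound reduces to $\tfrac12 e^{-t}$, consistent with $g_1 g_2 = g_1^2 \sim \chi^2_1$. One could alternatively split off the quadratic form as $g_1 g_2 = \tfrac14\bigl((g_1+g_2)^2 - (g_1-g_2)^2\bigr)$ with the two summands independent, but the polar-coordinate route yields the $\cos^{-1}$ constant directly.
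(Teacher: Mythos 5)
Your proof is correct and takes essentially the same approach as the paper's: both decompose $(g_1,g_2)$ into independent standard normals, pass to polar coordinates to exploit the independence of radius and angle, and then lower-bound the event $\{g_1g_2>t\}$ by a product of independent radial and angular events, recovering the same factors $e^{-2t/(1+\rho)}$ and $\tfrac{2}{\pi}\cos^{-1}(\sqrt{3-\rho}/2)$. The only cosmetic difference is that you use the Cholesky factorization $g_1=z_1$, $g_2=\rho z_1+\sqrt{1-\rho^2}\,z_2$ together with a double-angle manipulation and a final half-angle identity, whereas the paper uses the symmetric square root to write $g_1g_2=\tfrac{1+\rho}{2}w_1^2-\tfrac{1-\rho}{2}w_2^2$ and reads the angular probability directly from $\mathbb{P}(\cos^2\theta\ge\tfrac{3-\rho}{4})$.
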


\begin{proof}[Proof of Lemma~\ref{lemma:gaussprod}]
Let $w_1$ and $w_2$ be independent copies of a standard normal random variable following $\mathcal{N}(0,1)$. Then $g_1$ and $g_2$ are written as
\begin{align*}
g_1 = \sqrt{\frac{1+\rho}{2}} w_1 + \sqrt{\frac{1-\rho}{2}} w_2 \quad \text{and} \quad
g_2 = \sqrt{\frac{1-\rho}{2}} w_1 - \sqrt{\frac{1-\rho}{2}} w_2.
\end{align*}
With this representation, we have
\begin{align*}
\mathbb{P}(g_1 g_2 > t) 
= \mathbb{P}\left( \frac{1+\rho}{2} \, w_1^2 - \frac{1-\rho}{2} \, w_2^2 > t \right)
= \mathbb{P}\left( \frac{\rho-1}{2} + \frac{w_1^2}{w_1^2+w_2^2} > \frac{t}{w_1^2+w_2^2} \right).
\end{align*}
Since $w_1^2/(w_1^2+w_2^2)$ and $1/(w_1^2+w_2^2)$ respectively depend only on the direction and the $\ell_2$ norm of the standard normal random vector $[w_1,w_2]^\top$, they are mutually independent. Furthermore, $R = w_1^2+w_2^2$ follows the exponential distribution with mean $1/2$ and $w_1/\sqrt{w_1^2+w_2^2}$ is written as $\cos\theta$ where $\theta$ is a uniform random variable on $[0,2\pi)$. Then it follows that
\begin{align}
\mathbb{P}\left( \frac{\rho-1}{2} + \frac{w_1^2}{w_1^2+w_2^2} > \frac{t}{w_1^2+w_2^2} \right)
& \geq \mathbb{P}\left(\frac{w_1^2}{w_1^2+w_2^2} \geq \frac{3-\rho}{4} \,\text{and}\, \frac{1+\rho}{4} > \frac{t}{w_1^2+w_2^2} \right) \nonumber \\
& = \mathbb{P}\left(\frac{w_1^2}{w_1^2+w_2^2} \geq \frac{3-\rho}{4} \right) 
\mathbb{P}\left(w_1^2+w_2^2 > \frac{4t}{1+\rho} \right) \nonumber \\
& = \mathbb{P}\left(\cos^2\theta \geq \frac{3-\rho}{4} \right) 
\mathbb{P}\left( R > \frac{4t}{1+\rho} \right). \label{eq:pf_tail_gauss_prod}
\end{align} 
The lower bound in \eqref{eq:tail_gauss_prod} is obtained by computing the probabilities in \eqref{eq:pf_tail_gauss_prod}. 
\end{proof}

We apply Lemma~\ref{lemma:gaussprod} for $g_1 = \langle \mX_\sharp, \mPhi \rangle / \norm{\mX_\sharp}_\mathrm{F}$, $g_2 = \langle \mPhi, \mH \rangle / \norm{\mH}_\mathrm{F}$, and $t = \tau'$.  Since the probability in \eqref{eq:tail_gauss_prod} is a monotone increasing function in $\rho$, to get a lower bound on the tail probability, it suffices to compute a lower estimate of $\rho$.

Let $\mX_\sharp = \mU_\sharp \mSigma_\sharp \mV_\sharp^\transpose$ denote the singular value decomposition of $\mX_\sharp$.  Let $\sigma_1,\dots,\sigma_r$ denote the singular values of $\mX_\sharp$ in the non-increasing order.  Then $\norm{\mX_\sharp}_* = \sum_{k=1}^r \sigma_k$.
By the triangle inequality, we have
\begin{equation}
\label{eq:lb_rho}
\rho = \frac{\langle \mX_\sharp, \mH \rangle}{\norm{\mX_\sharp}_\mathrm{F} \norm{\mH}_\mathrm{F}}
\geq
\underbrace{
\frac{\langle \norm{\mX_\sharp}_* \mU_\sharp \mV_\sharp^\transpose, \mH \rangle}{r \norm{\mX_\sharp}_\mathrm{F} \norm{\mH}_\mathrm{F}}
}_{\text{($\flat\flat$)}}
- 
\underbrace{
\frac{|\langle r \mX_\sharp - \norm{\mX_\sharp}_* \mU_\sharp \mV_\sharp^\transpose, \mH \rangle|}{r \norm{\mX_\sharp}_\mathrm{F} \norm{\mH}_\mathrm{F}}
}_{\text{($\flat\flat\flat$)}}.
\end{equation}
Note that, for all $\mH \in \mathcal{A}_\delta$, the first summand ($\flat\flat$) is further bounded from below by
\[
\frac{\langle \norm{\mX_\sharp}_* \mU_\sharp \mV_\sharp^\transpose, \mH \rangle}{r \norm{\mX_\sharp}_\mathrm{F} \norm{\mH}_\mathrm{F}}
\geq - \frac{\norm{\mX_\sharp}_*}{r \norm{\mX_\sharp}_\mathrm{F}} \cdot \frac{\sqrt{r} \delta}{1-\lambda}.
\]
The second term ($\flat\flat\flat$) can be upper-bounded by the Cauchy-Schwarz inequality with 
\[
\Bigg\|\mX_\sharp - \frac{\norm{\mX_\sharp}_* \mU_\sharp \mV_\sharp^\transpose}{r} \Bigg\|_\mathrm{F}
\leq \frac{\sqrt{r} (\sigma_1 - \sigma_r)}{2}.
\]
By plugging in the above estimates to \eqref{eq:lb_rho}, we obtain a sufficient condition for $\rho \geq -0.9$ given by
\begin{equation}
\label{eq:sufficient_rho_away_negative1}
\frac{\delta}{1-\lambda} \leq 
\frac{\sqrt{r} \norm{\mX_\sharp}_\mathrm{F}}{\norm{\mX_\sharp}_*}
\cdot \Big(0.9 - \frac{\sqrt{r}(\sigma_1(\mX_\sharp) - \sigma_r(\mX_\sharp))}{2\norm{\mX_\sharp}_\mathrm{F}}\Big).
\end{equation}
Here the right-hand side of \eqref{eq:sufficient_rho_away_negative1} is no larger than $(2.8-\kappa)/2$. Therefore, \eqref{eq:cond_flat_rankr} implies that $1+\rho \geq 0.1$. Then Lemma~\ref{lemma:gaussprod} provides the lower bound in \eqref{eq:res:lemma:lbp_iidG}.  This completes the proof.

\section{Proof of Lemma~\ref{lemma:ubrc_iidG}}
\label{sec:proof:lemma:ubrc_iidG}

Without loss of generality, we may assume $\norm{\mX_\sharp}_\mathrm{F} = \norm{\mH}_\mathrm{F} = 1$.
Since $\mathcal{P}_T$ and $\mathcal{P}_{T^\perp}$ are orthogonal projection operators onto corresponding subspaces, they are self-adjoint and idempotent linear operators.  Therefore, it follows that
\[
\langle \mPhi_m, \mH \rangle
= \langle \mathcal{P}_T(\mPhi_m), \mathcal{P}_T(\mH) \rangle
+ \langle \mathcal{P}_{T^\perp}(\mPhi_m), \mathcal{P}_{T^\perp}(\mH) \rangle.
\]
Then by H\"older's inequality, we obtain
\begin{align}
\mathfrak{C}_M(\mathcal{A}_\delta)
{} & = \mathbb{E} \sup_{\mH \in \mathcal{A}_\delta} \frac{1}{\sqrt{M}} \sum_{m=1}^M \epsilon_m \langle \mX_\sharp, \mPhi_m \rangle \langle \mPhi_m, \mH \rangle \nonumber \\
{} & \leq \mathbb{E} \, \Big\| \frac{1}{\sqrt{M}} \sum_{m=1}^M \epsilon_m \mathcal{P}_T(\mPhi_m) \langle \mPhi_m, \mX_\sharp \rangle \Big\|_\mathrm{F} \cdot \sup_{\mH \in \mathcal{A}_\delta} \norm{\mathcal{P}_T(\mH)}_\mathrm{F} \nonumber \\
{} & + \mathbb{E} \, \Big\| \frac{1}{\sqrt{M}} \sum_{m=1}^M \epsilon_m \mathcal{P}_{T^\perp}(\mPhi_m) \langle \mPhi_m, \mX_\sharp \rangle \Big\| \cdot \sup_{\mH \in \mathcal{A}_\delta} \norm{\mathcal{P}_{T^\perp}(\mH)}_* \nonumber \\
{} & \leq \mathbb{E} \, \Big\| \frac{1}{\sqrt{M}} \sum_{m=1}^M \epsilon_m \mathcal{P}_T(\mPhi_m) \langle \mPhi_m, \mX_\sharp \rangle \Big\|_\mathrm{F} \label{eq:up_rc_part1} \\
{} & + \mathbb{E} \, \Big\| \frac{1}{\sqrt{M}} \sum_{m=1}^M \epsilon_m \mathcal{P}_{T^\perp}(\mPhi_m) \langle \mPhi_m, \mX_\sharp \rangle \Big\| \cdot \frac{\sqrt{r}(1-\lambda+\delta)}{\lambda}, \label{eq:up_rc_part2}
\end{align}
where the last step follows from the expression of $\mathcal{A}_\delta$ in \eqref{eq:A_delta2}.

The part in \eqref{eq:up_rc_part1} is upper-bounded by
\begin{align*}
{} & \mathbb{E} \, \Big\| \frac{1}{\sqrt{M}} \sum_{m=1}^M \epsilon_m \mathcal{P}_T(\mPhi_m) \langle \mPhi_m, \mX_\sharp \rangle \Big\|_\mathrm{F}
%\\
%{} & \quad
\leq \sqrt{\mathbb{E} \, \Big\| \frac{1}{\sqrt{M}} \sum_{m=1}^M \epsilon_m \mathcal{P}_T(\mPhi_m) \langle \mPhi_m, \mX_\sharp \rangle \Big\|_\mathrm{F}^2} \\
{} & \quad = \sqrt{\frac{1}{M} \sum_{m=1}^M \mathbb{E} \norm{\mathcal{P}_T(\mPhi_m) \langle \mPhi_m, \mX_\sharp \rangle}_\mathrm{F}^2}
%\\
%{} & \quad
= \sqrt{\mathbb{E} \norm{\mathcal{P}_T(\mPhi) \langle \mPhi, \mX_\sharp \rangle}_\mathrm{F}^2},
\end{align*}
where the first step follows from Jensen's inequality; the second step holds since $(\epsilon_m)_{m=1}^M$ is a Rademacher sequence; the last step holds since $\mPhi_1,\dots,\mPhi_M$ are independent copies of $\mPhi$.
Indeed, since
\[
\mathcal{P}_T(\mPhi)
= \mU_\sharp \mU_\sharp^* \mPhi + (\mId_{d_1}-\mU_\sharp\mU_\sharp^*)\mPhi \mV_\sharp\mV_\sharp^*,
\]
it follows that
\begin{equation}
\label{eq:exp_PTPhiPhiX}
\mathbb{E} \norm{\mathcal{P}_T(\mPhi) \langle \mPhi, \mX_\sharp \rangle}_\mathrm{F}^2
= \mathbb{E}
\norm{\mU_\sharp\mU_\sharp^*\mPhi}_\mathrm{F}^2
\langle \mPhi, \mX_\sharp \rangle^2
+ \norm{(\mId_{d_1}-\mU_\sharp\mU_\sharp^*)\mPhi \mV_\sharp\mV_\sharp^*}_\mathrm{F}^2
\langle \mPhi, \mX_\sharp \rangle^2.
\end{equation}
The first summand in the right-hand side of \eqref{eq:exp_PTPhiPhiX} is computed as
\begin{align*}
\mathbb{E}
\norm{\mU_\sharp^\transpose \mPhi}_\mathrm{F}^2
\langle \mU_\sharp^\transpose \mPhi, \mU_\sharp^\transpose \mX_\sharp \rangle^2
&= \mathrm{tr}
\Big[
\mathbb{E}
\langle \mathrm{vec}(\mU_\sharp^\transpose \mPhi), \mathrm{vec}(\mU_\sharp^\transpose \mX_\sharp) \rangle^2
\mathrm{vec}(\mU_\sharp^\transpose \mPhi) \mathrm{vec}(\mU_\sharp^\transpose \mPhi)^\transpose
\Big] \\
&= \mathrm{tr}\left(2 \, \mathrm{vec}(\mU_\sharp^\transpose \mX_\sharp) \mathrm{vec}(\mU_\sharp^\transpose \mX_\sharp)^\transpose + \mId_{rd_2}\right)  = 2 + rd_2,
\end{align*}
where the second step follows from Lemma~\ref{lemma:exp_gauss_quad_ip} since $\mathrm{vec}(\mU_\sharp^\transpose \mPhi) \sim \mathcal{N}(\vzero,\mId_{rd_2})$.

Let $\mPhi'$ be an independent copy of $\mPhi$.  Since $(\mId_{d_1}-\mU_\sharp\mU_\sharp^*)\mPhi$ is independent of $\mU_\sharp\mU_\sharp^*\mPhi$, the second summand in the right-hand side of \eqref{eq:exp_PTPhiPhiX} is written as
\begin{align*}
& \mathbb{E} \norm{(\mId_{d_1}-\mU_\sharp\mU_\sharp^*) \mPhi \mV_\sharp\mV_\sharp^*}_\mathrm{F}^2
\langle \mU_\sharp\mU_\sharp^* \mPhi', \mX_\sharp \rangle^2 \\
&= \mathbb{E}_{\mPhi} \left\|\left(\mV_\sharp\mV_\sharp^* \otimes (\mId_{d_1}-\mU_\sharp\mU_\sharp^*)\right)  \mathrm{vec}(\mPhi)\right\|_2^2 \, \mathbb{E}_{\mPhi'} \langle \mPhi', \mX_\sharp \rangle^2 \\
&= \mathrm{tr}\left(\mV_\sharp\mV_\sharp^* \otimes (\mId_{d_1}-\mU_\sharp\mU_\sharp^*)\right)  = r(d_1-r).
\end{align*}
Therefore, we obtain
\[
\mathbb{E} \norm{\mathcal{P}_T(\mPhi) \langle \mPhi, \mX_\sharp \rangle}_\mathrm{F}^2
= r(d_1+d_2-r)+2.
\]
By Jensen's inequality, the expectation in \eqref{eq:up_rc_part2} is upper-bounded by
\[
\mathbb{E} \, \Big\| \sum_{m=1}^M \epsilon_m \mathcal{P}_{T^\perp}(\mPhi_m) \langle \mPhi_m, \mX_\sharp \rangle \Big\|
\leq \Big( \mathbb{E} \, \Big\| \sum_{m=1}^M \epsilon_m \mathcal{P}_{T^\perp}(\mPhi_m) \langle \mPhi_m, \mX_\sharp \rangle \Big\|^{2p} \Big)^{1/2p}
\]
for all $p \in \mathbb{N}$.  Then we apply the noncommutative Rosenthal inequality (Theorem~\ref{thm:ncrosenthal}) for
\[
\mY_m = \epsilon_m \mathcal{P}_{T^\perp}(\mPhi_m) \langle \mPhi_m, \mX_\sharp \rangle, \quad m=1,\dots,M.
\]
Since $\mathcal{P}_{T}(\mX_\sharp) = \mX_\sharp$ and $\mathcal{P}_{T}(\mPhi_m)$ is independent from $\mathcal{P}_{T^\perp}(\mPhi_m)$, it follows
\[
\mY_m = \epsilon_m \mathcal{P}_{T^\perp}(\mPhi_m) \langle \mathcal{P}_T(\mPhi_m'), \mX_\sharp \rangle, \quad m=1,\dots,M,
\]
where $\mPhi_1',\dots,\mPhi_M'$ are independent copies of $\mPhi_1,\dots,\mPhi_M$. Furthermore, we have $\E \mY_m = \vzero$ for $m=1,\dots,M$.
By direct computation with Lemma~\ref{lemma:exp_gauss_quad_ip}, we obtain
\[
\mathbb{E} \mY_m \mY_m^\transpose = \mathrm{tr}(\mP_{\mV_\sharp^\perp}) \mP_{\mU_\sharp^\perp}
\quad \text{and} \quad
\mathbb{E} \mY_m^\transpose \mY_m = \mathrm{tr}(\mP_{\mU_\sharp^\perp}) \mP_{\mV_\sharp^\perp},
\quad m=1,\dots,M.
\]
Therefore, we obtain
\[
\Big\| \sum_{m=1}^M \E \mY_m \mY_m^\transpose \Big\|^{1/2}
\vee
\Big\| \sum_{m=1}^M \E \mY_m^\transpose \mY_m \Big\|^{1/2}
\leq \sqrt{M(d_1+d_2)}.
\]

Next we derive an upper bound for $\left(\sum_{m=1}^M \mathbb{E} \norm{\mY_m}^{2p}\right)^{1/2p}$, which coincides with $M^{1/2p} \left(\mathbb{E} \norm{\mY_m}^{2p}\right)^{1/2p}$ for any $m \in \{1,\dots,M\}$. 
Since $\mathcal{P}_T(\mPhi_m)$ and $\mathcal{P}_{T^\perp}(\mPhi_m)$ are independent, it follows that the spectral norm of $\mY_m$ satisfies
\begin{align*}
\mathbb{E} \norm{\mY_m}^{2p}
\leq \Big(\mathbb{E} \norm{\mathcal{P}_{T^\perp}(\mPhi_m)}^{2p}\Big) \cdot \Big(\mathbb{E}|\langle \mathcal{P}_T(\mPhi_m), \mX_\sharp \rangle|^{2p}\Big)
%\leq \norm{\mX_\sharp}_\mathrm{F}^2 \cdot \mathbb{E} \norm{\mPhi_m}^2.
\leq (C \sqrt{p})^{2p} \, \mathbb{E} \norm{\mPhi_m}^{2p},
\end{align*}
where the last inequality follows from the fact that $\langle \mathcal{P}_T(\mPhi_m), \mX_\sharp \rangle \sim \mathcal{N}(0,1)$ satisfies 
\[
\mathbb{E} |\langle \mathcal{P}_T(\mPhi_m), \mX_\sharp \rangle|^{2p} \leq (C \sqrt{p})^{2p}.
\]
It remains to get an upper bound on $\mathbb{E} \norm{\mPhi_m}^{2p}$.
Note that $\norm{\mPhi_m}^2 = \norm{\mPhi_m^\transpose \mPhi_m}$ where $\mPhi_m^\transpose \mPhi_m$ follows the Wishart distribution.  Without loss of generality, we may assume $d_1 \leq d_2$ (otherwise we consider $\mPhi_m \mPhi_m^\transpose$ instead of $\mPhi_m^\transpose \mPhi_m$).  Then Lemma~\ref{lemma:wgaussouterproduct} implies
\begin{equation}
\label{eq:wishart_moment_bnd}
(\mathbb{E} \norm{\mPhi_m^\transpose \mPhi_m}^p)^{1/p}
\leq d_1 + C_1 \left(\sqrt{p d_1 d_2} + p d_1^{1/p} \left(d_2 + p\right)\right) .
\end{equation}
Indeed, \eqref{eq:wishart_moment_bnd} is obtained by Lemma~\ref{lemma:wgaussouterproduct} and the triangle inequality in the Banach space of random variables $L_p(\Omega,\mu)$. Note that $\mPhi_m^\transpose \mPhi_m$ is written as
\[
\mPhi_m^\transpose \mPhi_m = \sum_{k=1}^{d_1} \vg_k \vg_k^\transpose,
\]
where $\vg_1,\dots,\vg_{d_1}$ are independent copies of $\vg \sim \mathcal{N}(\bm{0},\mId_{d_2})$. Then it follows by Lemma~\ref{lemma:wgaussouterproduct} that
\[
\left(\mathbb{E} \left\|\frac{1}{d_1} \mPhi_m^\transpose \mPhi_m - \mId_{d_2}\right\|^p\right)^{1/p} 
\leq C_1 \left(d_1^{-1/2} \sqrt{p d_2} + d_1^{1/p-1} p \left(d_2 + p\right)\right) , 
\]
which, together with the triangle inequality and the homogeneity of $L_p$-norm, implies \eqref{eq:wishart_moment_bnd}. 
Then taking the square root on both sides of \eqref{eq:wishart_moment_bnd} gives 
\[
(\mathbb{E} \norm{\mPhi_m}^{2p})^{1/2p}
\leq C_2 \left(\sqrt{d_1} + (p d_1 d_2)^{1/4} + \sqrt{p} d_1^{1/2p} \sqrt{d_2} + p d_1^{1/2p}\right) .
\]

By collecting the above estimates, Theorem~\ref{thm:ncrosenthal} implies
\begin{align*}
{} & \Big( \mathbb{E} \, \Big\| \frac{1}{\sqrt{M}} \sum_{m=1}^M \epsilon_m \mathcal{P}_{T^\perp}(\mPhi_m) \langle \mPhi_m, \mX_\sharp \rangle \Big\|^{2p} \Big)^{1/2p} \\
{} & \leq C_3
\Big( \sqrt{p(d_1+d_2)} + M^{1/2p-1/2} p^{3/2} \left(\sqrt{d_1} + (p d_1 d_2)^{1/4} + \sqrt{p} d_1^{1/2p} \sqrt{d_2} + p d_1^{1/2p}\right)  \Big).
\end{align*}
For the brevity, let $d = d_1+d_2$. Let us choose $p = 1 \vee \log d$.  Then $1/2p-1/2 \leq 0$. Since $M \geq d$, we obtain
\[
M^{1/2p-1/2} \leq d^{1/2p-1/2} \leq \frac{d^{1/2\log d}}{\sqrt{d}} \leq C_4 d^{-1/2}.
\]
Furthermore, we have
\begin{align*}
\sqrt{d_1} + (pd_1d_2)^{1/4} + \sqrt{p} d_1^{1/2p} \sqrt{d_2} + p^{3/2} d_1^{1/2p} d_2^{1/2p} 
\leq C_5 \left( \sqrt{d \log d} + \log^{3/2} d \right).
\end{align*}
Combining the above estimates provides
\begin{align*}
\mathbb{E} \, \Big\| \frac{1}{\sqrt{M}} \sum_{m=1}^M \epsilon_m \mathcal{P}_{T^\perp}(\mPhi_m) \langle \mPhi_m, \mX_\sharp \rangle \Big\|
& \leq 
\Big( \mathbb{E} \, \Big\| \frac{1}{\sqrt{M}} \sum_{m=1}^M \epsilon_m \mathcal{P}_{T^\perp}(\mPhi_m) \langle \mPhi_m, \mX_\sharp \rangle \Big\|^{2p} \Big)^{1/2p} \\
& \leq C_6 \sqrt{(d_1+d_2) \log (d_1+d_2)}.
\end{align*}
Then the upper bound in \eqref{eq:res:lemma:ubrc_iidG} is obtained by applying the above estimates to \eqref{eq:up_rc_part1} and \eqref{eq:up_rc_part2}.

\section{Proof of Lemma~\ref{lemma:lbp_rank1}}
%\begin{proof}[Proof of Lemma~\ref{lemma:lbp_rank1}]
\label{sec:proof:lemma:lbp_rank1}

The event is determined by an 1-homogeneous equation in $\mH$ and $\mX_\sharp$.  Therefore, without loss of generality, we may assume that $\norm{\mH}_\mathrm{F} = \norm{\mX_\sharp}_\mathrm{F} = 1$.  Then $\mX_\sharp$ is written as $\vu_\sharp \vv_\sharp^*$ with $\norm{\vu_\sharp}_2 = \norm{\vv_\sharp}_2 = 1$.

First we decompose $\mathrm{Re}(\vb^* \vv_\sharp \vu_\sharp^* \va \va^* \mH \vb)$ as
\begin{align*}
\mathrm{Re}(\vb^* \vv_\sharp \vu_\sharp^* \va \va^* \mH \vb)
&= \mathrm{Re}(\vb^* \vv_\sharp \vu_\sharp^* \va \va^* \mP_{\vu_\sharp} \mH \mP_{\vv_\sharp} \vb)
+ \mathrm{Re}(\vb^* \vv_\sharp \vu_\sharp^* \va \va^* \mP_{\vu_\sharp^\perp} \mH \mP_{\vv_\sharp} \vb) \\
&+ \mathrm{Re}(\vb^* \vv_\sharp \vu_\sharp^* \va \va^* \mP_{\vu_\sharp} \mH \mP_{\vv_\sharp^\perp} \vb)
+ \mathrm{Re}(\vb^* \vv_\sharp \vu_\sharp^* \va \va^* \mP_{\vu_\sharp^\perp} \mH \mP_{\vv_\sharp^\perp} \vb).
\end{align*}
By plugging in $\mP_{\vu_\sharp} = \vu_\sharp\vu_\sharp^*$ and $\mP_{\vv_\sharp} = \vv_\sharp\vv_\sharp^*$ to the above identity, we rewrite $\mathrm{Re}(\vb^* \vv_\sharp \vu_\sharp^* \va \va^* \mH \vb)$ as
\begin{align*}
&\mathrm{Re}(\vb^* \vv_\sharp \vu_\sharp^* \va \va^* \mH \vb)
= |\vv_\sharp^* \vb|^2 |\vu_\sharp^* \va|^2 \, \mathrm{Re}(\vu_\sharp^* \mH \vv_\sharp)
+ |\vv_\sharp^* \vb|^2 |\vu_\sharp^* \va| \, \mathrm{Re}\Big(\frac{\vu_\sharp^* \va}{|\vu_\sharp^* \va|} \cdot \va^* \mP_{\vu_\sharp^\perp} \mH \vv_\sharp\Big) \\
&\qquad + |\vu_\sharp^* \va|^2 |\vv_\sharp^* \vb| \, \mathrm{Re}\Big(\frac{\overline{\vv_\sharp^* \vb}}{|\vv_\sharp^* \vb|} \cdot \vu_\sharp^* \mH \mP_{\vv_\sharp^\perp} \vb\Big)
+ |\vu_\sharp^* \va| |\vv_\sharp^* \vb| \, \mathrm{Re}\Big(\frac{\vu_\sharp^* \va}{|\vu_\sharp^* \va|} \cdot \frac{\overline{\vv_\sharp^* \vb}}{|\vv_\sharp^* \vb|} \cdot \va^* \mP_{\vu_\sharp^\perp} \mH \mP_{\vv_\sharp^\perp} \vb\Big).
\end{align*}
The following facts follow from the assumption that $\va \sim \mathcal{CN}(\vzero,\mId_{d_1})$ and $\vb \sim \mathcal{CN}(\vzero,\mId_{d_2})$ are mutually independent:
\begin{enumerate}
  \item $|\vu_\sharp^* \va|$, $|\vv_\sharp^* \vb|$, $\overline{\vu_\sharp^* \va}/|\vu_\sharp^* \va|$, $\overline{\vv_\sharp^* \vb}/|\vv_\sharp^* \vb|$, $\mP_{\vu_\sharp^\perp} \va$, and $\mP_{\vv_\sharp^\perp} \vb$ are independent random variables.
  \item $|\vu_\sharp^* \va|$ and $|\vv_\sharp^* \vb|$ follow the Rayleigh distribution with scale parameter 1.
  \item $\overline{\vu_\sharp^* \va}/|\vu_\sharp^* \va|$ and $\overline{\vv_\sharp^* \vb}/|\vv_\sharp^* \vb|$ follow the uniform distribution on the set of complex number of the unit modulus.
\end{enumerate}
Furthermore, due to the rotation invariance of the Gaussian distribution, $(\overline{\vu_\sharp^* \va}/|\vu_\sharp^* \va|) \mP_{\vu_\sharp^\perp} \va$ has the same distribution with $\mP_{\vu_\sharp^\perp} \va$.  Similarly, $(\overline{\vv_\sharp^* \vb}/|\vv_\sharp^* \vb|) \mP_{\vv_\sharp^\perp} \vb$ and $\mP_{\vv_\sharp^\perp} \vb$ have the same distribution.

Combining the above facts, we obtain that $\mathrm{Re}(\vb^* \vv_\sharp \vu_\sharp^* \va \va^* \mH \vb)$ has the same distribution with
\begin{align*}
x &:= r_1^2 r_2^2 \, \mathrm{Re}(\vu_\sharp^* \mH \vv_\sharp)
+ r_1 r_2^2 \, \mathrm{Re}(\va^* \mP_{\vu_\sharp^\perp} \mH \vv_\sharp)
+ r_1^2 r_2 \, \mathrm{Re}(\vu_\sharp^* \mH \mP_{\vv_\sharp^\perp} \vb)
+ r_1 r_2 \, \mathrm{Re}(\va^* \mP_{\vu_\sharp^\perp} \mH \mP_{\vv_\sharp^\perp} \vb),
\end{align*}
where $r_1,r_2,\va,\vb$ are independent and $r_1,r_2 \sim \text{Rayleigh(1)}$.

Now it suffices to compute the probability of the event $\mathcal{E}$ defined by
\[
\mathcal{E} := \{ x \geq \tau' \}.
\]
For positive constants $\alpha,\beta$, we define another event $\mathcal{E}_0$ by
\[
\mathcal{E}_0 := \{ \alpha \leq r_1 \leq \beta, ~ \alpha \leq r_2 \leq \beta \}.
\]
For example, we may set $\alpha = 0.9$ and $\beta = 1.1$.  Then $\mathbb{P}(\mathcal{E}_0) \geq 0.12$.

Let $z_1,z_2,z_3$ be random variables defined by
\begin{align*}
z_1 := \mathrm{Re}(\va^* \mP_{\vu_\sharp^\perp} \mH \vv_\sharp), \quad
z_2 := \mathrm{Re}(\vu_\sharp^* \mH \mP_{\vv_\sharp^\perp} \vb), \quad \text{and} \quad
z_3 := \mathrm{Re}(\va^* \mP_{\vu_\sharp^\perp} \mH \mP_{\vv_\sharp^\perp} \vb).
\end{align*}
Since $\va \sim \mathcal{CN}(\vzero,\mId_{d_1})$, if $\mP_{\vu_\sharp^\perp} \mH \vv_\sharp \neq \vzero$, then it follows that $\va^* \mP_{\vu_\sharp^\perp} \mH \vv_\sharp \sim \mathcal{CN}(0,\norm{\mP_{\vu_\sharp^\perp} \mH \mP_{\vv_\sharp}}_\mathrm{F}^2)$ and its real part $z_1$ follows $\mathcal{N}(0,\norm{\mP_{\vu_\sharp^\perp} \mH \mP_{\vv_\sharp}}_\mathrm{F}^2/2)$.  Otherwise, $\mP_{\vu_\sharp^\perp} \mH \vv_\sharp = \vzero$ implies $z_1 = 0$.  Similarly, $z_2 \sim \mathcal{N}(0,\norm{\mP_{\vv_\sharp^\perp} \mH^* \mP_{\vu_\sharp}}_\mathrm{F}^2/2)$ if $\mP_{\vv_\sharp^\perp} \mH^* \vu_\sharp \neq \vzero$; $z_2 = 0$ otherwise.  By the independence between $\va$ and $\vb$, it follows that $z_1+z_2 \sim \mathcal{N}(0,\norm{\mP_{\vu_\sharp^\perp} \mH \mP_{\vv_\sharp} + \mP_{\vu_\sharp} \mH \mP_{\vv_\sharp^\perp}}_\mathrm{F}^2/2)$ if $\norm{\mP_{\vu_\sharp^\perp} \mH \mP_{\vv_\sharp}}_\mathrm{F}^2 + \norm{\mP_{\vu_\sharp} \mH \mP_{\vv_\sharp^\perp}}_\mathrm{F}^2 > 0$; $z_1 + z_2 = 0$ otherwise.  In both cases, $z_1 + z_2$ has a symmetric distribution, that is $z_1+z_2$ is equivalent to $-(z_1+z_2)$ in distribution.

Furthermore, we can rewrite $z_3$ as a Gaussian bilinear form, i.e.,
\[
z_3 = \tilde{\va}^\transpose \mQ \tilde{\vb}
\]
for
\begin{align*}
\mQ &= \frac{1}{2}
\begin{bmatrix}
\mathrm{Re}(\mP_{\vu_\sharp^\perp} \mH \mP_{\vv_\sharp^\perp}) & - \mathrm{Im}(\mP_{\vu_\sharp^\perp} \mH \mP_{\vv_\sharp^\perp}) \\
\mathrm{Im}(\mP_{\vu_\sharp^\perp} \mH \mP_{\vv_\sharp^\perp}) & \mathrm{Re}(\mP_{\vu_\sharp^\perp} \mH \mP_{\vv_\sharp^\perp})
\end{bmatrix},
\end{align*}
where
\begin{align*}
\tilde{\va} &=
\sqrt{2}
\begin{bmatrix}
\mathrm{Re}(\va) \\
\mathrm{Im}(\va)
\end{bmatrix}
\sim \mathcal{N}(\vzero,\mId_{2d_1}),
\qquad
\tilde{\vb} =
\sqrt{2}
\begin{bmatrix}
\mathrm{Re}(\vb) \\
\mathrm{Im}(\vb)
\end{bmatrix}
\sim \mathcal{N}(\vzero,\mId_{2d_2}).
\end{align*}
It follows that $z_3$ has a symmetric distribution.  Furthermore, since $z_3$ is a Gaussian bilinear form, it has a mixed subexponential-subgaussian tail given by
\begin{equation}
\label{eq:ubtailH-W}
\mathbb{P}(|z_3| \geq t)
\leq C \exp\left[
- \frac{1}{C}
\left(
\frac{t^2}{\norm{\mQ}_\mathrm{F}^2}
\wedge
\frac{t}{\norm{\mQ}}
\right)
\right], \quad \forall t > 0
\end{equation}
for a numerical constant $C$.  
Lata{\l}a \cite{latala2006estimates} showed that this tail bound is tight with an analogous lower bound given by
\begin{equation*}
%\label{eq:lbtailH-W}
\mathbb{P}(|z_3| \geq t)
\geq \frac{1}{C} \exp\left[
- C
\left(
\frac{t^2}{\norm{\mQ}_\mathrm{F}^2}
\wedge
\frac{t}{\norm{\mQ}}
\right)
\right], \quad \forall t > 0.
\end{equation*}
By direct calculation, we obtain
\[
\norm{\mQ}_\mathrm{F} = \frac{1}{\sqrt{2}} \, \norm{\mP_{\vu_\sharp^\perp} \mH \mP_{\vv_\sharp^\perp}}_\mathrm{F}
\]
and
\[
\norm{\mQ} = \frac{1}{2} \, \norm{\mP_{\vu_\sharp^\perp} \mH \mP_{\vv_\sharp^\perp}}.
\]

Now we are ready to derive a lower bound on the probability of the event $\mathcal{E}$ using the aforementioned properties $z_1,z_2,z_3$.  It follows from the definition of the conditional probability that
\begin{align}
\frac{\mathbb{P}(\mathcal{E})}{\mathbb{P}(\mathcal{E}_0)}
& \geq \frac{\mathbb{P}(\mathcal{E} \cap \mathcal{E}_0)}{\mathbb{P}(\mathcal{E}_0)}
= \mathbb{P}(\mathcal{E} | \mathcal{E}_0)
= \mathbb{P}\Big(
r_1 r_2 \, \mathrm{Re}(\vu_\sharp^* \mH \vv_\sharp)
+ r_2 z_1
+ r_1 z_2
+ z_3
\geq \frac{\tau'}{r_1 r_2} ~\Big|~ \mathcal{E}_0
\Big). \label{eq:lb_prob1}
\end{align}
As we choose $\alpha < \beta$ as numerical constants, $\mathbb{P}(\mathcal{E}_0)$ is another numerical constant.  It remains to show that the lower bound in \eqref{eq:lb_prob1} is larger than a numerical constant.  We consider the two complementary scenarios below.

\noindent\textbf{Case 1:} First, we consider the case when $\mH$ satisfies
\begin{equation}
\label{eq:smallperp}
\norm{\mP_{\vu_\sharp^\perp} \mH \mP_{\vv_\sharp^\perp}} \leq \zeta
\end{equation}
for some constant $0 < \zeta < 1$, which we will specify later.

Let $\tau'' > 0$. Then by the inclusion-exclusion principle, the right-hand-side of \eqref{eq:lb_prob1} is lower-bounded by
\begin{align}
&\mathbb{P}\Big(
r_1 r_2 \, \mathrm{Re}(\vu_\sharp^* \mH \vv_\sharp)
+ r_2 z_1
+ r_1 z_2
+ z_3
\geq \frac{\tau'}{r_1 r_2} \,|\, \mathcal{E}_0
\Big) \nonumber \\
&\geq
\mathbb{P}\Big(
r_1 r_2 \, \mathrm{Re}(\vu_\sharp^* \mH \vv_\sharp)
+ r_2 z_1
+ r_1 z_2
\geq \frac{\tau'+\tau''}{r_1 r_2} \,|\, \mathcal{E}_0
\Big)
-
\mathbb{P}\Big(
z_3
< - \frac{\tau''}{r_1 r_2} \,|\, \mathcal{E}_0
\Big). \label{eq:lbp_case1_eq0}
\end{align}
In the sequel we will use the fact that for then the tail probability is
\begin{align}
g\sim\mathcal{N}(0,\varsigma^2) & \implies \mathbb{P}\left(g > t \right)\ \text{is increasing in}\ \varsigma\ \text{and decreasing in}\ t\geq 0\,. \label{eq:Gaussian-fact}
\end{align}
The following cases on the sign of $\mathrm{Re}(\vu_\sharp^* \mH \vv_\sharp)$ have to be distinguished. First suppose that $\mathrm{Re}(\vu_\sharp^* \mH \vv_\sharp) \leq 0$. Conditioned on $r_1$ and $r_2$, the random variable $r_2 z_1 + r_1 z_2$ becomes a Gaussian and invoking \eqref{eq:Gaussian-fact} yields
\begin{align*}
& \mathbb{P}\left(r_2 z_1 + r_1 z_2
\geq \frac{\tau'+\tau''}{r_1 r_2} - r_1 r_2 \, \mathrm{Re}(\vu_\sharp^* \mH \vv_\sharp) \,|\, \mathcal{E}_0,r_1,r_2
\right) \\
&\ge \mathbb{P}\left( \alpha (z_1 + z_2)
\geq \frac{\tau'+\tau''}{\alpha^2} - \beta^2 \, \mathrm{Re}(\vu_\sharp^* \mH \vv_\sharp) \,|\, \mathcal{E}_0,r_1,r_2\right)\,.
\end{align*}
Since the right-hand side of the above inequality is independent of $r_1$ and $r_2$, we can conclude that
\[\mathbb{P}\left(r_2 z_1 + r_1 z_2
\geq \frac{\tau'+\tau''}{r_1 r_2} - r_1 r_2 \, \mathrm{Re}(\vu_\sharp^* \mH \vv_\sharp) \,|\, \mathcal{E}_0\right) \ge \mathbb{P}\left( \alpha (z_1 + z_2)
\geq \frac{\tau'+\tau''}{\alpha^2} - \beta^2 \, \mathrm{Re}(\vu_\sharp^* \mH \vv_\sharp) \right)\,. \]
Furthermore, $z_3$ is symmetric and we obtain an upper estimate of the tail probability of $z_3$ in \eqref{eq:lbp_case1_eq0} given by
\[
\mathbb{P}\Big(
z_3
< - \frac{\tau''}{r_1 r_2} \,|\, \mathcal{E}_0
\Big)
\leq
\mathbb{P}\Big(
z_3
< - \frac{\tau''}{\beta^2}
\Big)
=
\frac{1}{2} \,
\mathbb{P}\Big(
|z_3|
\geq \frac{\tau''}{\beta^2}
\Big).
\]

By combining the above bounds, the lower estimate in \eqref{eq:lbp_case1_eq0} is further bounded from below by
\begin{align*}
&
\mathbb{P}\Big(
r_1 r_2 \, \mathrm{Re}(\vu_\sharp^* \mH \vv_\sharp)
+ r_2 z_1
+ r_1 z_2
\geq \frac{\tau'+\tau''}{r_1 r_2} \,|\, \mathcal{E}_0
\Big)
-
\mathbb{P}\Big(
z_3
< - \frac{\tau''}{r_1 r_2} \,|\, \mathcal{E}_0
\Big) \nonumber \\
&\geq
\mathbb{P}\Big(
z_1 + z_2
\geq \frac{\tau'+\tau''}{\alpha^3}
- \frac{\beta^2 \, \mathrm{Re}(\vu_\sharp^* \mH \vv_\sharp)}{\alpha}
\Big)
-
\frac{1}{2} \,
\mathbb{P}\Big(
|z_3|
\geq \frac{\tau''}{\beta^2}
\Big). %\label{eq:lbp_case1_eq0negative}
\end{align*}
Because $z_1+z_2 \sim \mathcal{N}(0,\norm{\mP_{\vu_\sharp^\perp} \mH \mP_{\vv_\sharp} + \mP_{\vu_\sharp} \mH \mP_{\vv_\sharp^\perp}}_\mathrm{F}^2/2)$, in order to lower-bound the tail probability of $z_1+z_2$, we need to compute a lower estimate of its variance.
It follows from \eqref{eq:A_delta2} that every $\mH \in \A_\delta$ satisfies
\begin{equation}
\label{eq:bndAdelta2}
\norm{\mathcal{P}_{T^\perp}(\mH)}_* \leq \frac{1-\lambda+\delta}{\lambda}.
\end{equation}
By \eqref{eq:smallperp} and \eqref{eq:bndAdelta2} together with H\"older's inequality, we also obtain
\begin{equation}
\label{eq:smallperpH}
\norm{\mP_{\vu_\sharp^\perp} \mH \mP_{\vv_\sharp^\perp}}_\mathrm{F}^2
\leq \norm{\mP_{\vu_\sharp^\perp} \mH \mP_{\vv_\sharp^\perp}} \cdot
\norm{\mP_{\vu_\sharp^\perp} \mH \mP_{\vv_\sharp^\perp}}_*
\leq \frac{(1 - \lambda + \delta)\zeta}{\lambda}.
\end{equation}
Furthermore, by Lemma~\ref{lemma:imagpart}, every $\mH \in \mathcal{R}_\delta$ satisfies
\begin{align*}
\frac{1-\delta^2}{\delta^2} \cdot |\mathrm{Im}(\vu_\sharp^* \mH \vv_\sharp)|^2
&\leq
\norm{\mH - \mP_{\vu_\sharp} \mH \mP_{\vv_\sharp}}_\mathrm{F}^2.
\end{align*}
Then it follows that
\begin{equation}
\label{eq:lbp_case1_eq1}
\begin{aligned}
1 = \norm{\mH}_\mathrm{F}^2
&= \norm{\mP_{\vu_\sharp} \mH \mP_{\vv_\sharp^\perp} + \mP_{\vu_\sharp^\perp} \mH \mP_{\vv_\sharp} + \mP_{\vu_\sharp^\perp} \mH \mP_{\vv_\sharp^\perp}}_\mathrm{F}^2
+ |\mathrm{Im}(\vu_\sharp^* \mH \vv_\sharp)|^2
+ |\mathrm{Re}(\vu_\sharp^* \mH \vv_\sharp)|^2 \\
&\leq \frac{\norm{\mP_{\vu_\sharp} \mH \mP_{\vv_\sharp^\perp} + \mP_{\vu_\sharp^\perp} \mH \mP_{\vv_\sharp} + \mP_{\vu_\sharp^\perp} \mH \mP_{\vv_\sharp^\perp}}_\mathrm{F}^2}{1-\delta^2}
+ |\mathrm{Re}(\vu_\sharp^* \mH \vv_\sharp)|^2.
\end{aligned}
\end{equation}
It also follows from \eqref{eq:A_delta2} that every $\mH \in \mathcal{A}_\delta$ satisfies
\begin{equation}
\label{eq:bndAdelta1}
\mathrm{Re}(\vu_\sharp^* \mH \vv_\sharp) \geq \frac{-\delta}{1-\lambda}.
\end{equation}
The assumption $\lambda + \delta < 1$ implies that the right-hand side of \eqref{eq:bndAdelta1} is strictly larger than $-1$.

By \eqref{eq:bndAdelta1} and $\mathrm{Re}(\vu_\sharp^* \mH \vv_\sharp) \leq 0$, we also have
\begin{equation}
\label{eq:lbp_case1_eq2}
|\mathrm{Re}(\vu_\sharp^* \mH \vv_\sharp)| \leq \frac{\delta}{1-\lambda}.
\end{equation}
Therefore, by applying \eqref{eq:lbp_case1_eq2} to \eqref{eq:lbp_case1_eq1}, after a rearrangement, we obtain
\[
\norm{\mP_{\vu_\sharp} \mH \mP_{\vv_\sharp^\perp} + \mP_{\vu_\sharp^\perp} \mH \mP_{\vv_\sharp} + \mP_{\vu_\sharp^\perp} \mH \mP_{\vv_\sharp^\perp}}_\mathrm{F}^2
\geq (1-\delta^2) \Big( 1 - \frac{\delta^2}{(1-\lambda)^2} \Big).
\]
Then \eqref{eq:smallperpH} implies
\begin{equation}
\label{eq:lbp_case1_eq3}
\norm{\mP_{\vu_\sharp} \mH \mP_{\vv_\sharp^\perp} + \mP_{\vu_\sharp^\perp} \mH \mP_{\vv_\sharp} }_\mathrm{F}^2
\geq (1-\delta^2) \Big( 1 - \frac{\delta^2}{(1-\lambda)^2} \Big) - \frac{(1-\lambda+\delta)\zeta}{\lambda}.
\end{equation}
Now, from \eqref{eq:lbp_case1_eq2} and \eqref{eq:lbp_case1_eq3}, we obtain
\begin{align}
{} & \mathbb{P}\Big(
z_1 + z_2
\geq \frac{\tau'+\tau''}{\alpha^3}
- \frac{\beta^2 \, \mathrm{Re}(\vu_\sharp^* \mH \vv_\sharp)}{\alpha}
\Big)
%\nonumber \\
%{} &
\geq
\mathbb{P}\Big(
z_1 + z_2
\geq \frac{\tau'+\tau''}{\alpha^3} + \frac{\beta^2 \delta}{\alpha (1-\lambda)}
\Big)
%\nonumber \\
%{} &
\geq
\mathbb{P}\Big(
g
\geq \frac{t}{\sigma_\zeta}
\Big) \label{eq:lbp_case1_eq5}
\end{align}
for $g \sim \mathcal{N}(0,1)$, where
\begin{equation*}
%\label{eq:lbp_case1_eq6}
\sigma_\zeta = \sqrt{(1-\delta^2) \Big( 1 - \frac{\delta^2}{(1-\lambda)^2} \Big) - \frac{(1-\lambda+\delta)\zeta}{\lambda}}
\end{equation*}
and
\begin{equation}
\label{eq:lbp_case1_eq7}
t = \frac{\tau'+\tau''}{\alpha^3} + \frac{\beta^2 \delta}{\alpha(1-\lambda)}.
\end{equation}
Moreover, the tail bound of $z_3$ in \eqref{eq:ubtailH-W} implies
\begin{align}
\mathbb{P}\Big(|z_3| \geq \frac{\tau''}{\beta^2}\Big)
&\leq C \exp\Big(
- \frac{2}{C}
\Big[
\frac{\tau''^2/\beta^4}{\norm{\mP_{\vu_\sharp^\perp} \mH \mP_{\vv_\sharp^\perp}}_\mathrm{F}^2}
\wedge
\frac{\tau''/\beta^2}{\norm{\mP_{\vu_\sharp^\perp} \mH \mP_{\vv_\sharp^\perp}}}
\Big]
\Big) \nonumber \\
& \leq C \exp\Big(
- \frac{2}{C}
\Big[
\frac{\lambda \tau''^2}{\beta^4 (1-\lambda+\delta)\zeta}
\wedge
\frac{\tau''}{\beta^2 \zeta}
\Big]
\Big). \label{eq:lbp_case1_eq4}
\end{align}

Note that the tail bound in \eqref{eq:lbp_case1_eq5} is monotone decreasing in $t/\sigma_\zeta$.  Furthermore, for those $\zeta$ that make $\sigma_\zeta$ positive, $t/\sigma_\zeta$ is a monotone increasing in $\zeta$.  (The condition $\delta \leq 0.2$ implies the existence of such $\zeta$.)  Hence the tail bound in \eqref{eq:lbp_case1_eq5} is monotone decreasing in $\zeta$.  On the contrary, the upper bound in \eqref{eq:lbp_case1_eq4} monotonically converges to 0 as $\zeta > 0$ decreases toward 0.  Therefore, there exists small enough $\zeta$ such that the upper bound in \eqref{eq:lbp_case1_eq4} becomes less than half of \eqref{eq:lbp_case1_eq5}.  Then the lower bound \eqref{eq:lb_prob1} is further bounded from below by the half of \eqref{eq:lbp_case1_eq5}.  Note that $\zeta$ is determined independent from all dimension parameters and hence both $\zeta$ and the resulting lower bound for the probability in \eqref{eq:lb_prob1} are numerical constants.

Next we consider the complimentary subcase where $\mathrm{Re}(\vu_\sharp^* \mH \vv_\sharp) > 0$.  Similarly to the previous subcase, since $z_3$ has a symmetric distribution, it follows that
\begin{equation*}
%\label{eq:lbp_case1_eq0positive}
\begin{aligned}
&
\mathbb{P}\Big(
r_1 r_2 \, \mathrm{Re}(\vu_\sharp^* \mH \vv_\sharp)
+ r_2 z_1
+ r_1 z_2
\geq \frac{\tau'+\tau''}{r_1 r_2} \,|\, \mathcal{E}_0
\Big)
-
\mathbb{P}\Big(
z_3
< - \frac{\tau''}{r_1 r_2} \,|\, \mathcal{E}_0
\Big)
\\
& \quad \geq
\mathbb{P}\Big(
\alpha^2 \, \mathrm{Re}(\vu_\sharp^* \mH \vv_\sharp)
+ r_2 z_1
+ r_1 z_2
\geq \frac{\tau'+\tau''}{\alpha^2} \,|\, \mathcal{E}_0
\Big)
-
\frac{1}{2}
\mathbb{P}\Big(
|z_3|
\geq \frac{\tau''}{\beta^2}
\Big).
\end{aligned}
\end{equation*}

If $\mathrm{Re}(\vu_\sharp^* \mH \vv_\sharp) \leq \alpha^{-4} (\tau'+\tau'')$, since $z_1+z_2$ is a zero-mean Gaussian variable, then it follows that
\begin{align*}
\mathbb{P}\Big(
\alpha^2 \, \mathrm{Re}(\vu_\sharp^* \mH \vv_\sharp)
+ r_2 z_1
+ r_1 z_2
\geq \frac{\tau'+\tau''}{\alpha^2} \,|\, \mathcal{E}_0
\Big)
\geq
\mathbb{P}\Big(
z_1 + z_2
\geq \frac{\tau'+\tau''}{\alpha^3}
- \alpha \, \mathrm{Re}(\vu_\sharp^* \mH \vv_\sharp)
\Big).
\end{align*}
Thus by choosing $\tau'+\tau''$ small enough one can satisfy \eqref{eq:lbp_case1_eq2}.  Thus we obtain the desired conclusion as in the previous subcase by repeating the same arguments.

If $\mathrm{Re}(\vu_\sharp^* \mH \vv_\sharp) > \alpha^{-4} (\tau'+\tau'')$  on the other hand, then
\begin{align*}
&
\mathbb{P}\Big(
\alpha^2 \, \mathrm{Re}(\vu_\sharp^* \mH \vv_\sharp)
+ r_2 z_1
+ r_1 z_2
\geq \frac{\tau'+\tau''}{\alpha^2} \,|\, \mathcal{E}_0
\Big)
\geq
\mathbb{P}\Big(
z_1 + z_2
\geq
\underbrace{\frac{\tau'+\tau''}{\alpha^2 \beta}
- \frac{\alpha^2 \, \mathrm{Re}(\vu_\sharp^* \mH \vv_\sharp)}{\beta}}_{< 0}
\Big)
> \frac{1}{2},
\end{align*}
which is larger than the other lower bounds on the tail probability.

\noindent\textbf{Case 2:} Next we consider the complementary case where
\begin{equation}
\label{eq:largeperpH}
\norm{\mP_{\vu_\sharp^\perp} \mH \mP_{\vv_\sharp^\perp}}_\mathrm{F} > \zeta,
\end{equation}
where $\zeta$ is the constant determined in the previous case. In this case, the lower estimate in \eqref{eq:lb_prob1} is further bounded from below by
\begin{equation*}
%\label{eq:lbp_case2_eq0}
\begin{aligned}
&\mathbb{P}\Big(
r_1 r_2 \, \mathrm{Re}(\vu_\sharp^* \mH \vv_\sharp)
+ r_2 z_1
+ r_1 z_2
+ z_3
\geq \frac{\tau'}{r_1 r_2} \,|\, \mathcal{E}_0
\Big) \\
& \quad \geq \mathbb{P}\Big(
- \beta^2 [\mathrm{Re}(\vu_\sharp^* \mH \vv_\sharp)]_-
+ r_2 z_1
+ r_1 z_2
+ z_3
\geq \frac{\tau'}{r_1 r_2} \,|\, \mathcal{E}_0
\Big) \\
& \quad \geq \mathbb{P}\Big(
r_2 z_1
+ r_1 z_2
+ z_3
\geq
\frac{\tau'}{r_1 r_2} + \beta^2 (1-\zeta)
\,|\, \mathcal{E}_0
\Big) \\
& \quad \geq \mathbb{P}\Big(
z_3
\geq
\frac{\tau'+\tau''}{r_1 r_2} + \beta^2 (1-\zeta)
\,|\, \mathcal{E}_0
\Big)
+ \mathbb{P}\Big(
r_2 z_1
+ r_1 z_2
\geq
- \frac{\tau''}{r_1 r_2}
\,|\, \mathcal{E}_0
\Big) - 1 \\
& \quad \geq \mathbb{P}\Big(
z_3
\geq
\frac{\tau'+\tau''}{\alpha^2} + \beta^2 (1-\zeta)
\Big)
- \mathbb{P}\Big(
z_1 + z_2 \geq \frac{\tau''}{\beta^3}
\Big) \\
& \quad \geq \frac{1}{2} \, \mathbb{P}\Big(
|z_3|
\geq
\frac{\tau'+\tau''}{\alpha^2} + \beta^2 (1-\zeta)
\Big)
- \mathbb{P}\Big(
z_1
+ z_2
\geq
\frac{\tau''}{\beta^3}
\Big),
\end{aligned}
\end{equation*}
where the second and third steps follow from \eqref{eq:lbp_case1_eq2} and the inclusion-exclusion principle, respectively.
Then, by \eqref{eq:largeperpH}, the tail bound on $z_3$ is lower-bounded by
\begin{equation}
\label{eq:lbp_case2_eq1}
\begin{aligned}
\mathbb{P}(|z_3| \geq t)
&\geq \frac{1}{C} \exp\Big(
- 2C
\Big[
%\frac{t^2}{\norm{\mP_{\vu_\sharp^\perp} \mH \mP_{\vv_\sharp^\perp}}_\mathrm{F}^2}
\frac{t^2}{\zeta^2}
\wedge
%\frac{t}{\norm{\mP_{\vu_\sharp^\perp} \mH \mP_{\vv_\sharp^\perp}}}
\frac{t}{\zeta}
\Big]
\Big),
\end{aligned}
\end{equation}
where $t$ is given in \eqref{eq:lbp_case1_eq7}.
Since $\norm{\mH}_\mathrm{F} = 1$, the variance of $z_1+z_2$ is no larger than $1/2$.
Thus the tail bound of $z_1 + z_2$ is upper-bounded by
\begin{equation}
\label{eq:lbp_case2_eq2}
\mathbb{P}\Big(
z_1 + z_2 \geq \frac{\tau''}{\beta^3}
\Big)
\leq
\exp\Big(- \frac{2\tau''^2}{\beta^6} \Big).
\end{equation}

Note that $\tau''$ still remains a free parameter.  For every $t > \zeta$ the lower bound in \eqref{eq:lbp_case2_eq1} is an exponential tail while the upper bound in \eqref{eq:lbp_case2_eq2} is a subgaussian tail.  Therefore, as $\tau''$ increases while the other parameters are fixed, by \eqref{eq:lbp_case1_eq7}, $t$ also increases as an affine function of $\tau''$ and the lower bound in \eqref{eq:lbp_case2_eq1} decays slower than the upper bound in \eqref{eq:lbp_case2_eq2}.  We may choose $\tau''$ so that the lower bound in \eqref{eq:lbp_case2_eq1} is larger then four times the upper bound in \eqref{eq:lbp_case2_eq2}.  Then the lower bound \eqref{eq:lb_prob1} is further bounded below by the resulting value of \eqref{eq:lbp_case2_eq2}.  Again, this lower bound is a numerical constant independent of scaling of all dimension parameters.
%\end{proof} %[Proof of Lemma~\ref{lemma:lbp_rank1}]

\section{Proof of Lemma~\ref{lemma:ubrc_rank1}}
%\begin{proof}[Proof of Lemma~\ref{lemma:ubrc_rank1}]
\label{sec:proof:lemma:ubrc_rank1}

Without loss of generality, we may assume that $\norm{\mH}_\mathrm{F} = \norm{\mX_\sharp}_\mathrm{F} = 1$. Then $\mX_\sharp$ is written as $\vu_\sharp \vv_\sharp^*$ where $\vu_\sharp \in \mathbb{C}^{d_1}$ and $\vv_\sharp \in \mathbb{C}^{d_2}$ satisfy $\norm{\vu_\sharp}_2 = \norm{\vv_\sharp}_2 = 1$.  With this expression of $\mX_\sharp$, the Rademacher complexity $\mathfrak{C}_M(\mathcal{A}_\delta)$ is written as
\begin{align*}
\mathfrak{C}_M(\mathcal{A}_\delta)
&=
\E
\sup_{\mH \in \mathcal{A}_\delta} \frac{1}{\sqrt{M}}
\sum_{m=1}^M \epsilon_m
\mathrm{Re}(\vb_m^* \vv_\sharp \vu_\sharp^* \va_m \va_m^* \mH \vb_m) \\
&=
\E
\sup_{\mH \in \mathcal{A}_\delta} \frac{1}{\sqrt{M}}
\sum_{m=1}^M
\epsilon_m
\mathrm{Re}\,\langle \va_m \va_m^* \vu_\sharp \vv_\sharp^* \vb_m \vb_m^*, \mH\rangle \\
&\leq \E
\sup_{\mH \in \mathcal{A}_\delta} \frac{1}{\sqrt{M}}
\sum_{m=1}^M \epsilon_m
\mathrm{Re}\,\langle \mathcal{P}_T(\va_m \va_m^* \vu_\sharp \vv_\sharp^* \vb_m \vb_m^*), \mathcal{P}_T(\mH)\rangle \\
&+ \E
\sup_{\mH \in \mathcal{A}_\delta} \frac{1}{\sqrt{M}}
\sum_{m=1}^M \epsilon_m
\mathrm{Re}\,\langle \mathcal{P}_{T^\perp}(\va_m \va_m^* \vu_\sharp \vv_\sharp^* \vb_m \vb_m^*), \mathcal{P}_{T^\perp}(\mH)\rangle \\
&\leq \E
\Big\|
\frac{1}{\sqrt{M}}
\sum_{m=1}^M \epsilon_m
\mathcal{P}_T(\va_m \va_m^* \vu_\sharp \vv_\sharp^* \vb_m \vb_m^*)
\Big\|_\mathrm{F}
\cdot \sup_{\mH \in \mathcal{A}_\delta} \norm{\mathcal{P}_T(\mH)}_\mathrm{F} \\
&+ \E
\Big\|
\frac{1}{\sqrt{M}}
\sum_{m=1}^M \epsilon_m
\mathcal{P}_{T^\perp}(\va_m \va_m^* \vu_\sharp \vv_\sharp^* \vb_m \vb_m^*)
\Big\|
\cdot \sup_{\mH \in \mathcal{A}_\delta} \norm{\mathcal{P}_{T^\perp}(\mH)}_*,
\end{align*}
where the first inequality is obtained by taking the supremum of each summand after applying $\mH = \mathcal{P}_T(\mH) + \mathcal{P}_{T^\perp}(\mH)$ and the second inequality holds by H\"older's inequality.

Since $\mathcal{P}_T$ is an orthogonal projection onto a subspace, we have $\norm{\mathcal{P}_T(\mH)}_\mathrm{F} \leq \norm{\mH}_\mathrm{F} = 1$.  Furthermore, for all $\mH \in \mathcal{A}_\delta$, $\norm{\mathcal{P}_{T^\perp}(\mH)}_*$ is upper-bounded by \eqref{eq:bndAdelta2}.
Therefore, we obtain
\begin{equation}
\label{eq:ubrc_eq1}
\begin{aligned}
\mathfrak{C}_M(\mathcal{A}_\delta)
&\leq \E
\Big\|
\frac{1}{\sqrt{M}}
\sum_{m=1}^M \epsilon_m
\mathcal{P}_T(\va_m \va_m^* \vu_\sharp \vv_\sharp^* \vb_m \vb_m^*)
\Big\|_\mathrm{F} \\
&+ \E
\Big\|
\frac{1}{\sqrt{M}}
\sum_{m=1}^M \epsilon_m
\mathcal{P}_{T^\perp}(\va_m \va_m^* \vu_\sharp \vv_\sharp^* \vb_m \vb_m^*)
\Big\|
\cdot \Big(
\frac{1-\lambda+\delta}{\lambda}
\Big).
\end{aligned}
\end{equation}

It remains to compute upper estimates of the expectation terms in \eqref{eq:ubrc_eq1}.  Since $(\epsilon_m)_{m=1}^M$ is a Rademacher sequence, we have
\begin{align*}
& \E \, \Big\|
\frac{1}{\sqrt{M}}
\sum_{m=1}^M \epsilon_m
\mathcal{P}_T(\va_m \va_m^* \vu_\sharp \vv_\sharp^* \vb_m \vb_m^*)
\Big\|_\mathrm{F}
%\\
%& \quad
\leq
\sqrt{
\E \, \Big\|
\frac{1}{\sqrt{M}}
\sum_{m=1}^M \epsilon_m
\mathcal{P}_T(\va_m \va_m^* \vu_\sharp \vv_\sharp^* \vb_m \vb_m^*)
\Big\|_\mathrm{F}^2 } \\
& \quad =
\sqrt{
\E \, \frac{1}{M}
\sum_{m=1}^M \norm{\mathcal{P}_T(\va_m \va_m^* \vu_\sharp \vv_\sharp^* \vb_m \vb_m^*)}_\mathrm{F}^2
}
%\\
%& \quad
=
\sqrt{
\E
\norm{\mathcal{P}_T(\va \va^* \vu_\sharp \vv_\sharp^* \vb \vb^*)}_\mathrm{F}^2
},
\end{align*}
where the first step follows from Jensen's inequality and the last step follows since $\va_1,\dots\va_M$ (resp. $\vb_1,\dots,\vb_M$) are independent copies of $\va$ (resp. $\vb$). 

Note that $\mathcal{P}_T(\va \va^* \vu_\sharp \vv_\sharp^* \vb \vb^*)$ is written as
\begin{align*}
\mathcal{P}_T(\va \va^* \vu_\sharp \vv_\sharp^* \vb \vb^*)
&= \va^* \vu_\sharp \vv_\sharp^* \vb \cdot \mathcal{P}_T(\va \vb^*) \\
&= \va^* \vu_\sharp \vv_\sharp^* \vb \cdot
(
\mP_{\vu_\sharp} \va \vb^* \mP_{\vv_\sharp}
+ \mP_{\vu_\sharp^\perp} \va \vb^* \mP_{\vv_\sharp}
+ \mP_{\vu_\sharp} \va \vb^* \mP_{\vv_\sharp^\perp}
),
\end{align*}
where $\mP_{\vu_\sharp} \va \vb^* \mP_{\vv_\sharp}$, $\mP_{\vu_\sharp^\perp} \va \vb^* \mP_{\vv_\sharp}$, and $\mP_{\vu_\sharp} \va \vb^* \mP_{\vv_\sharp^\perp}$ are mutually orthogonal matrices in the Hilbert space $S_2$.  Thus the Pythagorean identity implies
\begin{align*}
\norm{\mathcal{P}_T(\va \va^* \vu_\sharp \vv_\sharp^* \vb \vb^*)}_\mathrm{F}^2
&= |\va^* \vu_\sharp|^2 |\vb^* \vv_\sharp|^2 (
\norm{\mP_{\vu_\sharp} \va \vb^* \mP_{\vv_\sharp}}_\mathrm{F}^2
+ \norm{\mP_{\vu_\sharp^\perp} \va \vb^* \mP_{\vv_\sharp}}_\mathrm{F}^2
+ \norm{\mP_{\vu_\sharp} \va \vb^* \mP_{\vv_\sharp^\perp}}_\mathrm{F}^2
) \\
&= |\va^* \vu_\sharp|^4 |\vb^* \vv_\sharp|^4
+ |\va^* \vu_\sharp|^2 |\vb^* \vv_\sharp|^4 \norm{\mP_{\vu_\sharp^\perp} \va}_2^2
+ |\va^* \vu_\sharp|^4 |\vb^* \vv_\sharp|^2 \norm{\mP_{\vv_\sharp^\perp} \vb}_2^2.
\end{align*}
Since $\va \sim \mathcal{CN}(\vzero,\mId_{d_1})$ and $\vb \sim \mathcal{CN}(\vzero,\mId_{d_2})$ are independent, $\va^* \vu_\sharp$, $\vb^* \vv_\sharp$, $\mP_{\vu_\sharp^\perp} \va$, and $\mP_{\vv_\sharp^\perp} \vb$ are all mutually independent.  Therefore, exploiting this independence, one can show that the expectation is upper-bounded by
\begin{align*}
\E \norm{\mathcal{P}_T(\va \va^* \vu_\sharp \vv_\sharp^* \vb \vb^*)}_\mathrm{F}^2
& \leq 2 \norm{\vu_\sharp}_2^2 \norm{\vv_\sharp}_2^2 (2 + d_1 + d_2).
\end{align*}
By Jensen's inequality, the second expectation in \eqref{eq:ubrc_eq1} is upper-bounded by
\begin{equation}
\label{eq:ubrc_eq2}
\E
\Big\|
\frac{1}{\sqrt{M}}
\sum_{m=1}^M \epsilon_m
\mathcal{P}_{T^\perp}(\va_m \va_m^* \vu_\sharp \vv_\sharp^* \vb_m \vb_m^*)
\Big\|
\leq
\Big(
\E
\Big\|
\frac{1}{\sqrt{M}}
\sum_{m=1}^M \epsilon_m
\mathcal{P}_{T^\perp}(\va_m \va_m^* \vu_\sharp \vv_\sharp^* \vb_m \vb_m^*)
\Big\|^p
\Big)^{1/p}
\end{equation}
for all $p \in 2\mathbb{N}$.
To upper bound the right-hand side of \eqref{eq:ubrc_eq2}, we apply Theorem~\ref{thm:ncrosenthal} for
\[
\mY_m = \epsilon_m \mathcal{P}_{T^\perp}(\va_m \va_m^* \vu_\sharp \vv_\sharp^* \vb_m \vb_m^*)
= \epsilon_m \va_m^* \vu_\sharp \vv_\sharp^* \vb_m \mP_{\vu_\sharp^\perp} \va_m \vb_m^* \mP_{\vv_\sharp^\perp}, \quad m=1,\dots,M,
\]
with some $p \in \mathbb{N}$ that satisfies $p \geq 2$.  Note that $\E \mY_m = \vzero$ for all $m=1,\dots,M$.
By direct computation, we obtain
\[
\E \mY_m \mY_m^* = \norm{\vu_\sharp}_2^2 \norm{\vv_\sharp}_2^2 \mathrm{tr}(\mP_{\vv_\sharp^\perp}) \mP_{\vu_\sharp^\perp}
\quad \text{and} \quad
\E \mY_m^* \mY_m = \norm{\vu_\sharp}_2^2 \norm{\vv_\sharp}_2^2 \mathrm{tr}(\mP_{\vu_\sharp^\perp}) \mP_{\vv_\sharp^\perp},
\quad m=1,\dots,M.
\]
Therefore, 
\[
\Big\| \sum_{m=1}^M \E \mY_m \mY_m^* \Big\|^{1/2}
\vee
\Big\| \sum_{m=1}^M \E \mY_m^* \mY_m \Big\|^{1/2}
\leq \norm{\vu_\sharp}_2 \norm{\vv_\sharp}_2 \sqrt{M(d_1+d_2)}.
\]
Since the spectral norm of $\mY_m$ is upper-bounded by
\[
\norm{\mY_m} = |\va_m^* \vu_\sharp| |\vb_m^* \vv_\sharp| \norm{\mP_{\vu_\sharp^\perp} \va_m}_2 \norm{\mP_{\vv_\sharp^\perp} \vb_m}_2
\leq 2 |\va_m^* \vu_\sharp| |\vb_m^* \vv_\sharp| (\norm{\mP_{\vu_\sharp^\perp} \va_m}_2^2 + \norm{\mP_{\vv_\sharp^\perp} \vb_m}_2^2),
\]
it follows that
\begin{align*}
(\E \norm{\mY_m}^p)^{1/p}
&\leq 2 (\E |\va_m^* \vu_\sharp|^p)^{1/p} \cdot (\E |\vb_m^* \vv_\sharp|^2)^{1/p} \cdot [\E (\norm{\mP_{\vu_\sharp^\perp} \va_m}_2^2 + \norm{\mP_{\vv_\sharp^\perp} \vb_m}_2^2)^p]^{1/p} \\
&\leq 2 (\E |\va_m^* \vu_\sharp|^p)^{1/p} \cdot (\E |\vb_m^* \vv_\sharp|^2)^{1/p} \cdot [\E (\norm{\va_m}_2^2 + \norm{\vb_m}_2^2)^p]^{1/p}.
\end{align*}
Since $\va_m^* \vu_\sharp \sim \mathcal{CN}(0,1)$ and $\vb_m^* \vv_\sharp \sim \mathcal{CN}(0,1)$, we have
\[
(\E |\va_m^* \vu_\sharp|^p)^{1/p} = (\E |\vb_m^* \vv_\sharp|^p)^{1/p} \leq C_1 \sqrt{p}.
\]
for a numerical constant $C_1$.
Since $2(\norm{\va_m}_2^2 + \norm{\vb_m}_2^2)$ is a chi-square random variable of the degree-of-freedom $2(d_1+d_2)$, it follows that for $p \geq 2$ we have
\[
(\E (\norm{\va_m}_2^2 + \norm{\vb_m}_2^2)^p)^{1/p}
\leq 2 d_1 + 2 d_2 + C_2 p
\]
for a numerical constant $C_2$.
By collecting these estimates, we obtain
\[
p \Big( \sum_{m=1}^M \E \norm{\mY_m}^p \Big)^{1/p}
\leq C_3 M^{1/p} p^2(d_1+d_2+p).
\]

Applying the above estimates to Theorem~\ref{thm:ncrosenthal} together with \eqref{eq:ubrc_eq2} provides
\begin{equation}
\label{eq:mbnd:ubrc_rank1}
\begin{aligned}
\E \, \Big\|
\frac{1}{\sqrt{M}}
\sum_{m=1}^M \epsilon_m
\mathcal{P}_{T^\perp}(\va_m \va_m^* \vu_\sharp \vv_\sharp^* \vb_m \vb_m^*)
\Big\|
\leq C_4 \left(\sqrt{p(d_1 + d_2)} + M^{1/p-1/2} p^2(d_1+d_2+p)\right).
\end{aligned}
\end{equation}
As we set $p = \log M$, \eqref{eq:mbnd:ubrc_rank1} implies
\begin{align*}
\E \, \Big\|
\frac{1}{\sqrt{M}}
\sum_{m=1}^M \epsilon_m
\mathcal{P}_{T^\perp}(\va_m \va_m^* \vu_\sharp \vv_\sharp^* \vb_m \vb_m^*)
\Big\|
\leq C_5 \sqrt{(d_1 + d_2) \log M} \, \cdot \left( 1 + \frac{\sqrt{d_1+d_2} \, \log^{3/2} M}{\sqrt{M}} \right).
\end{align*}
Then \eqref{eq:sampl_comp_rank1} implies that the right-hand side is further upper bounded by $C_6 \sqrt{d_1+d_2} \log M$.
Finally, \eqref{eq:res:lemma:ubrc_rank1} is obtained by plugging in these upper estimates to \eqref{eq:ubrc_eq1}.

%\end{proof} %[Proof of Lemma~\ref{lemma:ubrc_rank1}]

%\bibliographystyle{IEEEtran}
%\bibliographystyle{imaiai}
%\bibliography{IEEEabrv,myrefs,jrom-refs}

\ifx\undefined\BySame
\newcommand{\BySame}{\leavevmode\rule[.5ex]{3em}{.5pt}\ }
\fi
\ifx\undefined\textsc
\newcommand{\textsc}[1]{{\sc #1}}
\newcommand{\emph}[1]{{\em #1\/}}
\let\tmpsmall\small
\renewcommand{\small}{\tmpsmall\sc}
\fi

\end{document}